\documentclass[10pt]{article}
\usepackage{amsmath}    % need for subequations
\usepackage{cases}
\usepackage{amsfonts}
\usepackage{amssymb}  % needed for mathbb  OK
\usepackage{needspace}
\usepackage{bigints}
\usepackage{wrapfig}
\usepackage{graphicx}   % need for figures
\usepackage[table]{xcolor}
\usepackage{subfig}
\usepackage{verbatim}   % useful for program listings
\usepackage{color}      % use if color is used in text
\usepackage{parskip}
\usepackage{float}
\usepackage{courier}
\usepackage{exercise}
\usepackage{sistyle}
\usepackage{booktabs}
\usepackage{changepage}
\usepackage{tabularx, array}

\usepackage{tikz}
\newcommand*\circled[2][gray!20]{\tikz[baseline=(char.base)]{
    \node[shape=circle, draw, fill=#1, inner sep=1.5pt] (char) {#2};}}
\newcommand*\rectangled[2][gray!20]{\tikz[baseline=(char.base)]{
    \node[shape=rectangle, draw, fill=#1, inner sep=1.5pt] (char) {#2};}}
\SIthousandsep{,}
\usepackage{makeidx}
\makeindex
\usepackage[nottoc]{tocbibind}

\usepackage[colorlinks = true,
          linktocpage=true,
            pagebackref=true, % add back references to bibliography
            linkcolor = red,
            urlcolor  = blue,
            citecolor = red,
            anchorcolor = blue]{hyperref}
\definecolor{dkgreen}{rgb}{0,0.6,0}
\definecolor{gray}{rgb}{0.5,0.5,0.5}
\definecolor{lightgray}{rgb}{0.9, 0.9, 0.9}
\definecolor{mauve}{rgb}{0.58,0,0.82}
\definecolor{index}{rgb}{0.88,0.32,0}
\definecolor{green2}{rgb}{0.6,1,0.6}
\definecolor{orange2}{rgb}{1,0.8, 0.5}  % {1,0.8,0.8}
\definecolor{red2}{rgb}{1,0.75, 0.7}  % {1,0.8,0.8}
\definecolor{turquoise}{rgb}{0,1.0,1.0}
\definecolor{yellow1}{rgb}{1,1.0, 0.06}  %{06,1.0,0.0.6}
\definecolor{black2}{rgb}{0.95,0.95, 0.95}
\definecolor{cyan2}{rgb}{1,0.95, 1}
\definecolor{yellow2}{rgb}{1,1, 0.9}
\definecolor{blue2}{rgb}{0.95,1.0, 1}
\definecolor{green3}{rgb}{0.9,1, 0.9}

\usepackage{listings}
\usepackage{blindtext}
\usepackage{geometry}
\renewcommand{\arraystretch}{1.4} %%%
\newtheorem{theorem}{Theorem}[section]
\newtheorem{lemma}[theorem]{Lemma}

\newtheorem{statement}{Statement}[section]

\newtheorem{conjecture}{Conjecture}[section]

\newcommand{\qed}{\nobreak \ifvmode \relax \else
      \ifdim\lastskip<1.5em \hskip-\lastskip
      \hskip1.5em plus0em minus0.5em \fi \nobreak
      \vrule height0.75em width0.5em depth0.25em\fi}

\begin{document}

\hypersetup{linkcolor=blue}
%inserting a glossary entry in gloss: \gls{gls:keyword1} \\

\begin{center}
{\Large \bf{Piercing Gilbreath's Conjecture: From Deep Number Theory\vspace{0.5ex}\\Insights to Fintech  and Cybersecurity} 
%\quad \\ \addvspace{1ex} 
%with Spectacular Applications
}  \\
% \addvspace{1ex}
%Stochastic Processes and Simulations -- Volume 1
\addvspace{5ex}
\end{center}
\begin{center}
Vincent Granville, Ph.D. $|$  CAIO $|$ vincent@BondingAI.io\\
 \href{https://bondingAI.io/}{BondingAI.io}, version 1.0, July 2026  \\ 
\addvspace{5ex}
\end{center}

\hypersetup{linkcolor=red}

%\tableofcontents

%\vspace{6ex}

\begin{abstract}
I propose a new methodology to attack the fascinating Gilbreath's conjecture about prime numbers, 
first posted in 1878 and unsolved to this day. The problem statement is rudimentary: kids can understand it. 
However, despite decades of research, almost
no progress has been made. This paper changes the game~by
 presenting a new approach based on sieving, a number of new results with proof, a precise 
path to the solution, and solid references. It also introduces the concept of reverse sieving, along with
 applications to testing randomness, pattern and fraud detection, cybersecurity, synthetic data, sequence categorization and normalization, or to detect and quantify a new type of chaos in time series including Brownian motions. Magic primes, forbidden prime number constellations, cellular automata,  and reduction via classes of equivalent sequences, are some of the innovative and 
promising topics discussed in the paper.  
\end{abstract}

\tableofcontents

\section{Introduction}

The Gilbreath conjecture is a celebrated mathematical problem tied to prime numbers, first raised 
by François Proth in 1878 and unsolved to this day, despite numerous attempts and 
computational verification up to primes larger than $10^{14}$ in 2025~\cite{pouf25}. 
In 2023, Zachary Chase proved an analog for random integers with very small growth (much smaller than prime numbers), see~\cite{zachch23}. But the problem for the prime numbers sequence remains fully open.   
However, Chase's proof generated renewed interest and fruitful discussions, notably by 
well-known mathematician Juan Arias de Reyna, see 
\href{https://institucional.us.es/blogimus/en/2020/07/gilbreaths-conjecture/}{here}. 
In an earlier blog posted \href{https://11011110.github.io/blog/2011/02/19/gilbreath-made-practical.html}{here} in 2011, computer scientist David Eppstein shares a different version of the conjecture, applicable to 
\textcolor{index}{practical numbers},
 a type of integers derived from \textcolor{index}{Egyptian fractions} with a distribution similar to that of primes. See also recent paper by Leila Muney~\cite{llmu26}. 
\vspace{1ex}

\begin{table}[H]
\centering
\renewcommand{\arraystretch}{1.0}
\small
\setlength{\tabcolsep}{1.pt}
\parbox{.45\linewidth}{
\centering

%\begin{tabular}{ccccccccccccccccccccccc} 
%\begin{tabular}{|*{23}{>{\centering\arraybackslash}p{0.25cm}|}}
\begin{tabular}{*{23}{>{\centering\arraybackslash}p{0.25cm}}}
 %\hline
 % &  &  &        \\  [-2.5ex]
 % 1--grams & 2--grams & 3--grams & 4--grams \\%[0.5ex] 
% \hline 
\hline 
   &  &   &        \\  [-2.2ex]
%\hline
 & 2 & & 3 & & 5 & & 7 & & 11 & & 13 & & 17 & & 19 & & 23 & & 29 & & 31\\
\hline
  &  &  &  & & & & & & & & & &  & & & & & &   \\  [-2ex]
%10 & & 1&  & 2 & \\
6 & &  1 & &  2 & &  2 & &  4 & &  2 & &  4 & &  2 & &  4 & & 6 &  & 2\\  %
4 & & & 1 & & 0 & & 2 & & 2 & & 2 & & 2 & & 2 & & 2 & & 4 \\ %
2 & & & & 1 & & 2 & & 0 & & 0 & & 0 & & 0 & & 0 & &2 \\%
2 & & & & & 1 & & 2 & & 0 & & 0 & & 0 & & 0 & & 2\\
2 & & & & & & 1 & & 2 & & 0 & & 0 & & 0 & & 2\\
2 & & & & & & & 1 & & 2 & & 0 & & 0 & & 2\\
2 & & & & & & & & 1 & & 2 & & 0 & & 2\\
2 & & & & & & & & & 1 & & 2 & & 2\\ %
1 & & & & & & & & & & 1 & & 0\\
1 & & & & & & & & & & & \circled[green2]{1}\\
 %\hline
\end{tabular}
%\caption{Backend table {\fontfamily{ptm}\selectfont hash stem} (extract)}
\captionsetup{justification=centering}
\caption{Prime numbers, one of many\\sequences verifying the conjecture}
\label{table:gilb1}
}
%\hfill
\quad\quad
\parbox{.45\linewidth}{
\centering

%\begin{tabular}{ccccccccccccccccccccccc} 
%\begin{tabular}{*{23}{p{0.25cm}}}
%\begin{tabular}{|*{23}{>{\centering\arraybackslash}p{0.25cm}|}}
\begin{tabular}{*{23}{>{\centering\arraybackslash}p{0.25cm}}}
 %\hline
 % &  &  &        \\  [-2.5ex]
 % 1--grams & 2--grams & 3--grams & 4--grams \\%[0.5ex] 
% \hline 
\hline 
   &  &   &        \\  [-2.2ex]
%\hline
 & 2 & & 3 & & 5 & & 7 & & 11 & & 13 & & 17 & & 23 & & \textcolor{red}{31} & &\textcolor{red}{41} & & 43\\
\hline
  &  &  &  & & & & & & & & & &  & & & & & &   \\  [-2ex]
%10 & & 1&  & 2 & \\
10 & &  1 & &  2 & &  2 & &  4 & &  2 & &  4 & &  6 & &  8 & & \textcolor{red}{10} &  & 2\\
8 & & & 1 & & 0 & & 2 & & 2 & & 2 & & 2 & & 2 & & 2 & & \circled[yellow1]{\textcolor{red}{8}} \\
6 & & & & 1 & & 2 & & 0 & & 0 & & 0 & & 0 & & 0 & &\textcolor{red}{6} \\
6 & & & & & 1 & & 2 & & 0 & & 0 & & 0 & & 0 & & \textcolor{red}{6}\\
6 & & & & & & 1 & & 2 & & 0 & & 0 & & 0 & & \textcolor{red}{6}\\
6 & & & & & & & 1 & & 2 & & 0 & & 0 & &\textcolor{red}{6}\\
6 & & & & & & & & 1 & & 2 & & 0 & & \textcolor{red}{6}\\
6 & & & & & & & & & 1 & & 2 & & \textcolor{red}{6}\\
4 & & & & & & & & & & 1 & & \textcolor{red}{4}\\
3 & & & & & & & & & & & \circled[black2]{\textcolor{red}{3}}\\  % orange2
 %\hline
\end{tabular}
%\caption{Backend table {\fontfamily{ptm}\selectfont hash stem} (extract)}
\captionsetup{justification=centering}
\caption{Removing a few primes from\\ the sequence invalidates the result}
\label{table:gilb2}
}
\end{table}

\noindent Gilbreath's conjecture is illustrated in table~\ref{table:gilb1}: the successive differences of adjacent numbers,
 in absolute value, all start with 1 when the original sequence at the top consists of the prime numbers. Whether you look at the first 20 primes, the first $10^{14}$, or all of them. Remove a few primes in the list, and you get the counter-example
featured in table~\ref{table:gilb2}. The leftmost column (left to the triangle) represents the maximum value
 at each level. It contains all the critical information towards understanding the mechanics behind the scene. 

Surprisingly, the conjecture has nothing to do with prime numbers nor even their distribution. It works for plenty of sequences,
 some growing much faster, some much slower than primes, and even with oscillations or full chaos 
(Brownian motions, see figure~\ref{fig:aajj5gb17}), even with
 negative numbers. 
Yet, the proof sketch outlined in this article heavily relies on properties of prime numbers. It is as if the prime numbers are the ideal candidate, leading to simplifications not attainable with arbitrary numbers that also satisfy the conjecture. 
In the end, the core result is that in order to verify the conjecture, the sequence can be chaotic in certain specific ways (I called it
 constrained chaos), or smooth but again with specific constraints. The prime numbers happen to meet those constraints. 
Other curious sequences are listed in OEIS entry A358691, \href{https://oeis.org/A358691}{here}, including Fibonacci numbers leading
to periodicity in the left diagonal of the triangle instead of 1's.

Understanding what is going on in tables~\ref{table:gilb1} and~\ref{table:gilb2}, and what broke the nice pattern in the latter, is critical to build a proof of the conjecture. In short, in the bottom table, we skipped the primes 19, 29, and 37. This accelerated the growth 
and introduced higher values in the first differences, including a 10. But that alone did not kill the pattern. 
What killed it is the prime 43, not because it was too large compared to 41 (the classic~way to crash), but because it was too small. This resulted in a very high value in the second differences, given its short length: the number 8 circled in yellow. 
If you replace 43 by 45, it will work. 

In layman's term, you can visualize the problem as follows. You are a pilot about to take off. The runway in front of you is the triangle.
It ends in the bottom row: your either fly if the bottom number is 1, otherwise you crash.
The yellow light flashing on your dashboard tells you something is not right. The column left to the triangle tells your speed: the higher the value, the lower the speed. Your dashboard tells you that you are not accelerating fast enough. And at that low acceleration rate, you don't have enough runway to take off.

\section{Proof for sifted sequences: standard and reverse sieving}\label{siebel}

If the original sequence (the row above the triangle) has $n$ elements, then the corresponding triangle has $n-1$ rows called 
\textcolor{index}{levels},
labeled from 1 at the top to $n-1$ at the bottom. The maximum value at level $k$ 
is denoted as $\gamma(k)$ and called the \textcolor{index}{depth} at level $k$, with $1\leq k \leq n-1$. 
In my implementation, by design I stop growing the sequence (adding new elements to the right), and thus the size of the triangle, when
one of the following two conditions is met:
\vspace{1ex}
\begin{itemize}
\item We arrived at the bottom row: either with a 1 (called \textcolor{index}{success}), or with a number $>1$ (called
\textcolor{index}{failure}). 
\item At any level, the leftmost element is not 1 (called failure). In short, we stop at the first level where this problem occurs,
 if it ever does.
\end{itemize}
\vspace{1ex}
 
\noindent As a result, my triangles always have a 1 on the left at each level except possibly for the last level consisting of
 one element. For infinite sequences, I look at all finite subsequences of increasing length. To be marked as success, all
finite subsequences must be marked as success. I now state a few fundamental yet trivial results. 
\vspace{1ex}

\begin{theorem}\label{th1}
All integers in the triangle are even, except the first one on the left at each level, which is odd. 
Also, the depth function $\gamma(k)$ is a non-increasing function.  Finally, if we reach $\gamma(k_0)=2$ at some level
 $k_0$, we can stop: we know for certain that the end result will be a success. 
\end{theorem}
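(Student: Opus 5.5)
The plan is a direct induction on the level $k$, using only the definition of the triangle (absolute differences of adjacent entries) and the shape of the top row. I will assume, as for the primes, that the original sequence starts with the even number $2$ and that all later terms are odd. This is exactly what the parity claim needs, and I will say so explicitly, since for an arbitrary integer sequence the first statement of Theorem~\ref{th1} can fail.

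\textbf{Parity.} At level 1 the first entry is the absolute difference of an even and an odd number, hence odd. Every other level-1 entry is the difference of two odd numbers, hence even. For the inductive step, suppose level $k$ has an odd first entry and even entries afterwards. Then the first entry of level $k+1$ is $|\text{odd}-\text{even}|$, which is odd. All remaining entries of level $k+1$ are $|\text{even}-\text{even}|$, which is even. This gives the first claim at every level, including the last one-element level. The stopping rule is not needed here; it only adds that the odd leading entry equals $1$ at non-final levels.

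\textbf{Monotonicity of $\gamma$.} From level 1 onward all entries are nonnegative integers. For nonnegative $a,b$ we have $|a-b|\le \max(a,b)$. Every entry of level $k+1$ is such a difference of two entries of level $k$, each at most $\gamma(k)$. Hence $\gamma(k+1)\le\gamma(k)$. There is no analogous claim relating level 1 to the top row, and none is needed.

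\textbf{Termination at depth 2.} Suppose $\gamma(k_0)=2$. By the parity claim and the stopping rule, level $k_0$ has the form $1,x_2,x_3,\dots$ with each $x_i$ even, nonnegative and at most $2$, so $x_i\in\{0,2\}$. At level $k_0+1$ the first entry is $|1-x_2|$, and $x_2\in\{0,2\}$ gives $|1-x_2|=1$. The other entries are differences of elements of $\{0,2\}$, so they again lie in $\{0,2\}$. By induction every subsequent level, including the bottom one-element level, has leading entry $1$, which is a success. This also holds as the sequence is extended to the right, because the argument does not depend on the length of the row.

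The main obstacle is not computational but lies in stating the hypotheses correctly. The parity claim needs the top row to be "even first, odd afterwards", which is a property of the primes, or more generally needs level 1 to consist of an odd entry followed by even entries. The depth-$2$ conclusion also needs the leading entry at level $k_0$ to be exactly $1$, which the stopping convention guarantees. I will make both points explicit before running the inductions.
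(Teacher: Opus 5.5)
Your proof is correct and follows the same route as the paper's: $\gamma$ is non-increasing because each entry is the absolute difference of two nonnegative entries from the level above, and from level $k_0$ on every row is a leading $1$ followed by entries in $\{0,2\}$, so the bottom entry is $|1-0|$ or $|1-2|$. Your version is more careful than the paper's in three respects: the paper writes the false bound $|x-y|\le\min(x,y)$ where you correctly use $\max(a,b)$, it never proves the parity claim, and it leaves implicit the parity hypothesis on the sequence that you rightly make explicit (without it the claim fails, e.g.\ for $2,3,4$); also, your appeal to the stopping rule is unnecessary, since from level $k_0$ on every leading entry is odd by parity and at most $\gamma(k_0)=2$ by monotonicity, hence equal to $1$.
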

\vspace{1ex}
{\bf Proof}: For two positive integers $x, y$ we always have $|x-y|\leq \min(x, y)$.  
Thus from one level to the next, the values can not increase, insuring that the maximum value $\gamma(k+1)$
 at level $k+1$ cannot be larger than $\gamma(k)$. 
Now, if $\gamma(k_0)=2$ at some level $k_0$, then $\gamma(k)\leq 2$ for all $k>k_0$.
Since the length of each row in the triangle decreases by 1 each time we increase $k$, 
we eventually reach a level with 2 elements: the one on the left equal to 1, and the one the right equal to 0 or 2. 
The next level is the final one, and consists of one element equal to either $|0-1|$ or $|2-1|$. Either way,
 it is equal to 1, that is, success. $\square$
\vspace{2ex}\\
\noindent The minimum $k_0$ satisfying $\gamma(k_0)=2$, if it exists, is called the 
\textcolor{index}{regularity coefficient} of the sequence.
If there is no  such $k_0$, the sequence is said to be \textcolor{index}{irregular}. 
In case of success, typically (but not always), $\gamma(k)$ decays exponentially fast, but it can stay at $\gamma(k)=2$ for a large number of $k$ before ending on 1. Thus, stopping~the computations as soon as we find $k_0$ with $\gamma(k_0)=2$ can save
much time for very large sequences, or when testing a large number of small sequences. 
\vspace{1ex}
\begin{theorem}\label{abc}
Modifying, removing, or adding new elements to a sequence does not modify the sub-triangle on the left, 
 attached to the unchanged terms on the left  to where the changes start occurring.   
\end{theorem}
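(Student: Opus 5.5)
We formalize the triangle and then argue by induction on the level $k$. Write the original sequence as $a_0(1), a_0(2), \dots, a_0(n)$ (level $0$), and define the entries at level $k\geq 1$ by the recursion
\[
a_k(j) = \bigl|a_{k-1}(j+1) - a_{k-1}(j)\bigr|, \qquad 1\leq j\leq n-k.
\]
Suppose the change (modification, removal, or insertion) first affects position $m+1$. This means the new sequence $b_0$ satisfies $b_0(j)=a_0(j)$ for $1\leq j\leq m$, while nothing is assumed about positions $j>m$; for a removal or insertion the later indices simply shift, which is irrelevant. The \emph{sub-triangle on the left} attached to the unchanged terms is the set of entries $a_k(j)$ with $k\geq 0$ and $j+k\leq m$, i.e.\ those whose index window $\{j,\dots,j+k\}$ lies inside $\{1,\dots,m\}$.

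The key step is the claim that for every $k$ with $0\leq k\leq m-1$ and every $j$ with $1\leq j\leq m-k$ we have $b_k(j)=a_k(j)$. The base case $k=0$ is the hypothesis. For the inductive step, take $j\leq m-k$. Then both $j$ and $j+1$ are at most $m-(k-1)$, so the induction hypothesis gives $b_{k-1}(j)=a_{k-1}(j)$ and $b_{k-1}(j+1)=a_{k-1}(j+1)$. Applying the recursion, $b_k(j)=a_k(j)$. Equivalently, one can note the closed-form dependence: $a_k(j)$ is a function only of $a_0(j),\dots,a_0(j+k)$. This can be stated as a lemma, proved by the same induction, and used to conclude directly.

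No step is a real obstacle. The only care needed is the index bookkeeping: making the notion of ``sub-triangle on the left'' precise as the dependency cone $j+k\leq m$. One also has to check that removal or insertion, which renumbers later terms, still leaves positions $1,\dots,m$ untouched, so the argument applies uniformly to all three kinds of change. As a corollary, the leftmost entries $a_k(1)$ for $k\leq m-1$, and hence the early depth information and any early failure detected there, are unchanged. This is what makes the result useful in the later analysis.
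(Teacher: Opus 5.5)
Your induction on the level over the dependency cone $j+k\le m$ is correct and is precisely the ``straightforward generalization'' the paper invokes. The paper's proof only points to Tables~\ref{table:gilb1} and~\ref{table:gilb2}, where the common prefix $2,3,5,7,11,13,17$ gives identical sub-triangles, and then asserts the general case; your observation that each entry $a_k(j)$ depends only on $a_0(j),\dots,a_0(j+k)$ supplies that missing general step, and it works for finite and infinite sequences alike, as the paper claims.
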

\vspace{1ex}
{\bf Proof}:  Compare tables~\ref{table:gilb1} and~\ref{table:gilb2}. 
 The first seven elements 2, 3, 5, 7, 11, 13, 17 are common to both sequences. The corresponding sub-triangles are
identical. Generalization is straightforward. This is true whether the~numbers are prime or not, and whether the sequence is finite or infinite. $\square$
\vspace{2ex}\\
\noindent Theorem~\ref{abc} is useful in the following context. Let's $p_n$ be the last element in a finite sequence
of length $n$ that is marked as successful. You want to add one element $p_{n+1}$. What are the possible values
 for $p_{n+1}$ so that the augmented sequence is also successful? In many cases, any odd $p_{n+1}$
 in $[\,p_n, 2p_n]$ will work. 
In even more cases, any odd $p_{n+1}$ in $[\,p_n, 1.2 \, p_n]$ works. 
This is interesting, since \textcolor{index}{Nagura’s theorem}, published in 1952, states that there is always a prime number
in the interval $[\, p_n, 1.2\, p_n]$ if $p_n\geq 25$. One would think that this proves Gilbreath's conjecture, as it guarantees that you can always find the next prime within the prescribed range to grow the sequence indefinitely, eventually covering all primes,  while preserving success at all times. However, the counter-example in table~\ref{table:gilb2} is a huge blow to this strategy. 

Examples like table~\ref{table:gilb2} are exotic and rare. 
Whether there are finitely or infinitely many of them is unknown. And while
you would expect such exceptions to occur only in short sequences that are otherwise well behaved, 
no one knows if it could also happen in sequences of arbitrary length. 
The idea might be worth pursing, but it does not appear as an easy path to prove the conjecture.
Before stating the next theorem, let's call an infinite sequence \textcolor{index}{$\Delta$-periodic} if its first-order differences 
eventually become periodic. 
\vspace{1ex}
\begin{theorem}\label{ddew}
If the sequence is $\Delta$-periodic, looking at the numbers up to the 
first full period cycle, is enough to decide about its success: it turns an
infinite problem into a finite one.  In particular, starting with the sequence of all integers $> 1$, 
then removing even numbers except $p_1 =2$, then removing multiples of 3 except $p_2=3$, 
then removing multiples of 5 except $p_3=5$, 
%then removing even numbers and adding back 2, then removing multiples of 3 and adding~back 3, 
%then removing multiples of 5 and adding back 5, 
yields a successful $\Delta$-periodic sequence denoted as $S_3$. 
This remains true if we continue the iterations with $p_4, p_5$ and so on up to $p_8=19$, leading to successful $S_8$. 
%2 3 5 7 11 13 17 19
%1 2 3 4 5  6   7   8
\end{theorem}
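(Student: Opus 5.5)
The plan is to handle the first claim structurally and then reduce the claims about $S_3,\dots,S_8$ to a finite computation.

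\textbf{Structure of the difference triangle.} Let the sequence be $a_1,a_2,\dots$ with first differences $d_i=a_{i+1}-a_i$, and assume $(d_i)$ is periodic with period $T$ from some index $m$ on. Each entry of the triangle at level $k$ is obtained from a window of $k+1$ consecutive terms of the sequence, equivalently from $k$ consecutive differences, through repeated absolute differences. Two facts drive the argument.
\begin{itemize}
\item Once $i\ge m$, the level-1 entries $|d_i|$ are periodic, so the whole part of the triangle lying to the right of column $m$ is periodic in the horizontal direction with period $T$.
\item By Theorem~\ref{th1} the depth $\gamma(k)$ is non-increasing, so the maximum at each level is attained within one period plus the initial transient.
\end{itemize}

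\textbf{Reduction to a finite check.} The key step is the following. If the triangle built on the first $m+T+c$ terms, for a suitable constant $c$, reaches $\gamma(k_0)=2$ at some level $k_0$, then the conclusion extends to every longer prefix. The reason is that every level-$k_0$ entry of a longer prefix is either an entry already computed or a translate by a multiple of $T$ of one lying in the periodic zone, since it depends only on $k_0+1$ consecutive terms. Hence every longer prefix also has depth $2$ at level $k_0$. The leftmost column is unchanged by Theorem~\ref{abc}, so it still consists of 1's at levels up to $k_0$, and Theorem~\ref{th1} then gives success from level $k_0$ onward. This is the precise sense in which the infinite problem becomes finite. The "first full period cycle" should be read as the prefix covering the transient together with one period, plus the $k_0$ extra terms needed so that every window type appears. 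I would state this carefully, noting that $k_0\le m+T$ is not automatic, so the finite check must actually exhibit $k_0$.

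\textbf{Application to $S_j$.} After sieving by $p_1,\dots,p_j$, the remaining terms are $p_1,\dots,p_j$ followed by the integers greater than $p_j$ that are coprime to the primorial $P_j=p_1\cdots p_j$. Their gaps are periodic with period $\varphi(P_j)$ in the index, so $S_j$ is $\Delta$-periodic. The transient consists of the first few primes together with the wheel numbers up to about $P_j$. The remaining work, and the main obstacle, is verifying for each $j=3,\dots,8$ that $\gamma(k_0)=2$ is reached within the finite window, with leading 1's throughout. This is a computation over roughly $P_j$ terms; for $j=8$ that means $P_8=9699690$ and $\varphi(P_8)=1658880$ terms per period. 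The triangle must not be stored in full, which is done by tracking only row maxima and the leftmost entries. A theoretical proof for general $j$ is not available, since it would essentially be the conjecture itself. I would therefore present these cases as a certified computer verification, reporting $k_0$ for each $S_j$, and I would first check $S_3$ by hand (period $8$ with gaps $6,4,2,4,2,4,6,2$) to illustrate the mechanism.
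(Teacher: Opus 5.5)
Your proposal is correct and follows the paper's proof. Periodicity propagates to every row of the triangle, so a finite window suffices; Theorem~\ref{th1} lets you stop at the first level of depth $2$; and the cases are settled by direct computation ($S_3$ by hand, where the paper finds $\gamma(1),\gamma(2),\gamma(3)=6,4,2$, and $S_4$ through $S_8$ by machine), with your explicit allowance of $k_0$ extra columns being more careful than the paper's ``first full period cycle'' wording. One slip: the transient of $S_j$ is only $p_1,\dots,p_j$, because the gaps are already periodic from $p_{j+1}$ onward (cf.\ Theorem~\ref{rotule}), so the certified check needs about $\varphi(P_j)+j+k_0$ terms rather than roughly $P_j$ of them.
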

\vspace{1ex}
{\bf Proof}: All the differences (first, second order and so on) must have the same period. 
Thus all values within~each level will keep repeating themselves with the same period. You only need to look at the first cycle and pre-period values. This proves the
first part of the theorem.  Now solving the second part.
\begin{itemize}
\item Removing all even numbers and adding back 2 yields $\gamma(1)=2$. So, we are done at the first level thanks to theorem~\ref{th1}. 
The period has length 1 and starts at position 2 in the first order differences.  
\item Removing all multiples of 3 and adding back 3, after the previous step, yields $\gamma(1)=4$ and $\gamma(2)=2$. So, we are done at the second level. Now, the period has length 2 and starts at position 3 in the first order differences.  
\item
Removing all multiples of 5 and adding back 5, after the two previous steps, yields 
$\gamma(1)=6$, $\gamma(2)=4$~and $\gamma(3)=2$. So, we are done at the third level. 
This time the period is
$[4, 2, 4, 2, 4, 6, 2, 6]$, has length 8 and starts at position 4 in the first order differences. 
 %4 2 4 2 4 6 2 6
\item Beyond $p_3=5$, I verified the result on a computer up to $p_8=19$. A finite number of
steps is required for each $p_i$, see~the Python code in section~\ref{pythor}.  
\end{itemize}
\noindent This concludes the proof. Thanks to theorem~\ref{abc}, keeping 2, 3 and 5 before the period kicks in
does not change the success status. Note how fast the \textcolor{index}{depth function} decays. 
$\square$
\vspace{2ex}\\
\noindent Gilbreath's conjecture states that theorem~\ref{ddew}  is true no matter how long we go with prime sieving.  
By contrast to Plouffe who shows success for the finite sequence
consisting of the first $10^{14}$ primes~\cite{pouf25}, we proved success~for infinite sequences containing all primes but
 blended with a proportion $\rho_k$ of all compound integers, where
\begin{equation}
\rho_\kappa = \prod_{i=1}^\kappa \Big(1 - \frac{1}{p_i}\Big)\sim \frac{e^{-\gamma}}{\log \kappa}
\end{equation}
according to \textcolor{index}{Meterns' third theorem}. So far, theorem~\ref{ddew} is proved for $\kappa\leq 8$, and $\rho_8\approx 17\%$. 
The next theorem, while stating nothing new, starts our descent into deep number theory. The short proof is worth reading.
\vspace{-1ex}\\
\begin{theorem}\label{rotule}
Starting with all integers $>1$, let $S_\kappa$ be the sequence obtained by iteratively removing all multiples of $p_i$ except $p_i$, for $i=1,\dots\kappa$, where $p_1, p_2$ and so on are the prime numbers starting with
$p_1 = 2$. Then $S_\kappa$ is $\Delta$-periodic with period length
\begin{equation}
L(S_\kappa) = \phi\big(p_\kappa^{\#}\big)=\phi(p_1\cdot p_2 \cdots p_\kappa) = \prod_{i=1}^\kappa (p_i - 1).\label{machin}
\end{equation}
The period kicks in in the first order differences after the first $\kappa$ elements. Furthermore,
in the periodic cycle,~the depth at level 1 (maximum gap between successive values in $S_\kappa$) satisfies
\begin{equation}
\gamma_\kappa(1) = O\big(\kappa^2 \log^2 \kappa \big).\label{butor}
\end{equation}
%\vspace{1ex}
Here $\gamma_\kappa(\cdot)$ denotes the depth function attached to the sequence $S_\kappa$
 and $\gamma_\kappa(1)$ stands for the first level. 
\end{theorem}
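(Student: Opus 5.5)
The plan is to identify $S_\kappa$ explicitly and then read off both the periodicity and the gap bound. Let $P=p_\kappa^{\#}=p_1p_2\cdots p_\kappa$. By construction, $S_\kappa$ consists of the primes $p_1,\dots,p_\kappa$ followed by all integers $n>p_\kappa$ with $\gcd(n,P)=1$. Indeed, an integer $n>1$ survives the $\kappa$ removal steps exactly when it is not a proper multiple of any $p_i$, $i\le\kappa$. Any $n\le p_\kappa$ not among the $p_i$ has a prime factor $<p_\kappa$, so it is removed. Any $n>p_\kappa$ survives iff it is coprime to $P$. The first element after $p_\kappa$ is the smallest integer $>p_\kappa$ coprime to $P$, which is $p_{\kappa+1}$.

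\textbf{Periodicity.} The set $R=\{n\ge 1:\gcd(n,P)=1\}$ is invariant under $n\mapsto n+P$. Its elements in $[1,P]$ number $\phi(P)=\prod_{i\le\kappa}(p_i-1)$ by multiplicativity of Euler's totient. Hence the gap sequence of $R$ is purely periodic with period $\phi(P)$.

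Next I will check that $\phi(P)$ is the \emph{minimal} period. If the gap sequence had a period $m<\phi(P)$, summing the gaps over one such block would give a translation $T<P$ with $R+T=R$. Because $1\in R$ and $P-1\in R$, $R$ would then be invariant under translation by $g=\gcd(T,P)$, a proper divisor of $P$. But then some $p_i\nmid g$, and $1+jg$ runs through all residues mod $p_i$, so $R$ would contain a multiple of $p_i$, a contradiction.

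Since $S_\kappa$ agrees with $R$ from $p_{\kappa+1}$ onward, its first-order differences from position $\kappa+1$ onward (the gap $p_{\kappa+1}-p_\kappa$ and beyond) are gaps of $R$. So the periodic regime starts after the first $\kappa$ elements, as claimed. Theorem~\ref{abc} guarantees the prefix $p_1,\dots,p_\kappa$ plays no role beyond this pre-period.

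\textbf{Bound \eqref{butor}.} Within the cycle, $\gamma_\kappa(1)$ is the maximal gap between consecutive integers coprime to $P$. This is Jacobsthal's function $j(P)$, i.e. the length of the longest run of consecutive integers each divisible by some $p_i$, plus one. I would invoke Iwaniec's theorem: $j(n)=O(\omega(n)^2\log^2\omega(n))$, where $\omega(n)$ is the number of distinct prime factors. Here $\omega(P)=\kappa$, which gives $O(\kappa^2\log^2\kappa)$ directly.

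This is the main obstacle. A self-contained proof requires Iwaniec's large-sieve-type argument, which I would cite rather than reproduce. The elementary fallback is weaker. By Chinese remaindering, $j(P)\le 2^{\kappa}$. A Brun-type sieve estimate for the number of coprime integers in an interval of length $x$, namely $x\phi(P)/P+O(2^\kappa)$, yields only bounds like $O(2^\kappa\log\kappa)$. So I would state the result with explicit reliance on Iwaniec (1978). I would note that the gaps at the cycle boundary, including $p_{\kappa+1}-p_\kappa$, are also gaps of $R$ and hence also bounded by $j(P)$.
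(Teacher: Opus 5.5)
Your proposal follows the paper's route: identify $S_\kappa$ (after the prefix $p_1,\dots,p_\kappa$) with the integers coprime to $P=p_\kappa^{\#}$, read off the period $\phi(P)$, and cite Iwaniec's bound for the primorial Jacobsthal function. It is more complete than the paper's one-line proof, notably because you show that $\phi(P)$ is the \emph{minimal} period.

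There is one off-by-one slip to fix. The gap $p_{\kappa+1}-p_\kappa$ is \emph{not} a gap of $R$, because the predecessor of $p_{\kappa+1}$ in $R$ is $1$. Write $d_i=s_{i+1}-s_i$. The periodic regime starts at $d_{\kappa+1}=s_{\kappa+2}-p_{\kappa+1}$, which agrees with ``position 4'' for $\kappa=3$ in Theorem~\ref{ddew}. The difference $d_\kappa=p_{\kappa+1}-p_\kappa$ belongs to the pre-period, since its periodic counterpart is $d_{\kappa+\phi(P)}=(P+p_{\kappa+1})-(P+1)=p_{\kappa+1}-1\neq d_\kappa$. Your closing claim that this boundary gap is at most $j(P)$ is still true, but the reason is $p_{\kappa+1}-p_\kappa<p_{\kappa+1}-1\le j(P)$, not that it is a gap of $R$.

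In the minimality step, you only get $R+T\subseteq R$, not equality. Invariance under $g=\gcd(T,P)$ then follows by reducing mod $P$, not from $1,P-1\in R$. Translation by $T$ is an injective, hence bijective, self-map of the finite set of reduced residues, so that set is stable under the subgroup $g\mathbb{Z}/P\mathbb{Z}$.
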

\vspace{1ex}
{\bf Proof}: This is an application 
of  the \textcolor{index}{sieve of Eratosthenes}. The period length is known: 
$\phi(\cdot)$ is the \textcolor{index}{Eulter~totient function}.
The asymptotic result~(\ref{butor}) is a consequence of \textcolor{index}{Iwaniec's bound} established in 1978, 
and $\gamma_k(1)$ is the \textcolor{index}{primorial Jacobsthal function} with argument $\kappa$.  
The notation $p_\kappa^{\#}$ represents \textcolor{index}{primorial} numbers. $\square$
\vspace{2ex}\\
\noindent The sequence $S_\kappa$ consists of the integers coprime
to the primorial $p_\kappa^{\#}=p_1\cdots p_\kappa$. This is a topic of active~research in \textcolor{index}{Sieve theory}, 
including about its first and second order differences, whose absolute values match the entries in our triangle. 
Theorem~\ref{rotule} gives us the exact, finite number of terms we need to look at in $S_k$ and its differences of any order in absolute value, allowing us to ignore all the other terms.  
The period length $L(S_\kappa)$, as a function of $\kappa$, is listed on OEIS as sequence A005867,
see \href{https://oeis.org/A005867}{here}.
The primorial Jacobsthal function (largest gap in $S_\kappa$) is listed as sequence A048670, see \href{https://oeis.org/A048670}{here}. Its first few values are
\vspace{1ex}
\begin{adjustwidth}{1cm}{1cm}
2, 4, 6, 10, 14, 22, 26, 34, 40, 46, 58, 66, 74, 90, 100, 106, 118, 132, 152, 174, 190, 200, 216, 234, 258, 264, 282, 300, 312, 330, 354, 378, 388, 414, 432, 450, 476, 492, 510, 538, 550, 574, 600, 616, 642, 660, 686, 718, 742, 762, 798, 810, 834, 858, 876, 908, 926, 954
\end{adjustwidth}
\vspace{1ex}
\noindent The sieve in theorem~\ref{rotule} can be performed in random order: 
 the primes can be rearranged in any order, as~long as you cover all of them in
the long run. I call it \textcolor{index}{random prime sieving}, and  
I use the notation $S_2', S_3'$ and so on for the generated sequences produced this way. For instance, 
If $p_1 = 11, p_2=47$ and $p_3=3$, then $S'_3$ is based on these three primes.  
Sieving the primes in random order should not be confused with sieving random integers sequentially~\cite{ajs23}. 
There is no research papers about the former; the latter has applications in cryptography. 
For more about the patterns linked to sieving, see~\cite{cpsgg19}.  
First order differences are discussed in~\cite{amprinc}, focusing on large gaps between
consecutive primes, and in~\cite{mziller20x} focusing 
on consecutive numbers coprime to primorials (our main topic). 
The \textcolor{index}{Cramér-Shanks conjecture}~\cite{grvillex} also quantifies how fast these gaps may grow.
See also~\cite{tt14rf}.
Finally, theorem~\ref{rotule} can be generalized to random prime sieving, as follows. 
\vspace{2ex}

\begin{theorem}\label{rotx} 
Random prime sieving using any combination of $\kappa$ distinct primes $p_1,\dots, p_k$ in any order,
 leads to increasingly thinner successful $\Delta$-periodic sequences $S_1',\dots,S_\kappa'$ encompassing all primes, 
for any $\kappa \leq \kappa_0$. Here $\kappa_0$ 
 is a constant yet to be determined. The period length is
\begin{equation}
L(S'_\kappa) =\phi(p_1\cdot p_2 \cdots p_\kappa) = \prod_{i=1}^\kappa (p_i - 1).\label{machin2}
\end{equation}
Again, $\phi(\cdot)$ is the Euler totient function. The period applies to $S_\kappa'$ differences of any order, in absolute value.
\end{theorem}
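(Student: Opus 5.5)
The plan is to treat the statement in three parts: $\Delta$-periodicity with the stated period length, inheritance of periodicity by all higher-order absolute differences, and success for $\kappa\le\kappa_0$. The first two are structural and follow the sieve argument behind theorem~\ref{rotule}. The third reduces to a finite computation through theorems~\ref{ddew}, \ref{abc} and~\ref{th1}.

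\textbf{Periodicity.} Let $P=p_1\cdots p_\kappa$ be the product of the chosen primes, taken in any order. After sieving, $S'_\kappa$ consists of the primes $p_1,\dots,p_\kappa$ together with all integers $m>1$ such that $\gcd(m,P)=1$. The order in which the primes are applied is irrelevant, because membership depends only on $\gcd(m,P)$, which is symmetric in the $p_i$. Consequently every prime is retained: a prime $q$ is either some $p_i$, or else it is coprime to $P$.

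Beyond $\max_i p_i$, the sequence is exactly the set of residues coprime to $P$, which is invariant under $m\mapsto m+P$. By the Chinese remainder theorem there are $\phi(P)=\prod_i(p_i-1)$ such residues in each block of length $P$. Hence, past a finite pre-period, the successive gaps repeat with period $\phi(P)$, which establishes (\ref{machin2}).

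The sequences also get thinner as $\kappa$ grows, in the sense that the density is $\prod_i(1-1/p_i)$, which strictly decreases. Periodicity propagates to all levels by induction on the level. If row $k$ is eventually periodic with period $L$, then row $k+1$, whose entries are $|a_{j+1}-a_j|$, is also eventually periodic with period $L$, with its pre-period shifted by at most one. This is exactly the observation used in the proof of theorem~\ref{ddew}.

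\textbf{Success.} By theorem~\ref{ddew}, once each level is eventually periodic, success is decided by a finite window: the pre-period together with one full cycle at each level down to the point where the depth reaches $2$. So for a fixed set of $\kappa$ primes, I would compute the triangle on a finite prefix and look for the regularity coefficient $k_0$ with $\gamma(k_0)=2$. Theorem~\ref{th1} then guarantees success below that level. Theorem~\ref{abc} guarantees that the pre-period prefix (the small sieving primes inserted back) does not spoil the analysis: the left sub-triangle is fixed and only its interaction with the periodic part matters, and that interaction is inside the finite window.

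The prefix must be long enough for every level to have entered its periodic regime, and for the left boundary of the triangle to have been checked against every phase of the cycle. A safe choice is roughly $\max_i p_i+(k_0+1)\phi(P)$ terms. Carrying this out for every admissible choice of primes up to some bound would yield a concrete $\kappa_0$, exactly as $\kappa\le 8$ was certified for $S_\kappa$.

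\textbf{Main obstacle.} The hard step is the universal claim that success holds for \emph{any} choice of $\kappa$ distinct primes, in any order, for $\kappa\le\kappa_0$. There are infinitely many such choices, so a finite machine check cannot cover them directly. I would first try to show that for sets of primes all exceeding some threshold, the first-level gaps are controlled by the Jacobsthal function of $P$, using an Iwaniec-type bound as in (\ref{butor}). The aim is an argument that the depth drops to $2$ within a number of levels smaller than the cycle length, reducing everything to finitely many "small prime" configurations that can be checked by computer.

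If this control fails, the fallback is to state the result with $\kappa_0$ defined as the largest $\kappa$ for which the finite verification succeeds over a prescribed finite family of prime sets. That is in keeping with the statement's remark that $\kappa_0$ is "yet to be determined."
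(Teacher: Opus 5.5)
Your periodicity argument is sound, and it is more explicit than the paper's, which simply calls periodicity and the period length ``well known''. The CRT count and the level-by-level induction are fine. The gap is in the success part. As worded, the success claim is false for every $\kappa_0\ge 1$, so neither finite verification nor Jacobsthal/Iwaniec bounds can establish it.

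The statement requires \emph{every} intermediate sequence $S'_1,\dots,S'_\kappa$ to succeed, for any primes in any order. But as long as $2$ has not been sieved, $S'_j$ begins $2,3,4$. Here $4$ survives because its only prime factor is $2$, and $3$ survives either as a non-multiple or as a retained sieving prime. So level $1$ begins $1,1$ and level $2$ begins with $0$, which is a failure. For example, $p_1=3$ gives $2,3,4,5,7,8,10,\dots$. Hence the order $(3,2)$ already fails for $\{2,3\}$, and any prime set omitting $2$ fails outright. Your observation that the order is irrelevant holds only for the final $S'_\kappa$, not for the intermediate sequences the statement also constrains.

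The same issue breaks your certification step. Theorem~\ref{th1} (depth $2$ implies success) relies on the parity pattern ``leftmost entry odd, all others even''. That pattern holds only when the sequence is $2$ followed by odd terms, i.e.\ only after $2$ has been sieved. Without it, reaching depth $2$ proves nothing: a row of $1$'s yields a row of $0$'s. So your plan for prime sets above a threshold targets sequences that fail at level $2$. Redefining $\kappa_0$ through a prescribed finite family does not give the ``any combination'' claim either.

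The paper's proof sidesteps all of this by taking $\kappa_0=0$. That makes the success claim vacuous, and it is the only value for which the statement as written is true. The paper's remark that $\kappa_0=1$ should be easy ``regardless of $p_1$'' is refuted by the same example.

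A meaningful version needs $p_1=2$, so that parity holds at every stage. For instance, the order $(2,p)$ gives first differences $1$, $2$, then only $2$'s and $4$'s. So level $2$ is $1$ followed by $0$'s and $2$'s, and theorem~\ref{th1} gives success, i.e.\ $\kappa_0\ge 2$ in that restricted sense.

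Finally, ``depth reaches $2$'' is only a sufficient certificate. To decide a fixed periodic instance in general, iterate the level map on the finite window (pre-period plus one cycle) until a state repeats. This terminates because the window has finitely many states, the values being bounded.
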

\vspace{1ex}
{\bf Proof}: Take $\kappa_0=0$ and the theorem is proved! For $\kappa_0 = 1$, the proof should be quite easy:
the difficult step~is to prove that the sequence succeeds regardless of $p_1$ (the first prime being sieved out).  

\noindent Periodicity and length of the period is well known for all
$\kappa$. 
Note that formulas~(\ref{machin}) and~(\ref{machin2}) are identical except that now, the product $p_1\cdots p_\kappa$
is not a primorial in general.  
Proving success for all $\kappa \leq \kappa_0$ might not be easy 
for larger values of $\kappa_0$. I conjecture success for any $\kappa_0$ including $\kappa_0=\infty$. 
 $\square$
\vspace{2ex}\\
\noindent My conjecture stated in the proof of theorem~\ref{rotx} is stronger than Gilbreath. 
You don't need success for all $S_\kappa'$ but only for at least one infinite subsequence
 and
for at least one permutation of all primes (out of the infinitely many possibilities) when applying the random prime sieve.

\begin{table}[H]
\centering
\renewcommand{\arraystretch}{1.2}
\small
\parbox{.45\linewidth}{
\centering
\begin{tabular}{c|ccccccc}
\hline
$p$ & 3  & 5   & 7  & 11     & 13      & 17     & 19  \\ 
\hline
\hline
 $\lambda$ & 2  & 3 & 15  & 16     & 22      & 37     & 41  \\
$\gamma$ & 4  & 6 & 10  & 14     & 22      & 26     & 34  \\
$\tau$ & 4  & 9 & 50  & 30   & 1818   & 39,205   & 10,283\\ 
$\nu$ & 7 & 23 & 199 & 113  & 9439  & 217,127   & 60,043\\
\hline
\end{tabular}
\caption{Stats for standard sieve ($\kappa=8$)}\label{oxy11}
}
%\hfill
\quad\quad
\parbox{.45\linewidth}{
\centering
\begin{tabular}{c|ccccccc}
\hline
$p$ & 19  & 17  & 13     & 11      &7    &  5   &    3\\
\hline
\hline
$\lambda$  & 1  & 17   & 21     & 34     &31    & 31   &  41\\
$\gamma$  & 2   & 6    & 8     & 10     &12     & 18     & 34\\
$\tau$  & 2 & 282  & 543    & 964   & 828  & 17,248 &  10,283 \\
$\nu$  & 3 & 625 & 1307   & 2559   &2559  & 67,191  & 60,043\\
\hline
\end{tabular}
\caption{Stats for reverse sieve ($\kappa=8$)}\label{oxy12}
}
\end{table}

\noindent Tables~\ref{oxy11} and~\ref{oxy12} illustrate respectively theorem~\ref{rotule} and~\ref{rotx}.
For the latter, I sieved in reverse order (called \textcolor{index}{reverse prime sieving}), starting from the largest prime
under consideration,  $p_8=19$, down to the
smallest one $p_1=2$. All the generated sequences succeed. As you move from left to right in either table, each
sequence (summarized in the corresponding column) is a subsequence of the previous one. The rows are as follows:
\vspace{1ex}
\begin{itemize}
\item $\lambda$ is the smallest index $k$ for which $\gamma(k)=2$. 
\item $\gamma$ is a shortcut for $\gamma(1)$, the maximum value in the first order differences. 
\item $\tau$ is the index in the first order differences, for the first occurrence of the record $\gamma(1)$. 
In short, this is the index of the first value in the sequence followed by a the maximum gap. 
\item  $\nu$ is the left value in the sequence, leading to the first occurrence of maximum gap $\gamma(1)$ in the first order differences.
Not to be confused with OEIS \href{https://oeis.org/A002386}{A002386}. 
\end{itemize}
\vspace{1ex}

\noindent Row $\gamma$ in table~\ref{oxy11}
corresponds to entry \href{https://oeis.org/A048670}{A048670} in the OEIS encyclopedia  
(Jacobsthal function), when extended to all primes. The other rows do not have an entry in OEIS.
And row $\tau$ may not be a good candidate because it~depends on how you compute it. In my case, 
 after sieving all multiples of a prime $p$, I add $p$ back into the sequence, creating
 an offset for $\tau$, that increases with the number of sieved primes, as you move to the right
in table~\ref{oxy11}. Entries in table~\ref{oxy12} are not good OEIS
candidates, as they depend on which prime you start the reverse sieving. 

\section{Full list of short sequences and sequence categorization}\label{beurred}

I now shift perspective, working with the sequence of primes and other sequences with similar patterns.
In the end, what could validate Gilbreath's conjecture is not the random character of primes, but the opposite:
 the lack of randomness on some key metrics. More specifically, the unusual
 rarity of long runs in first order differences. Likewise, what would make an arbitrary sequence
 not to succeed despite meeting most requirements, is caused by specific non-random patterns that are 
incompatible with the non-random patterns found in primes, such as very long runs early on in first order differences followed by a
 steep cliff. This is the exact opposite of prime number behavior. It makes primes
 unlikely to fail Gilthreat's conjecture, or in case of failure, \textcolor{index}{failure points} are incredibly rare and the first
one must occur very late in the sequence, well beyond the first $10^{14}$ primes.

\subsection{Testing an exhaustive infinite list of admissible short sequences}

A sequence $S_n = (q_1,\dots, q_n)$ of positive integers is \textcolor{index}{admissible} if $q_1 = 2, q_2=3$, all subsequent elements are odd,
 and all differences in absolute value start with 1, except possibly the last one (the bottom of the triangle). 
The \textcolor{index}{seed} $\sigma$ is the second element in the row just above the bottom value.
Obviously, a sequence succeeds if and only if $\sigma\in \{0, 2\}$. Otherwise, $\sigma$ is an even number $>2$ 
and the sequence fails, unable to reach 1 at the bottom.  
Let $N(\sigma, n)$ be the number of \textcolor{index}{admissible sequences} $S_n$ with seed $\sigma$. 
Finally, by definition, a \textcolor{index}{valid sequence} is one that is admissible but also satisfies $q_k < q_{k+1}$ for all $k$. 
For finite sequences of any length $n$, we have:
\vspace{1ex}
\begin{theorem}\label{thxbar} For any fixed $n>2$, the number $N(\sigma, n)$ of admissible sequences $S_n$ with $n$ elements and seed~$\sigma$, is high when $\sigma=0$, highest when $\sigma=2$, then sharply decreases as $\sigma$ increases. The decrease is rather fast at first, then loses momentum and eventually stops. 
That is, for each $n$, there is a value $\sigma_0(n)$, such that $N(\sigma, n)$~is constant (not depending on $n$)
 if $\sigma\geq\sigma_0(n)$. In addition, for any fixed $n$,
%\begin{quote}
\vspace{1ex}
\begin{enumerate}
    \item[(1)] The number $N^+(n)$ of successful admissible sequences $S_n$ grows faster than $n!$ by orders of magnitude, while 
    the number $N^-(n)$ of failed ones is infinite.
   \item[(2)] The last element $q_n$ of any admissible sequence $S_n$, successful or not, 
satisfies $\sigma+3\leq q_n \leq \sigma + 2^n -1$.~Both the lower and upper bounds are attained for all $n,\sigma$,
and cannot be sharpened.  
  \item[(3)] By construction, admissible sequences satisfy $q_{k+1}\geq q_k$ while valid ones satisfy $q_{k+1} \geq q_k +2$. 
For any $n$ and $\sigma$, there are of course more
admissible sequences than valid ones. The ratio $\rho(\sigma, n)$ between these two numbers 
 oscillates around 0.30 for small values of $n$.
\end{enumerate}
%\end{quote}
\end{theorem}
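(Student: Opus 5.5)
The plan is to parametrize admissible sequences by building the triangle \emph{from the bottom up}, and to derive every part of the statement from that parametrization. Write $a_{k,j}$ for the $j$-th entry at level $k$, with level $0$ the sequence itself, so $a_{0,j}=q_j$. Admissibility fixes the left diagonal: $a_{0,1}=2$ and $a_{k,1}=1$ for $1\le k\le n-2$. Once the left diagonal is fixed, the whole triangle is determined by the right diagonal $r_k=a_{k,n-k}$ for $k=0,\dots,n-1$, and conversely. Going upward, row $k$ is recovered from row $k+1$ via $a_{k,j+1}=a_{k,j}\pm a_{k+1,j}$. Each $\pm$ is a binary choice, subject only to nonnegativity, parity, and, at level $0$, the constraint $q_{j+1}\ge q_j$. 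So I would reformulate $N(\sigma,n)$ as the number of admissible sign patterns, counted by a depth-first enumeration that starts from the row $(1,\sigma)$ at level $n-2$. This reconstruction is the engine for all three items.

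For item (2), I would argue by induction on levels, using $|x-y|\le\max(x,y)$. (The proof of Theorem~\ref{th1} uses $\min$ where $\max$ is the correct bound; monotonicity of depth is unaffected.)

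\emph{Lower bound.} Going upward, each rightmost entry is at least its left neighbour minus the entry below. This propagates $\sigma$ to the top row and gives $q_n-q_{n-1}\ge\ldots$, hence $q_n\ge\sigma+3$. The extremal case is the sequence with all first differences $2$ except for the forced jump of $\sigma$ near the end.

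\emph{Upper bound.} Each upward step gives $a_{k,j+1}\le a_{k,j}+a_{k+1,j}$. The depth of row $k$ is then at most the depth of row $k+1$ plus an amount that doubles at each level. Summing over levels yields the geometric bound $q_n\le\sigma+2^n-1$, attained by always choosing the $+$ sign.

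For sharpness I would exhibit, for each $(n,\sigma)$, the explicit all-minus and all-plus sign patterns and check that they satisfy the parity and left-diagonal constraints, using Theorem~\ref{th1} for the parity.

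For item (1), the failed count $N^-(n)$ is infinite because every even $\sigma>2$ admits at least one admissible completion, namely the extremal pattern from item (2), and there are infinitely many such $\sigma$. For the superfactorial growth of $N^+(n)$ I would count sign patterns from the top levels $(1,0)$ or $(1,2)$. At level $k$ there are about $n-k$ binary choices, and I would show that a positive fraction of them survive the nonnegativity and monotonicity constraints. This gives a lower bound of order $c^{n^2}$, which dominates $n!$.

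The remaining assertions are distributional and mostly numerical: the shape of $\sigma\mapsto N(\sigma,n)$, the stabilization beyond $\sigma_0(n)$, and the ratio $\rho(\sigma,n)\approx 0.30$ in item (3). For stabilization I would argue structurally. Once $\sigma$ exceeds a threshold depending only on $n$, every upward reconstruction is forced to take the $+$ sign near the right edge, since otherwise an entry would go negative. The admissible patterns therefore become translates of a fixed family, and their count no longer depends on $\sigma$. The $0.30$ figure I would only support by exact enumeration, which is finite for each $(n,\sigma)$ by item (2); the Python code of section~\ref{pythor} would be adapted to enumerate both admissible and valid sequences.

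I expect the main obstacle to be the lower bound on $N^+(n)$. The sign choices interact through the monotonicity constraint at level $0$, so showing that a constant fraction survives at \emph{every} level is not automatic. I would first try restricting to patterns whose rows stay within $\{0,2\}$ below the top two levels. These are automatically successful by Theorem~\ref{th1}, and they can be counted exactly.
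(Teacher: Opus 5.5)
There are two problems: the sharpness claim in item (2) is false as stated, and the superfactorial bound in item (1) is not reached by your argument. The remaining parts of your plan are sound, and in places sharper than the paper's. $N^-(n)=\infty$ follows directly from the admissible sequences $(2,3,\dots,3,\sigma+3)$, one for every even $\sigma$, with no equilibrium argument needed. Your forced-plus argument genuinely proves stabilization, giving $N(\sigma,n)=N^+(n-1)$ once $\sigma$ exceeds every non-rightmost entry; the paper only describes the equilibrium.

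\textbf{Item (2).} The sharpness step cannot be carried out, because the stated upper bound is off by one in the exponent.
\begin{itemize}
\item \emph{All-plus computation.} Run your all-plus reconstruction from $(1,\sigma)$. The row with $m$ entries is $(1,2,\dots,2^{m-2},2^{m-1}-2+\sigma)$. Level $1$ has $m=n-1$ entries, so $q_n=2+\sum_j d_j=\sigma+2^{n-1}-1$.
\item \emph{Exact maximum.} The relation $a_{k,j+1}\le a_{k,j}+a_{k+1,j}$ is monotone in both arguments. Induction therefore shows every entry of every admissible triangle is dominated by its all-plus counterpart, so $\sigma+2^{n-1}-1$ is the exact maximum.
\item \emph{Consequence.} The bound $\sigma+2^n-1$ is valid but never attained: for $n=3$, $q_3=\sigma+3$ is forced, and for $n=4$ the maximum is $\sigma+7$. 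So ``cannot be sharpened'' is false. The paper's proof makes the same slip: its extremal first differences $(1,2,4,\dots,2^{n-2},\sigma')$ have $n$ entries, whereas a sequence of length $n$ has only $n-1$.
\item \emph{Lower-bound extremal example.} For $n\ge 4$, taking all first differences equal to $2$ except the last gives $q_n=\sigma+2n-1$, not $\sigma+3$. The minimizer is $(2,3,3,\dots,3,\sigma+3)$, with level-$1$ row $(1,0,\dots,0,\sigma)$. This is what ``minus whenever nonnegativity permits'' produces; the literal all-minus pattern is infeasible, since $0-\sigma<0$.
\item \emph{Lower-bound derivation.} Your inequality ``rightmost entry $\ge$ left neighbour minus entry below'' does not carry $\sigma$ upward. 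What works is $\sigma\le\max_{2\le j\le n-1}d_j$, from $|x-y|\le\max(x,y)$ applied down the cone of the seed. This gives $q_n=3+\sum_{j\ge 2}d_j\ge\sigma+3$.
\end{itemize}

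\textbf{Item (1), the superfactorial bound on $N^+(n)$.} This is where the plan has a real gap, and your proposed fallback cannot close it.
\begin{itemize}
\item \emph{The $\{0,2\}$ family is too small.} Suppose the row at level $k+1$ has the form $(1,\ast,\dots,\ast)$ with $\ast\in\{0,2\}$, and you demand the same at level $k$. Then $a_{k,2}=1\pm 1$ is a free choice, but every later entry is forced: $a_{k,j+1}=a_{k,j}$ when $a_{k+1,j}=0$, and $a_{k,j+1}=2-a_{k,j}$ when $a_{k+1,j}=2$.
\item \emph{Counting.} The $\{0,2\}$ levels contribute one bit each. The level-$1$ row adds at most $2^{n-2}$ further choices, and level $0$ is forced by monotonicity. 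So the family has only $2^{O(n)}$ members, eventually fewer than $n!$.
\item \emph{General obstruction.} Depth is non-increasing, so patterns whose entries exceed $2$ only on the top $t$ levels number at most $2^{O(tn)}$. Any family beating $n!$ must therefore let entries larger than $2$ persist for order $\log n$ levels. A $c^{n^2}$ bound needs free signs at a positive proportion of positions on a positive proportion of levels. That is exactly your unproved assertion that `a positive fraction survive'.
\item \emph{Comparison with the paper.} The paper does not prove this part either; it only fits $A^nB^{n^2}$ to the ratios in table~\ref{table:tabuli1}.
\end{itemize}
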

\vspace{1ex}
{\bf Proof}: For any fixed $\sigma$, the first order differences of any admissible sequence $S_n$ stays within these two extremes
%\begin{itemize}
%\item Global minimum differences: $(1, 0, 0, 0, \dots, 0, 0, \sigma)$
%\item Global maximum differences: $(1, 2, 4, 8,\dots, 2^{n-2}, \sigma')$
%\end{itemize}
instances: $(1, 0, 0, 0, \dots, 0, 0, \sigma)$ and $(1, 2, 4, 8,\dots, 2^{n-2}, \sigma')$,
where $\sigma' = 2^{n-1} +\sigma -2$. Thus $q_n \geq \sigma+3$ and the bound cannot be increased.
Similarly, the upper bound for $q_n$ is
$$
q_n \leq  2 + \Bigg(\sum_{k=0}^{n-2} 2^k\Bigg) + (2^{n-1} + \sigma - 2) = \sigma + 2^n - 1.
$$
This upper bound is also always attained and cannot be lowered. Thus:
$
\sigma+ 3 \leq q_n \leq \sigma + 2^n - 1. 
$
Since $q_1=2$, $q_2=3$ and $q_3\leq q_4\leq \dots \leq q_n$, the
 number $N(\sigma, n)$ is always finite. 

\noindent For a fixed $n$, at equilibrium, when $N(\sigma, n)$ is constant 
regardless of $\sigma>\sigma_0(n)$,
 the following happens. If~you remove 
the last (rightmost) element 
in each sequence
 in two different sets corresponding to two distinct $\sigma, \sigma'>\sigma_0(n)$, the
resulting two sets
 of truncated sequences are perfectly identical in all respects. 
In short, the sequence generator is stuck: it has no more leeway to create new sequences,
 other than changing the last element $q_n$ in each, as a function of the seed.
At that point, $q_n$ largely dominates $q_1, q_2$ and so on. It means that the number of sequences is now 
the same for each new $\sigma>\sigma_0(n)$. Therefore, the number of failed sequences at any given $n$,
 computed as the aggregate over $\sigma = 4, 6, 8$ and so on, is infinite. 
The number of successful ones, aggregated over $\sigma=0,2$, is finite.

\noindent I now work on the remaining points in the theorem.
\vspace{1ex}
\begin{itemize}
\item The fact that $N(0, n) < N(1, n)$ is because $\sigma=1$ has more leeway than $\sigma=0$ to move up from the very bottom
of the triangle. This early advantage back-propagates from bottom to top at each level $n$ in the triangle.
\item For a fixed $\sigma$, as $n$ increases, the growth for $N(\sigma, n)$ appears to be
of the order $A^n B^{n^2}$, where $A, B$ are two constants barely depending on $\sigma$ (if at all), with $B\approx 1.45$.  
See table~\ref{table:tabuli1} with explanations.
\item For $\rho(\sigma, n)$ values, see table~\ref{table:tabuli12} with explanations.
It is not clear if this ratio stays around $0.30$ as $n$ gets larger, or if it decreases, possibly converging to 0
 while being about the same across all $\sigma$, at each level $n$. 
\end{itemize}
\vspace{1ex}
Of particular interest are the values $\sigma_0(n)$. They tell you
at what $\sigma$ the equilibrium is reached, given $n$. I~haven't had time to look at this yet. 
$\square$
\vspace{2ex}\\
\noindent Table~\ref{table:tabuli1} shows the ratios $N(\sigma, n)/N(\sigma, n-1)$. The ratio of successive ratios
 (the multiplicative equivalent of~first order differences), within each column, is about 1.45 and shows little variations.
This is where the number 1.45 comes from in the proof of theorem~\ref{thxbar}.  
Table~\ref{table:tabuli12} shows the proportion of valid sequences among admissible~ones, for various $\sigma$ and $n$.
That proportion seems to not depend on $\sigma$, and possibly not even on $n$.

%------------------------------------------------------------------
\begin{table}[H]
\centering
\renewcommand{\arraystretch}{1.0}
\small
\parbox{.45\linewidth}{
\centering

\begin{tabular}{c|rrrrr} 
\hline 
   &  &   &        \\  [-2.2ex]
%\hline
$n$  & $\sigma = 0$ & $\sigma = 1$ & $\sigma = 2$ & $\sigma = 3$  & $\sigma = 4$ \\
\hline
\hline
  &  &  &    \\  [-2ex]
5&	3.50&	3.33&	2.50&	2.50&	2.50\\
6&	5.14&	4.60&	3.60&	3.40&	3.40\\
7&	7.06&	6.93&	5.33&	5.06&	4.82\\
8&	10.57&	9.89&	8.39&	7.64&	7.35\\
9&	15.06&	14.74&	12.60&	11.94	&11.33\\
10&	21.96&	21.57&	19.11&	18.16&	17.56\\
11&	31.81&	31.70&	28.29&	27.72	&26.80\\
\hline
\end{tabular}
%\caption{Backend table {\fontfamily{ptm}\selectfont hash stem} (extract)}
\caption{Growth ratios $N(\sigma,n)/N(\sigma,n-1)$}
\label{table:tabuli1}

}
%\hfill
\hspace{-0.5cm}
\parbox{.45\linewidth}{
\centering

\begin{tabular}{c|rrrrr} 
\hline 
   &  &   &        \\  [-2.2ex]
%\hline
$n$  & $\sigma = 0$ & $\sigma = 1$ & $\sigma = 2$ & $\sigma = 3$  & $\sigma = 4$ \\
\hline
\hline
  &  &  &    \\  [-2ex]
5&	0.428&	0.300&	0.400&	0.400&	0.400\\
6&	0.333&	0.326&	0.333&	0.352&	0.352\\
7&	0.342&	0.291&	0.343&	0.314&	0.329\\
8&	0.308&	0.303&	0.303&	0.313&	0.313\\
9&	0.301&	0.298&	0.299&	0.298&	0.303\\
10&	0.296&	0.295&	0.293&	0.296&	0.296\\
11&	0.296&	0.288&	0.294&	0.285&	0.296\\
\hline
\end{tabular}
%\caption{Backend table {\fontfamily{ptm}\selectfont hash stem} (extract)}
\caption{Prop. $\rho(\sigma, n)$ of valid sequences}
\label{table:tabuli12}
}
\end{table}
%-----------------------------------

\subsection{Efficient sequence corridors}\label{effco}

I now introduce the concept of sequence corridor. 
An \textcolor{index}{efficient sequence corridor} is a narrow, bounded set of 
admissible sequences containing the one of particular interest, 
and where the density of failing sequences if low. The corridor is said to be \textcolor{index}{optimum}
 if it contains no failing sequences.
In our case, optimum means that for each triangle in the corridor, all the rows start with 1, including the bottom one. 
Failure, for a specific sequence or triangle, is when the value in the bottom row is an odd integer $> 1$. 
In our setting, all rows start with 1 except possibly the bottom one, regardless of success status.

I now build an optimum corridor that contains the sequence consisting of the first $n$ primes and no other integers, 
for $n=11$. Because $n$ if finite, this corridor contains only a finite number of sequences. It can easily be extended
 to $n=\infty$. Proving optimality at $n=\infty$ would imply that Gilbreath's conjecture is true. Sequences
in the corridor are called \textcolor{index}{corridor sequences}, and built as follows:
\vspace{1ex}
\begin{itemize}
\item A corridor sequence $S_n = (q_1, \dots, q_n)$ starts with $q_1=2$ and $q_2=3$. Then, followed by strictly increasing odd numbers. 
The rows in the corresponding triangle, except possibly the bottom one, must start with 1. That way, the sequence is both admissible and valid. 
\item The first $m$ terms $q_1,\dots, q_m$ with $m<n$ constitute the warm-up period. No extra conditions are imposed upon them. 
The parameter $m$ is called the \textcolor{index}{offset}. The corridor is governed by a
 smooth function $f$ called the \textcolor{index}{corridor function}, indicating the general trend of the sequence.
Here, $f(x) = x\log x$.
\item For all $k>m$, the values must satisfy
\begin{equation}
   \alpha_1 f(k) \leq  q_k \leq \alpha_2 f(k), \quad \quad \beta_1 q_{k-1} \leq q_k \leq \beta_2 q_{k-1}\label{eureke}
\end{equation}
The positive coefficients $\alpha_1,\alpha_2,\beta_1, \beta_2$ along with $m$ are called the \textcolor{index}{corridor parameters}. 
Also, in our case, $\alpha_1 = \beta_1 = 1$, so I focus on $\alpha_2$ and $\beta_2$ only.
\end{itemize}
\vspace{1ex}
The parameters must be chosen so that most sequences in the corridor succeed, and that the target sequence
of particular interest (here, the prime
 number sequence) is also in the corridor. By contrast, table~\ref{table:rabhyg43x} features a sequence outside the corridor for multiple reasons: not a valid one because not strictly increasing, also growing too fast, and finally the success at the bottom of the triangle is preceded by 3 failures.
Now I can state have~the following result. 
\vspace{1ex}
\begin{theorem}\label{thxcorr} The corridor
with offset $m=5$, parameters $\alpha_1 = 1.0,\, \alpha_2 = 1.3,\, \beta_1 = 1.0, \,\beta_2 = 1.5$,
 and governed by the function $f(x)=x \log(x)$,  contains the prime number sequence. Also,
 for all sequences of length $n=11$~in the corridor, only one is failing; it is featured
in table~\ref{table:gilb3}.
\end{theorem}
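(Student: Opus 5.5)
The statement has two parts: the prime sequence lies in the corridor, and a census of all corridor sequences of length $n=11$ finds exactly one failure. Both parts are finite checks, so the plan is a direct verification organized to keep the enumeration small.

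\textbf{Membership of the primes.} I would check the constraints~(\ref{eureke}) for $k=6,\dots,11$. The first $m=5$ primes are in the warm-up period and impose nothing beyond admissibility. Admissibility of the primes up to $31$ is read off table~\ref{table:gilb1}. The ratio condition $1\le p_k/p_{k-1}\le 1.5$ holds trivially, since $13/11$, $17/13$, $19/17$, $23/19$, $29/23$ and $31/29$ are all below $1.5$. For the envelope I would tabulate $k\log k$, using the natural logarithm. For instance, $6\log 6\approx 10.75\le 13\le 13.98$, and $11\log 11\approx 26.38\le 31\le 34.29$. The intermediate values $k=7,\dots,10$ are checked the same way.

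\textbf{Enumerating the corridor.} I would build the sequences by depth-first search on the last element, which is legitimate by theorem~\ref{abc}. Appending $q_{k+1}$ adds exactly one new entry to each level and leaves the sub-triangle on the left unchanged. At each step I would do the following.
\begin{itemize}
\item Keep only the odd $q_{k+1}>q_k$.
\item For $k+1>5$, require $\lceil f(k+1)\rceil\le q_{k+1}\le\min\bigl(1.3f(k+1),\,1.5q_k\bigr)$.
\item Discard the candidate unless every new leftmost entry, except the final bottom one, equals $1$.
\item For the warm-up terms, bound $q_k$ using theorem~\ref{thxbar}(2), which gives $q_k\le\sigma+2^k-1$. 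Combined with $q_5\ge 2q_6/3$ and $q_6\le 1.3f(6)$, this leaves only finitely many choices.
\end{itemize}
Theorem~\ref{th1} gives a useful pruning rule: once $\gamma(k_0)=2$ along a branch, every extension of that branch is successful. Such branches only need to be counted and never need to be failure-checked. Finally, at length $11$ I would record the bottom value of each surviving sequence and list those for which it exceeds $1$.

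\textbf{Main obstacle.} The enumeration is routine; the difficult step is certifying that exactly one failure occurs. This needs an exhaustive and correct search, with careful handling of boundary rounding in the conditions $q_k\le 1.3\,k\log k$ and $q_k\le 1.5\,q_{k-1}$, and of the loose warm-up range. I would run the Python search of section~\ref{pythor} and cross-check by brute force over all odd $5$-tuples below the warm-up bound. I would then confirm by hand that the single failing sequence is the one shown in table~\ref{table:gilb3}, verifying that its triangle has $1$ on the left at every level except the bottom.
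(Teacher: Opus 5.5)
\textbf{Gap: the pruning rule taken from Theorem~\ref{th1} is false.} Your enumeration plan is sound, but this rule would wreck the search. Theorem~\ref{th1} is about a \emph{completed} sequence: $\gamma(k_0)$ is the maximum of the whole row at level $k_0$. When you append $q_{k+1}$, every level gains a new rightmost entry, which is exactly what you use Theorem~\ref{abc} for. That entry is governed by the new gaps, not by the old row maxima, so $\gamma(k_0)=2$ for a prefix says nothing about its extensions.

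Concretely, $q_3=5$ is forced in every corridor sequence, since level 2 of $(2,3,q_3)$ is $q_3-4$, which must equal $1$. The prefix $(2,3,5)$ already has level 1 equal to $(1,2)$, so $\gamma(1)=2$. Applied literally, your rule prunes the entire tree at depth 3 and certifies that no corridor sequence fails, which contradicts the statement you are proving.

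The failing sequence itself also refutes the rule at a later stage. Its prefix $2,3,5,9,11,13,15,17$ has levels $(1,2,4,2,2,2,2)$ and $(1,2,2,0,0,0)$, so $\gamma(2)=2$. Yet appending $25$ puts an $8$ into level 1, and appending $27,29$ drives the bottom to $3$ (Table~\ref{table:gilb3}).

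The fix is to drop the rule and carry every branch to length 11, requiring the new bottom entry to equal $1$ for every prefix of length at most 10. No pruning is needed anyway: $q_{11}\le 1.3\cdot 11\log 11<34.3$, and monotonicity together with $q_6\le 1.3\cdot 6\log 6<14$ already confines the warm-up terms. The detour through part~(2) of Theorem~\ref{thxbar} is unnecessary, and $q_5\ge 2q_6/3$ is a lower bound, so it does not contribute to finiteness.

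\textbf{Comparison with the paper.} With that correction, your route differs genuinely from the paper's and is tidier. The paper enumerates backwards from the bottom all roughly $10^7$ admissible sequences of length 11 with seed $\sigma\in\{4,6,8,10\}$ and filters them by the corridor constraints. It then needs a separate argument, via Theorem~\ref{thxbar}, that no corridor sequence can have $\sigma>10$. Your forward search enumerates the corridor directly, a set of a few hundred sequences, so no exclusion argument over $\sigma$ is required.

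\textbf{Scope of the first claim.} The paper treats the whole prime sequence, using Nagura's theorem for $p_k\le 1.5\,p_{k-1}$ and a numerical check for the envelope. Your check of $k=6,\dots,11$ covers only the first 11 primes. If the statement is read for all $k>5$, add two bounds:
\begin{itemize}
\item Rosser's bound $p_k>k\log k$ for the lower envelope.
\item The Rosser--Schoenfeld bound $p_k<k(\log k+\log\log k)$ for $k\ge 6$, which gives $p_k<1.3\,k\log k$ for $k\ge 404$, with a finite check below that.
\end{itemize}
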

\vspace{1ex}
{\bf Proof}: To show that the prime number sequence is in the corridor, replace
$q_k, q_{k-1}$ by the successive primes $p_k, p_{k-1}$ in~(\ref{eureke}) 
and check that the formula holds when $k>m$ with the prescribed set of parameters. 
The~rightmost inequality in~(\ref{eureke}) holds thanks to \textcolor{index}{Nagura's theorem} (year 1952),
that guarantees the existence of a prime 
between $p$ and $1.2 \, p$ for any prime $p>23$. 
Sharper bounds have been found more recently, including Schoenfeld (1976),
Dusart~\cite{pdx18} and Axler~\cite{colgate22}. 
I also checked that the prime sequence 
satisfies the corridor requirements for
the first $10^5$ primes. See the function \texttt{test\_primes} in
the Python code.

\noindent To prove that the sequence in table~\ref{table:gilb3} is the only exception with $n=11$ in the efficient corridor, 
I generated all admissible sequences of length $n$, with $\sigma\in \{4, 6, 8, 10\}$. 
There are about $10^7$ of them. Besides the example in table~\ref{table:gilb3}  which has $\sigma=4$, I found
no other exceptions in the corridor. 
Then, by digging into the proof of theorem~\ref{thxbar}, I concluded that exceptions are not possible 
if $\sigma>10$ (assuming $n=11$), under the corridor requirements. 

\noindent Note that success corresponds to
$\sigma\in\{0, 2\}$. By skipping these two values in my test as I was looking for failures only, and stopping at $\sigma=10$, 
it reduced compute time by a factor at least $10$.  $\square$
\vspace{2ex}\\
\noindent  Clearly, the main cause of failure is unexpected spikes in the sequence, especially early on. 
Long sub-sequences with identical values is another major cause of failure, but they are ruled out as non valid. 
Large spikes are eliminated thanks to the corridor requirements. For the prime numbers, spikes are
called \textcolor{index}{prime gaps} and are benign compared to other sequences in the corridor. 
The first occurrence of successive prime gap records are listed in OEIS \href{https://oeis.org/A002386}{A002386}. The primes concerned are shown in the first list below.
\vspace{1ex}
\begin{adjustwidth}{0.5cm}{0.5cm}
2, 3, 7, 23, 89, 113, 523, 887, 1129, 1327, 9551, 15683, 19609, 31397, 155921, 360653, 370261, 492113, 1349533, 1357201, 2010733, 4652353, 17051707, 20831323, 47326693, 122164747, 189695659, 191912783, 387096133, 436273009, 1294268491, $\dots$
\end{adjustwidth}
\vspace{1ex}
\begin{adjustwidth}{0.5cm}{0.5cm}
1, 2, 4, 6, 8, 14, 18, 20, 22, 34, 36, 44, 52, 72, 86, 96, 112, 114, 118, 132, 148, 154, 180, 210, 220, 222, 234, 248, 250, 282, 288, 292, 320, 336, 354, 382, 384, 394, 456, 464, 468, 474, 486, 490, 500, 514, 516, 532, 534, 540, 582, 588, 602, 652,$\dots$
\end{adjustwidth}
\vspace{1ex}
They are well spaced out, and the record values increase slowly, see 
OEIS \href{https://oeis.org/A005250}{A005250} with extract just above.
These two facts combined make the gaps too weak to cause
anything %other than 
but local disturbances quickly absorbed in the triangle, in a reasonable
number of levels well before we reach the bottom. Of cause, large gaps followed by tiny prime differences (such as twin primes) 
take more time to dissipate, and are worth investigating. For instance, between 20,831,323 and 20,831,533, there is a massive gap of 210 composite numbers. Yet, directly following 20,831,533, the next prime is 20,831,537, creating an incredibly small gap of just 4.
More on this~topic on the Prime Gap Project website, \href{https://primegap-list-project.github.io/}{here}. 
The Polymath Project proved 
that there are infinitely many prime~gaps smaller than 246.

However, a real concern is the presence of very long runs (called \textcolor{index}{plateaus}) in the first order differences (the prime gaps), followed by a steep cliff (a spike), followed by more plateaus and cliffs with the worst possible distribution of elevations. Especially
when happening early on. The only failing sequence with $n=11$ in the corridor  (table~\ref{table:gilb3}) exhibits such a pattern.
Yet, prime numbers are notorious for lacking such patterns due to
congruential demands. For instance, there is only one run of the form
$p, p+2, p+4$ where all three numbers
 are prime, because one of them must be a multiple of 3. The only
possibility is $3, 5, 7$ because $3$ is the only prime divisible by 3. 
The topic is related to primes in
arithmetic progressions and discussed in \cite{trb76d,napel9}. 
Prime runs are listed in OEIS \href{https://oeis.org/A333254}{A067090}. I summarize all these observations in the next statement. 
\vspace{1ex}
\begin{statement}\label{thxch656z} 
If the prime number sequence  fails (that is, if Gilbreath's conjecture is not true), it would not be caused by unusually large gaps in prime numbers, nor by the asymptotic distribution of primes, nor by the extreme rarity of long runs in prime gaps (the maximum length of arithmetic progressions consisting of primes only), nor by how early
 such runs occur. 
\end{statement}

\subsection{Hidden patterns causing failures}

I now discuss a very unusual sequence  in the corridor: the only one that fails when $n=11$. I haven't tested larger values of $n$ due to
computational complexity. Despite reassurances provided by statement~\ref{thxch656z}, there is~a possibility that some unknown insanely rare pattern 
could manifest after a colossal number of terms in the prime number sequence.

The unfortunate, exotic, and extremely rare sequence in table~\ref{table:gilb3} is
of critical importance. It is the only valid one of length $n=11$ out of infinitely many, that fails
while meeting the requirements to be in the efficient corridor. 
One may argue that the first differences have a long run incompatible with primes,
 and a gap larger than any in the first 11 primes. 
%Discarding this sequence based
%on that (by adding no long runs early on in the corridor requirements) is like
%burying your head in the sand. 
However the \textcolor{index}{failure point} happens at 29, not early on. 
Replacing 29 by either 27 or 31 leads to success, despite
the long run and large gap still there. So, something else
is causing the failure. But what?
Think about it before reading on. 

In his paper on Gilbreath's conjecture~\cite{zachch23}, Chase uses a narrow corridor with slow growth. The growth---too slow to encompass the prime number sequence---guarantees that large spikes take a very long time to show up, increasing the proportion of succeeding
sequences in his corridor. He then proves that his set of failing sequences has zero density in his corridor.
By contrast, my corridor grows at the speed of primes. Yet, large gaps are not the issue. 
After all, the valid sequence 2, 3, 5, 9, 17, \dots with $q_{k+1} = 2q_{k}-1$ succeeds, but is not in my corridor. 
As a side note, this example is the most extreme of all valid successful sequences, in terms of growth. 

To dig further into the problem, look at table~\ref{table:gilb3}. What creates the failure is not just the number 29,
 even though prior to reaching it, the sequence was successful despite a major issue. 
The large gap of 8 between 17 and 25, combined with a run of identical tiny gaps right after, is an issue. 
You can fix it by replacing 29 either with 27 or 31. 
But~there is more to it. What compounds the problem is  the segment 11, 13, 15, 17, resulting in a long run of 
2 (smallest possible gap) followed by the large gap of 8. The first issue causes 6 (a high value this early) to
trickle down a few levels, more than expected. The second issue allows 6 to propagate down  
much further, not leaving enough levels to fully recover from it, thus crashing at the bottom.

Actually, even the segment 3, 5 contributes to the demise, introducing a 0 near the bottom and next to~1. This, combined with a 4 
next to it, itself arising from the above issues, kills the deal. It is not possible to change 3, 5
to get rid of that problem, without introducing a new fatal issue, unless you also change some other values. 
In the end, not only large values in the triangle are a liability, but also the 0's. 
Especially when they pop up too close to the left at a level too close to the bottom, with many consecutive 0's followed by a large value, followed by more 0's, followed by another large but different value in-between 0 and the previous high, and so on. 
\vspace{1ex}

\begin{comment}
\begin{table}[H]
\centering
\renewcommand{\arraystretch}{1.0}
\small
\setlength{\tabcolsep}{1.5pt}
%\begin{tabular}{ccccccccccccccccccccccc} 
\begin{tabular}{*{23}{>{\centering\arraybackslash}p{0.25cm}}}
 %\hline
 % &  &  &        \\  [-2.5ex]
 % 1--grams & 2--grams & 3--grams & 4--grams \\%[0.5ex] 
% \hline 
\hline 
   &  &   &        \\  [-2.2ex]
%\hline
 & 2 & & 3 & & 5 & & 9 & & 11 & & 13 & & 15 & & \textcolor{red}{17} & & \textcolor{red}{25} & & 27 & & 29\\
\hline
  &  &  &  & & & & & & & & & &  & & & & & &   \\  [-2ex]
%10 & & 1&  & 2 & \\
8 & &  1 & &  2 & &  4 & &  2 & &  2 & &  2 & &  2 & &  \textcolor{red}{8} & & 2 &  & 2\\   
6 & & & 1 & & 2 & & 2 & & 0 & & 0 & & 0 & & \textcolor{red}{6} & & 6 & & 0 \\  
6 & & & & 1 & & 0 & & 2 & & 0 & & 0 & & \textcolor{red}{6} & & 0 & &6 \\
6 & & & & & 1 & & 2 & & 2 & & 0 & & \textcolor{red}{6} & & 6 & & 6\\
6 & & & & & & 1 & & 0 & & 2 & & \textcolor{red}{6} & & 0 & & 0\\
6 & & & & & & & 1 & & 2 & & 4 & & \textcolor{red}{6} & & 0\\
6 & & & & & & & & 1 & & 2 & & 2 & & \textcolor{red}{6}\\
4 & & & & & & & & & 1 & & 0 & & \textcolor{red}{4}\\
4 & & & & & & & & & & 1 & & \textcolor{red}{4}\\
3 & & & & & & & & & & & \circled[black2]{\textcolor{red}{3}}\\  % orange2
 %\hline
\end{tabular}
%\caption{Backend table {\fontfamily{ptm}\selectfont hash stem} (extract)}
\captionsetup{justification=centering}
\caption{Unusually rare instance of\\failure within the efficient corridor}
\label{table:gilb3}
\end{table}
\end{comment}

\begin{table}[H]
\centering
\renewcommand{\arraystretch}{1.0}
\small
\setlength{\tabcolsep}{0.8pt}
\parbox{.45\linewidth}{
\centering

\begin{tabular}{*{23}{>{\centering\arraybackslash}p{0.25cm}}}
 %\hline
 % &  &  &        \\  [-2.5ex]
 % 1--grams & 2--grams & 3--grams & 4--grams \\%[0.5ex] 
% \hline 
\hline 
   &  &   &        \\  [-2.2ex]
%\hline
 & 2 & & 3 & & 5 & & 9 & & 11 & & 13 & & 15 & & \textcolor{red}{17} & & \textcolor{red}{25} & & 27 & & 29\\
\hline
  &  &  &  & & & & & & & & & &  & & & & & &   \\  [-2ex]
%10 & & 1&  & 2 & \\
8 & &  1 & &  2 & &  4 & &  2 & &  2 & &  2 & &  2 & &  \textcolor{red}{8} & & 2 &  & 2\\   
6 & & & 1 & & 2 & & 2 & & 0 & & 0 & & 0 & & \textcolor{red}{6} & & 6 & & 0 \\  
6 & & & & 1 & & 0 & & 2 & & 0 & & 0 & & \textcolor{red}{6} & & 0 & &6 \\
6 & & & & & 1 & & 2 & & 2 & & 0 & & \textcolor{red}{6} & & 6 & & 6\\
6 & & & & & & 1 & & 0 & & 2 & & \textcolor{red}{6} & & 0 & & 0\\
6 & & & & & & & 1 & & 2 & & 4 & & \textcolor{red}{6} & & 0\\
6 & & & & & & & & 1 & & 2 & & 2 & & \textcolor{red}{6}\\
4 & & & & & & & & & 1 & & 0 & & \textcolor{red}{4}\\
4 & & & & & & & & & & 1 & & \textcolor{red}{4}\\
3 & & & & & & & & & & & \circled[black2]{\textcolor{red}{3}}\\  % orange2
 %\hline
\end{tabular}
%\caption{Backend table {\fontfamily{ptm}\selectfont hash stem} (extract)}
\captionsetup{justification=centering}
\caption{Unusually rare instance of\\failure within the efficient corridor}
\label{table:gilb3}
}
%\hfill
\quad\quad
\parbox{.45\linewidth}{
\centering

\begin{tabular}{*{23}{>{\centering\arraybackslash}p{0.25cm}}}
 %\hline
 % &  &  &        \\  [-2.5ex]
 % 1--grams & 2--grams & 3--grams & 4--grams \\%[0.5ex] 
% \hline 
\hline 
   &  &   &        \\  [-2.2ex]
%\hline
 & 2 & & 3 & & 5 & & 7 & & 9 & & \textcolor{red}{15} & & \textcolor{red}{41} & & 43 & & 47 & & 43 & & 27\\
\hline
  &  &  &  & & & & & & & & & &  & & & & & &   \\  [-2ex]
%10 & & 1&  & 2 & \\
26 & &  1 & &  2 & &  2 & &  2 & &  6 & &  \textcolor{red}{26} & &  2 & &  4 & & 4 &  & 16\\  %
24 & & & 1 & & 0 & & 0 & & 4 & & \textcolor{red}{20} & & 24 & & 2 & & 0 & & 12 \\ %
22 & & & & 1 & & 0 & & 4 & & \textcolor{red}{16} & & 4 & & 22 & & 2 & &12 \\%
20 & & & & & 1 & & 4 & & \textcolor{red}{12} & & 12 & & 18 & & 20 & & 10\\ %
10 & & & & & & \circled[black2]{\textcolor{red}{3}} & & \textcolor{red}{8} & & 0 & & 6 & & 2 & & 10\\ %
8 & & & & & & & \circled[black2]{\textcolor{red}{5}} & & \textcolor{red}{8} & & 6 & & 4 & & 8\\ %
4 & & & & & & & & \circled[black2]{\textcolor{red}{3}} & & 2 & & 2 & & 4\\
2 & & & & & & & & & 1 & & 0 & & 2\\ 
2 & & & & & & & & & & 1 & & 2\\
1 & & & & & & & & & & & \circled[green2]{1}\\
 %\hline
\end{tabular}
%\caption{Backend table {\fontfamily{ptm}\selectfont hash stem} (extract)}
\captionsetup{justification=centering}
%\caption{Prime numbers, one of many\\sequences verifying the conjecture}
\caption{4 successes, then 3 failures then 3\\successes (sequence outside the corridor)}
\label{table:rabhyg43x}

}
\end{table}
%\end{comment} %-------------------------------------------------------

large integer. Each digit is the equivalent of a row in the triangle. Usually, the
impact is local. But in rare instances, it propagates back to the first digit, changing its value. 
In the same way that adding 29 to the right to the sequence  2, 3, 5, 9, 11, 13, 15, 17, 25, 27
changes the bottom value in the triangle from 1 to 3, while adding 27 or 31 preserves the 1.
I conclude with the following statement. 
\vspace{1ex}
\begin{statement}\label{thxuuu6vb} 
Removing from the corridor all the sequences with the type of plateau and cliff combinations just described, would reduce
the already low density of failing sequences by an additional order of magnitude. If these
\noindent A good analogy is when adding a small integer to a large number. The carryover operations modify the digits in the 
combinations are the only non-trivial issue, then all sequences in the upgraded corridor would succeed, including the prime numbers
which would still meet the additional requirement to stay in the new corridor. 
\end{statement}

\begin{table}[ht]
\centering
\renewcommand{\arraystretch}{1.1}
\small
\begin{tabular}{l cccc cccc} % Total 7 underlying columns
\toprule[1pt] 
% First Row: Broadest category spanning all data columns
%Method & \multicolumn{6}{c}{Experimental Evaluation} \\ 
%\cmidrule(r){2-7}

% Second Row: Multiple distinct groups of multiple columns
 & \multicolumn{3}{c}{Successful Sequences} & \multicolumn{3}{c}{Failed Sequences} \\
\cmidrule(lr){2-4} \cmidrule(l){5-7}

% Third Row: Individual column headers
$n$ & Admissible & Valid & Corridor & Admissible & Valid & Corridor \\
\midrule

% Data Rows
6  & 82 & 27 & 4 & $\infty$ & $\infty$ & 0 \\
7  & 573 &  180  & 7  &  $\infty$ & $\infty$ & 0 \\
8  & 5,839 & 1,786 & 18  & $\infty$ & $\infty$ & 0 \\
9  & 86,921  & 26,094 & 46  & $\infty$ & $\infty$ &  1\\
10  & 1,890,317 &  559,127 & 121 & $\infty$ & $\infty$ &  0\\
11  & 60,013,894  & 17,535,396 & 345 & $\infty$ & $\infty$ &  1\\
\bottomrule[1pt]
\end{tabular}
\caption{Sequence count depending on type and length $n$}
\label{tab:multigrpo8}
\end{table}

\noindent Table~\ref{tab:multigrpo8} shows how rare failing sequences are in the corridor, at least up to length $n=11$. 
Empirical evidence suggests that failure rate decreases when $n$ increases, as the failure point usually occurs early in failing sequences 
in the corridor. In short, while the number of failing sequences may increase with $n$, even exponentially fast,
the number of successful ones increases at a much faster pace, by orders of magnitude. I summarize this
intuition in the following conjecture. 
\vspace{1ex}
\begin{conjecture}\label{thxuuu6vb} 
Let $g(n)$ and $h(n)$ be the number of sequences of length $n$ respectively succeeding and failing~in our corridor. 
Then both $g(n)$ and $h(n)$ grow indefinitely, with $h(n)/g(n)\rightarrow 0$. 
\end{conjecture}
\vspace{1ex}
The above claim depends heavily on the corridor and its threshold parameters. A different corridor
configuration or parameter set may lead to very different conclusions. In some cases, $h(n)$ and even $g(n)$ may stay finite at all times.
I found a similar behavior in a different corridor with growth $n^{3/2}$ instead of $n\log n$, but again
the choice of corridor parameters had a big impact. Of course, the lower the proportion of failures, the better. 
With one caveat: if failures are extremely rare, it may be difficult to figure out what could cause them, besides trivial patterns.

\section{Synthetic sequences with Poisson gaps to mimic prime numbers}\label{synthetia}

I discuss the most interesting insights first, before detailing my expanded search for failures and the implications for the prime number sequence. From the mathematician's vantage point, it looks like a war is taking place~in the corridor, with prime numbers using every
possible trick to win it. Not with deep theoretical machinery, but rather, simple, deterministic patterns that make the primes eminently non random. In particular:
\vspace{1ex} 
\begin{itemize}  
\item Besides the obvious, the feature most likely to kill a sequence is the pattern 2 4 2 4 2 4 and so on, 
preceded or followed by a moderately large value, in the first order differences. Or buried deeper in lower levels in the triangle. It is almost a requirement. 
\item In prime numbers, the first order differences are known as prime gaps, controlled by congruential rules. 
Infinitely many patterns are banned or can happen only once at the very beginning. The best example is arguably 2 4 2 4 2 4 and so on. 
So the prime number sequence will never face the issue just described. 
\item If the pattern 2 4 2 4 2 and so on appears at a lower level in the triangle, surrounded by a big enough number on either side to guarantee failure at the bottom level, the end result upstream (at the sequence level) is a much bigger number or too fast of a growth that
kicks the sequence outside the corridor. Thus, it cannot be found either at lower levels in prime numbers, whose sequence 
is inside the corridor. 
\end{itemize}   
\vspace{1ex}

\noindent What determines failure or not is the length of the pattern in question, how close to the left it starts, at which level, and 
 how big the two surrounding numbers are (left and right). Of course, the sequence may still succeed if the values
to the left are such that they can dilute the generated impulse fast enough before it propagates to the bottom.   
A good example is the sequence consisting of the primes $2, 3,\dots, 47$ followed by 49,  53,  55,  65. 
 It fails not because the gap between 55 and 65 is too large. Replace (55, 65) by (53, 67) and it succeeds despite increasing
 that gap and creating a duplicate of 53.
Yet the gap 55--65, while modest, is large enough that when combined with the preceding numbers in
the original sequence, it leads to failure. The suggested replacement breaks a combination of bad cycles. Barely, but
 just enough to move from failure to success. The good looking sequence fails, its ugly sister succeeds. See table~\ref{table:gilb4}.

\begin{table}[H]
\centering
\renewcommand{\arraystretch}{1.0}
\small
%\footnotesize
%\scriptsize
\setlength{\tabcolsep}{2pt}
%\begin{tabular}{ccccccccccccccccccccccc cccccccc ccccccc} 
\begin{tabular}{*{38}{>{\centering\arraybackslash}p{0.25cm}}}
 %\hline
 % &  &  &        \\  [-2.5ex]
 % 1--grams & 2--grams & 3--grams & 4--grams \\%[0.5ex] 
% \hline 
\hline 
   &  &   &        \\  [-2.2ex]
%\hline
% & 2 & & 3 & & 5 & & 9 & & 11 & & 13 & & 15 & & 17 & & 25 & & 27 & & 29\\
& 2 && 3 && 5 && 7 && 11&& 13&& 17&& 19&& 23&& 29&& 31&& 37&& 41&& 43&& 47&& 49&& 53&& 55&& 65\\
\hline
  &  &  &  & & & & & & & & & &  & & & & & &   \\  [-2ex]
%10 & & 1&  & 2 & \\
10 & &  1 & &  2 & &  2 & &  4 & &  2 & &  4 & &  2 & &  4 & & 6 &  & 2 & &  6  && 4  && 2  && 4  && 2  && 4  && 2 && 10\\  %
8 & & & 1 & & 0 & & 2 & & 2 & & 2 & & 2 & & 2 & & 2 & & 4 && 4 && 2 && 2 && 2 && 2 && 2 && 2 && 8\\ 
6 & & & & 1 & & 2 & & 0 & & 0 & & 0 & & 0 & & 0 & &2 && 0 && 2 && 0 && 0 && 0 && 0 && 0 && 6\\%
6 & & & & & 1 & & 2 & & 0 & & 0 & & 0 & & 0 & & 2 && 2 && 2 && 2 && 0 && 0 && 0 && 0 && 6\\
6 & & & & & & 1 & & 2 & & 0 & & 0 & & 0 & & 2 && 0 && 0 && 0 && 2 && 0 && 0 && 0 && 6\\
6 & & & & & & & 1 & & 2 & & 0 & & 0 & & 2 && 2 && 0 && 0 && 2 && 2 && 0 && 0 && 6\\
6 & & & & & & & & 1 & & 2 & & 0 & & 2 &&  0 && 2 && 0 && 2 && 0 && 2 && 0 && 6\\
6 & & & & & & & & & 1 & & 2 & & 2 & & 2 & & 2 & & 2 & & 2 & & 2 & & 2 & & 2 & & 4\\ %
4 & & & & & & & & & & 1 & & 0 & & 0 & & 0 & & 0 & & 0 & & 0 & & 0 & & 0 & & 4\\
4 & & & & & & & & & & & 1 & & 0 & & 0 & & 0 & & 0 & & 0 & & 0 & & 0 & & 4\\
4 & & & & & & & & & & & & 1 & & 0 & & 0 & & 0 & & 0 & & 0 & & 0 & &4\\
4 & & & & & & & & & & & & & 1 & & 0 & & 0 & & 0 & & 0 & & 0 && 4\\
4 & & & & & & & & & & & & & & 1 & & 0 & & 0 & & 0 & & 0 & & 4\\
4 & & & & & & & & & & & & & & & 1 & & 0 & & 0 & & 0 & & 4\\
4 & & & & & & & & & & & & & & & & 1 & & 0 & & 0 & & 4\\
4 & & & & & & & & & & & & & & & & &1 & & 0 & & 4\\
4 & & & & & & & & & & & & & & & & & &1 & & 4\\
3 & & & & & & & & & & & & && & & & & & \circled[black2]{\textcolor{red}{3}}\\
 %\hline
\end{tabular}
%\caption{Backend table {\fontfamily{ptm}\selectfont hash stem} (extract)}
\caption{Rare failing sequence; replace 53, 55, 65 by 53, 53, 67 to make it succeed!}
\label{table:gilb4}
\end{table}

\noindent To summarize, the new failure causes discovered are: (1) the flip side of a basin with one-time spike at the~border,
 that is, a high elevation plateau surrounded by a one-time abyss on either side, (2) the pattern 2 4 2 4 2 4 with a rather large value on either end, and variants not found in prime gaps, and (3) the same at a lower level.  
The last case results in larger values at the top level, kicking the sequence out of the corridor. 
If this can be precisely quantified and proved, it would solve Gilbreath's conjecture. The work needed does not seem
insurmountable. Prime numbers are actually a very good candidate for success due to
 congruential constraints on prime gaps, unlike more random sequences.  But the first step is to check whether or not I covered all failure types.  

The concepts of cliff, plateau and abyss also apply to sequences growing faster than the primes, with similar conclusions. For instance,
the sequence $2, 3, 5, 9, 15, \,23,\, 33, \,45, \dots$ with $q_1=2$ and $q_n = n^2-3n+5$ for $n>1$ 
obviously succeeds no matter how long it is. After $q_8=45$, one expect $q_9 = 59$. Without even computing~the triangle, you can tell that adding $q_9\in\{59,61\}$ succeeds while $q_9\geq 63$ fails. It gives you the
minimum height for the killing cliff.  
With little efforts, you find that adding $q_9\in\{57, 55, 53 \}$ succeeds while adding $q_9\leq51$~fails.
And this determines the minimum depth of the deadly abyss.
In this sequence, the values between $q_2=3$ and $q_8=45$ play the role of the plateau, by constituting a very smooth albeit fast growing segment, with 2nd order differences all zero as the degree of the polynomial is 2. Again, 
we are talking about a very narrow cliff of abyss, just one \textcolor{index}{outlier value}. Otherwise, future values 
at the sequence level could restore success, as explained earlier.

\subsection{High performance computing to discover rare failure patterns}\label{gsimuls}

So far, researchers focused on large prime gaps as the main danger, using the best bounds available
 to shrink~the corridor by imposing conditions stricter than~(\ref{eureke}). This is the wrong approach:
the non trivial causes for failures are different, not visible to the naked eye, and hard to find in simulations within a narrow corridor. Actually, more chaotic sequences with bigger gaps have
 a high success rate, see figure~\ref{fig:aajj5gb17} also based on Poisson increments.  

My first attempt to build long sequences to discover new patterns of failure in the corridor,
 produced mixed results. The first failure always happened very early on in the simulations, and always due to a 
moderately large gap preceded by a small one -- which also happens in succeeding sequences including the primes. 
I did not discover other patterns because I run too few simulations: I generated $10^6$ random sequences
in the corridor. Other patterns exist, see section~\ref{pordel}. You might need more than $10^{50}$ simulations to find one,
or use a different approach. The following simulation problem is a good analogy: you generate bit streams of length $n$. Success is defined as a stream with above $10\%$ of `1'. Failures are easy to find for small $n$, but with
 $n=100$, the chance of failure for a random bit stream is $1.107 \times 10^{-17}$. For $n =1000$, it goes
 down to $7.428 \times 10^{-163}$. Yet, the number of failures over all bit streams increases with $n$ and becomes infinite,
 while the proportion of failing streams tends to zero very fast.

In my $10^6$ sequences, each with 1000 terms, 2883 failed. The sequences were all distinct, but there is~little leeway for the first 20 terms or so, especially since all sequences start with 2, 3, 5, 7, 11, \dots, 31, 37, 41, 43. 
As a result, I found many duplicates among the short paths that lead to first failure. 
After duplicates removal, the number 2883 is reduced down to 81 distinct failure cases.
They are listed in table~\ref{tablt5rib5} where the columns are as follows:
\vspace{1ex}
\begin{itemize}
\item $n$ is the index of the term where first failure occurs. Since the first 14 term are the same in each sequence, for instance $n=17$ means that failure
occurred just after adding 3 new terms. 
\item  $\sigma$ has the same meaning as earlier; $\sigma - 1$ is the value at the bottom of the triangle. It is the first one $\neq 1$ in the left diagonal.
In case of success, $\sigma\in\{0, 2\}$. Otherwise $\sigma$ is an even integer $\geq$ 4.  
\item $\gamma$ is the maximum value in the first order differences when stopping at the first failure point.  
\item $\delta_n = q_n - q_{n-1}$ where $q_n$ is the value
at first failure point in the sequence. With few exceptions,
the large $|\delta_n-\delta_{n-1}|$ is causing the failure. 
Usually, the maximum $\gamma$ occurs at the right end, that is, $\gamma=\delta_{n}$. 
\item The ``forbidden" column is the most striking feature, see section~\ref{forbidden}.
Note the entry 2, 4, 2, 4, 2~mentioned previously and linked to potential early failure. 
\end{itemize}
\vspace{1ex}
\begin{statement} \label{chius}
The vast majority of failing corridor sequences contain a \textcolor{index}{forbidden prime constellation} pattern early on. 
\hspace{-1.5ex} Primes do not have such patterns. \hspace{-1.5ex} It also explains why the prime number sequence
does not fail early on, where the chance of failure is highest. 
\end{statement}
\vspace{1ex}
\noindent Statement~\ref{chius} is based on table~\ref{tablt5rib5}. I now describe my simulations. All the failures occurring
late in a  sequence were preceded by a failure early on within the same sequence, thus defeating the purpose. But they did reveal 
many banned patterns in the first order differences: the forbidden constellations. The algorithm starts
with~the sequence consisting of the first 14 primes. At iteration $n$, the new added term 
is $q_{n+1}= q_n + 2 + 2 u_{n+1}$, where the $u_n$ are i.i.d deviates from a Poisson($\lambda$) distribution. 
If the augmented sequence with $q_{n+1}$ is no longer in the corridor, I repeat this step until I find a new $u_{n+1}$ candidate that keeps us in the corridor.  
%----

Thus, I use \textcolor{index}{rejection sampling} to mimic the gaps $q_{n+1}-q_n$. For a better fit, use the prime gap \textcolor{index}{empirical distribution}\index{empirical distribution}, a \textcolor{index}{generalized Poisson}~\cite{gpdch26}, 
or an adjusted Poisson  that depends on $n$, with the \textcolor{index}{Hardy-Littlewood $k$-tuples correction}\index{Hardy-Littlewood $k$-tuples correction}. 
Tao and others~\cite{gbctao26} use a geometric distribution. More on this topic in~\cite{lpgtao23,ajpc26, dres24}. These models assume independence between successive gaps. This assumption is wrong due to the forbidden patterns. %### no independence  ... tao's paper
The small $\lambda = 2.5$ produces few rejections in rejection sampling (speeding up the
algorithm) and generates the largest number of distinct sequence failures, yet $<0.01\%$ out of $10^6$ generated sequences
 with $n=1000$ terms. Almost all failures were caused by a too large gap early on, after the first 14 terms.

%-------

As of today, the best asymptotic bound for the maximum gap between two primes, proved unconditionally (without assuming that the Riemann Hypothesis is true), is  
\begin{equation}
p_{n+1} - p_n \lesssim  O\big(p_n^{525}\big),\label{carbo}  %%%%%%% notation for "asymptotically <=
\end{equation}
established in 2001 
by Baker~\cite{pq111}, see also~\cite{pq222}. 
In our context, a weaker bound suffices. According to~conjecture~\ref{conjhk}, we only need $p_{n+1} - p_n \lesssim n$. Not meeting this bound
would guarantee failure of the  prime number~sequence. Of course, the sharper bound you use, the easier to prove Gilbreath's conjecture. 
Assuming RH, $O\big(p_n^{0.525}\big)$ can be reduced to $O\big(\sqrt{p_n} \log p_n\big)$, a small
 improvement.~Cramér conjectured in 1936 that it can be lowered to $O(\log^2 p_n)$ but his heuristic has been questioned.
In 2025, Wang~\cite{pq333} claimed to have proved Cramér's conjecture. 
As you may know, $p_n\sim n\log n$, thus the {\em average} prime gap is asymptotically $\log n$.

\subsection{Prime gaps and forbidden prime constellations}\label{forbidden}

The ``forbidden" column in table~\ref{tablt5rib5} lists the first instance of a \textcolor{index}{forbidden prime constellation} 
found beyond the first 14 elements in all but one of the 81 failed sequences from the previous experiment
(see section~\ref{gsimuls}). But what is exactly?

Take the quintuplet 2, 4, 2, 4, 2 in the first order differences. If present in the primes, it means that 
there is at least one prime $p$ for which $p, p+2, p+6, p+8, p+12, p+14$ are all primes.
But no matter $p$, at least one of these 6 numbers is always a multiple of 5. 
So this quintuplet can not be found in prime gaps, except possibly just once, at the
beginning. 
Actually 5, 7, 11, 13, 17, 19 works, but this is the only exception. Thus,~it~is~called~a forbidden pattern.  
This quintuplet is found in failed sequences, but not in successful ones, in my small sample. 

Likewise, the forbidden patterns 4, 4 and 8, 8 happens frequently in successful sequences, but not in the failed ones (again, small sample that only indicates a probability of occurrence). Of course all banned patterns occur in successful sequences, but their distribution is much different from that in failed ones, especially for longer patterns. To find them in a given sequence, I looked at successive pairs in the first order differences after the first 14 terms and tested each pair for admissibility in primes (that is, satisfying or not the congruential constraints). I then did the same with triplets, quadruplets, and so on. If a pair is banned, all triplets containing it are also banned. In particular the most fundamental pair 2, 2 is banned. Thus, so is 4, 2, 2 or 4, 2, 2, 6, 2 and so on. Same with triplets, quadruplets and all tuples.  It makes the prime numbers anything but random. At the same time, it gives them some protection against failure: a long chain of 2 is dangerous if followed or worse, preceded by a spike. Primes are protected against that and other dangerous patterns such as 2, 4, 2, 4, 2 (all banned). 

I did not find any reference about forbidden patterns, but there is a lot of ongoing research about admissible ones, sometimes
called permissible prime constellations. It started with the prime $k$-tuple conjecture and the  
Hardy–Littlewood asymptotic constants. See the Wolfram entry on the topic, \href{https://mathworld.wolfram.com/PrimeConstellation.html}{here}, 
and Richard Mathar's table of prime gap constellations, \href{https://oeis.org/A022004/a022004_2.pdf?b%27%27}{here}. 
For more more recent references on the topic, see~\cite{qq357, qq356}.
\vspace{1ex}

\begin{table}[H]
\centering
\renewcommand{\arraystretch}{1.0}
%\small
%\footnotesize
%\scriptsize
%\setlength{\tabcolsep}{3pt}
\begin{tabular}{|cccccc|cccccc|cccccc|} 
\hline 
   &  &   &   & &   & & & & & & & & & &  & &\\  [-2.2ex]
$n$	& $\sigma$	&$\gamma$	& $\delta_{n-1}$ & $\delta_n$ & Forbidden	 & $n$& $\sigma$	&$\gamma$	& $\delta_{n-1}$ & $\delta_n$ & Forbidden	 &$n$	&  $\sigma$	&$\gamma$	& $\delta_{n-1}$ & $\delta_n$ & Forbidden\\		
\hline
  &  & &  & &  & & & & & & & & & & &  & \\  [-2ex]
16	&	4	&	12	&	2	&	12	&	--	&	18	&	4	&	14	&	2	&	14	&	2, 2 	&	22	&	4	&	12	&	2	&	12	&	2, 6, 2 	\\
18	&	4	&	10	&	2	&	10	&	4, 6, 4 	&	20	&	6	&	14	&	2	&	14	&	2, 8 	&	22	&	4	&	12	&	2	&	12	&	2, 2 	\\
16	&	4	&	14	&	2	&	14	&	2, 2 	&	22	&	6	&	14	&	2	&	14	&	2, 8 	&	22	&	4	&	12	&	2	&	12	&	4, 10 	\\
22	&	4	&	12	&	2	&	12	&	2, 8 	&	20	&	4	&	12	&	2	&	12	&	2, 8 	&	22	&	4	&	12	&	2	&	12	&	4, 6, 10 	\\
18	&	6	&	12	&	2	&	12	&	4, 6, 4 	&	22	&	4	&	12	&	2	&	12	&	8, 6, 2 	&	18	&	6	&	16	&	2	&	16	&	2, 2 	\\
17	&	4	&	14	&	4	&	14	&	6, 4, 2, 4, 2, 4 	&	22	&	4	&	12	&	2	&	12	&	4, 10 	&	22	&	6	&	14	&	2	&	14	&	2, 8 	\\
18	&	4	&	10	&	2	&	10	&	2, 4, 2, 4, 2 	&	22	&	4	&	12	&	2	&	12	&	2, 8 	&	22	&	4	&	12	&	2	&	12	&	4, 6, 4 	\\
18	&	6	&	12	&	2	&	12	&	2, 4, 2, 4, 2 	&	22	&	4	&	12	&	2	&	12	&	4, 10 	&	32	&	4	&	12	&	2	&	12	&	10, 4 	\\
20	&	4	&	12	&	6	&	4	&	2, 2 	&	20	&	8	&	16	&	2	&	16	&	2, 8 	&	22	&	6	&	14	&	2	&	14	&	2, 4, 6, 6, 6 	\\
18	&	4	&	14	&	2	&	14	&	2, 2 	&	26	&	4	&	10	&	2	&	10	&	2, 8 	&	24	&	4	&	12	&	6	&	8	&	2, 8 	\\
18	&	8	&	14	&	2	&	14	&	2, 4, 2, 4, 2 	&	22	&	6	&	16	&	2	&	16	&	2, 2 	&	20	&	4	&	12	&	2	&	12	&	2, 2 	\\
26	&	6	&	12	&	2	&	12	&	2, 8 	&	22	&	4	&	12	&	2	&	12	&	4, 2, 6, 6 	&	22	&	4	&	12	&	2	&	12	&	4, 10 	\\
18	&	4	&	14	&	2	&	14	&	2, 6, 2 	&	22	&	6	&	14	&	2	&	14	&	2, 8 	&	20	&	4	&	8	&	2	&	8	&	2, 8 	\\
22	&	6	&	14	&	2	&	14	&	2, 2 	&	22	&	4	&	12	&	6	&	4	&	4, 10 	&	22	&	4	&	16	&	2	&	16	&	2, 2 	\\
18	&	10	&	16	&	2	&	16	&	2, 4, 2, 4, 2 	&	22	&	4	&	12	&	2	&	12	&	2, 8 	&	22	&	4	&	12	&	2	&	12	&	2, 2 	\\
25	&	4	&	14	&	4	&	14	&	2, 8 	&	22	&	4	&	12	&	2	&	12	&	2, 8 	&	24	&	4	&	12	&	2	&	12	&	2, 8 	\\
22	&	6	&	14	&	2	&	14	&	2, 8 	&	22	&	4	&	12	&	2	&	12	&	4, 2, 6, 6 	&	24	&	4	&	12	&	6	&	4	&	2, 8 	\\
20	&	4	&	16	&	2	&	16	&	2, 2 	&	22	&	4	&	12	&	2	&	12	&	2, 8 	&	22	&	4	&	18	&	2	&	18	&	2, 2 	\\
20	&	4	&	12	&	2	&	12	&	2, 8 	&	22	&	4	&	12	&	2	&	12	&	2, 8 	&	24	&	4	&	12	&	6	&	4	&	4, 10 	\\
26	&	6	&	14	&	2	&	14	&	2, 2 	&	22	&	6	&	14	&	2	&	14	&	4, 2, 6, 6 	&	24	&	4	&	12	&	6	&	4	&	2, 2 	\\
22	&	4	&	12	&	2	&	12	&	2, 8 	&	24	&	4	&	12	&	2	&	12	&	2, 8 	&	20	&	4	&	20	&	2	&	20	&	4, 4 	\\
22	&	4	&	12	&	2	&	12	&	4, 10 	&	22	&	4	&	12	&	2	&	12	&	2, 2 	&	22	&	6	&	14	&	2	&	14	&	4, 2, 6, 6 	\\
25	&	4	&	14	&	4	&	6	&	2, 8 	&	25	&	4	&	14	&	4	&	6	&	8, 6, 2 	&	22	&	4	&	12	&	2	&	12	&	8, 2 	\\
20	&	4	&	12	&	2	&	12	&	2, 2 	&	28	&	4	&	16	&	6	&	4	&	16, 2, 4, 2 	&	20	&	4	&	16	&	2	&	16	&	2, 8 	\\
20	&	4	&	12	&	2	&	12	&	4, 10 	&	24	&	4	&	12	&	6	&	8	&	4, 6, 4 	&	24	&	4	&	12	&	6	&	8	&	4, 10 	\\
22	&	4	&	14	&	2	&	8	&	2, 8 	&	21	&	4	&	16	&	4	&	6	&	2, 2 	&	22	&	6	&	16	&	2	&	2	&	2, 8 	\\
20	&	4	&	16	&	2	&	16	&	2, 8 	&	24	&	4	&	12	&	2	&	12	&	4, 6, 4 	&	21	&	4	&	18	&	4	&	18	&	2, 8 	\\
\hline
 %\hline
\end{tabular}
%\caption{Backend table {\fontfamily{ptm}\selectfont hash stem} (extract)}
\caption{List of failing corridor sequences, out of $10^6$ sampled ones; all but one contain a forbidden pattern}
\label{tablt5rib5}
\end{table}

\section{A deeper version of Gilbreath's conjecture with partial proof}

In this section, I iteratively compute the bottom element in the triangle attached to a sequence $q_1,\dots, q_n$ by working on the right diagonal only, updating
 it incrementally as $n$ increases.  The sequence succeeds if and only if the last element in the diagonal is equal to $1$. 
I also compute the maximum interval $I_n = [q_n^-, \, q_n^+]$ such that the augmented sequence $q_1,\dots,q_n, q_{n+1}$ continues to
succeed if and only if $q_{n+1}\in I_n$. Along the way, I build a theoretical framework to solve the conjecture, leading
 to a partial proof and deeper claims. The approach is 
%novel, and the 
innovative. 
%methodology is interesting in its own right. 
For instance, in section~\ref{reduxor}, I reduce valid sequences to a canonical form:
starting at some level in the triangle and going downwards, the original and reduced sequences have the exact same rows. They differ only
 in the upper levels. 
%that has the same rows as the original one, 
%### that contains only 3 and 5 values,  yet with the same triangle rows as the original one, from level 3 down to the bottom.  
It allows you to focus on a simpler, equivalent sequence, to check success status.

\subsection{Distance between a successful sequence and its closest failing sister}\label{pordel}

The simulations in section~\ref{gsimuls} did not produce sequences that fail for the first time only after a large number of terms.
This was my intended purpose. Despite generating $10^6$ sequences, only 81 failed, and all failed early on.
While it started an interesting discussion about banned prime gap constellations in section~\ref{forbidden} with more on this topic here, I have yet to
find first failure points occurring much later in a corridor sequence. In this section, this goal is I achieved, while setting the foundations with
detailed steps, some completed, to prove Gilbreath's conjecture for the prime number sequence in particular. 
I chose the primes not because the original conjecture focuses on them only, but because they are an ideal
candidate, and in some ways easier to handle compared to
other types of chaotic sequences.

%-----------------------------vince/riemann2and3.mp4
\begin{figure}[H]
\centering
\captionsetup{justification=centering}
%\includegraphics[width=1.0\textwidth]{case1a.png}
%\vspace{0.25ex}
\includegraphics[width=0.95\textwidth]{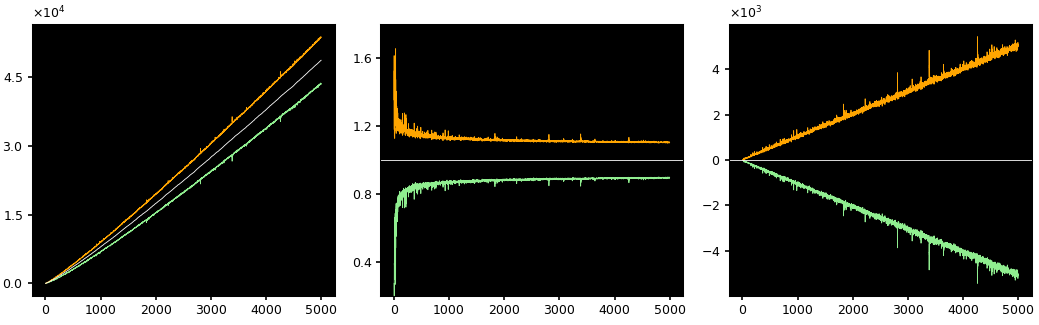}   % gb1sb.png / gbs.png
%\vspace{-1ex}
\caption{{\small Prime  sequence: $q^+_n,\,  q_n, \, q^-_n$ resp. in orange, white, green  (left). Ratio (center) and delta (right).}}
\label{fig:aajjjjh6y}
\end{figure}
%imgpy9979_2and3.PNG
%-------------------------

%-----------------------------vince/riemann2and3.mp4
\begin{figure}[H]
\centering
\captionsetup{justification=centering}
%\includegraphics[width=1.0\textwidth]{case1a.png}
%\vspace{0.25ex}
\includegraphics[width=0.95\textwidth]{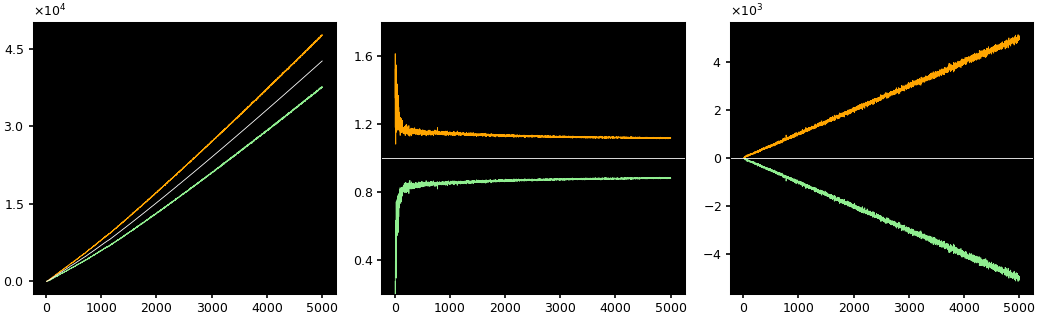}   % gb1sb.png / gbs.png
%\vspace{-1ex}
\caption{{\small  Random  sequence: $q^+_n,\,  q_n, \, q^-_n$ resp. in orange, white, green  (left). Ratio (center) and delta (right).}}  % >>>>>>>>> exponent is 0.35 
\label{fig:aajjjd3uyvo}
\end{figure}
%imgpy9979_2and3.PNG
%-------------------------

%-----------------------------vince/riemann2and3.mp4
\begin{figure}[H]
\centering
\captionsetup{justification=centering}
%\includegraphics[width=1.0\textwidth]{case1a.png}
%\vspace{0.25ex}
\includegraphics[width=0.95\textwidth]{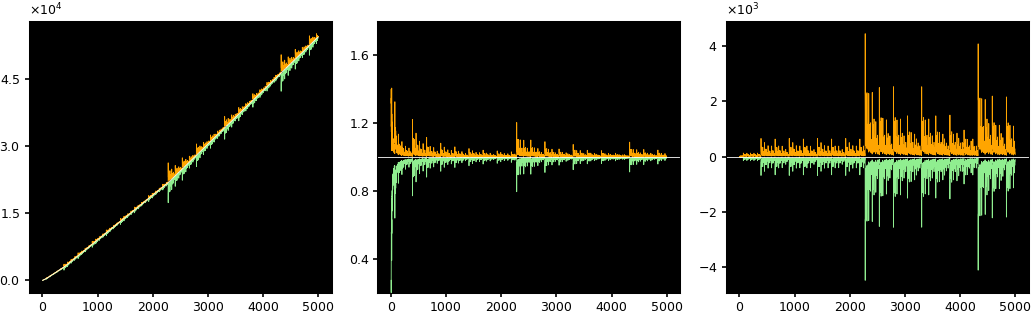}   % gb1sb.png / gbs.png
%\vspace{-1ex}
\caption{{\small  Log sequence: $q^+_n,\,  q_n, \, q^-_n$ resp. in orange, white, green (left). Ratios (center) and deltas (right).}}
\label{fig:aajjtt5xc}
\end{figure}
%imgpy9979_2and3.PNG
%-------------------------

\noindent The left plot in figures~\ref{fig:aajjjjh6y}--\ref{fig:aajjjd3uyvo} shows the increasing width of the \textcolor{index}{success interval} $I_n = [q_n^-, \, q_n^+]$ as $n$ increases, 
with $n$ on the X-axis. The sequence $q_1,q_2$ and so on, colored in white, has  $q_{n} \in I_{n-1}$. The orange and green curves
are (resp.) the admissible lower and upper bounds for $q_n$. The meaning is as follows: if the sequence is successful up to~$q_n$, 
then it will be successful at $q_{n+1}$ if and only if $q_{n+1}\in I_n$. The middle plot shows the ratios $q^+_{n-1}/q_n$ in orange, and
$q^-_{n-1}/q_n$ in green; the right plot shows the differences $q^+_{n-1}-q_n$ in orange, and
$q^-_{n-1}-q_n$ in green. Here, negative gaps are allowed.  I tested 4 sequences: 
\vspace{1ex}
\begin{itemize}
\item Figure~\ref{fig:aajjjjh6y}: The prime number sequence.
\item Figure~\ref{fig:aajjjd3uyvo}: Random sequence with Poisson increments of mean $\lambda=2.5$ (multiplied by 2 to make them even like prime gaps)
and rejection sampling to stay in the logarithmic corridor. See section~\ref{synthetia}.  
\item Figure~\ref{fig:aajjtt5xc}: Log model. Here $q_n = \lfloor \log q^+_{n-1}\rfloor$ rounded to the smallest odd integer $>q_{n-1}$. The brackets
$\lfloor \cdot \rfloor$ stand for the integer (floor) function.
\item Figure~\ref{fig:aajjjd3uyvffo}: Power model. Here $q_n = \lfloor (q^+_{n-1})^{0.35}\rfloor$ rounded to the smallest odd integer $>q_{n-1}$. 
\end{itemize}
\vspace{1ex}

\noindent The first two (prime numbers and the random sequence with similar growth) have the largest delta between~the lower and upper bounds. It means that at any given $q_{n}$, there is a lot of leeway to choose the next $q_{n+1}$ so that the augmented sequence continue to succeed iteratively, and thus indefinitely. Also, $q_n$ stays in the middle between the two bounds, never getting close to either one. 
It is impossible that these sequences could ever fail, if this behavior persists indefinitely.  
By contrast, the last two ones (the log and power sequences) progress in a narrow band, getting close to failing at regular intervals. 
Unless proved otherwise, success for all $q_n$ is not guaranteed. Some sequences could hit a wall with no possible escape from failure.  
Or for the prime number sequence, for some unknown insanely large $n$, $q_n$ could jump above the upper bound, maybe even just one single time in its entire history, 
returning within the band at $n+1$ and staying in it thereafter,  forever.

%-----------------------------vince/riemann2and3.mp4
\begin{figure}[H]
\centering
\captionsetup{justification=centering}
%\includegraphics[width=1.0\textwidth]{case1a.png}
%\vspace{0.25ex}
\includegraphics[width=0.95\textwidth]{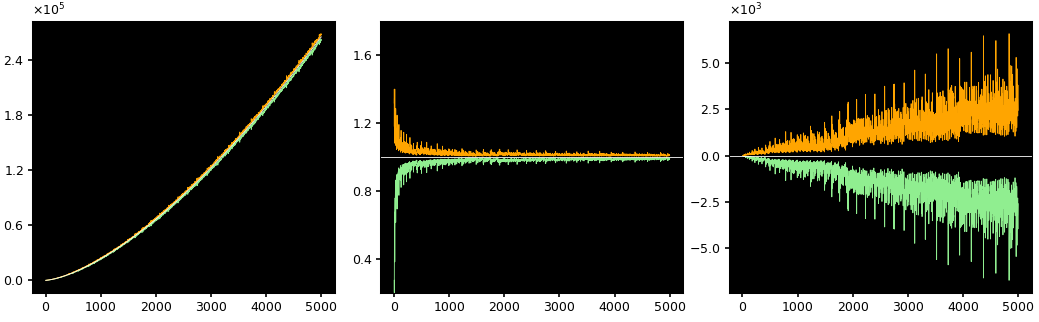}   % gb1sb.png / gbs.png
%\vspace{-1ex}
\caption{{\small  Power  sequence: $q^+_n,\,  q_n, \, q^-_n$ resp. in orange, white, green  (left). Ratio (center) and delta (right).}}  % >>>>>>>>> exponent is 0.35 
\label{fig:aajjjd3uyvffo}
\end{figure}
%imgpy9979_2and3.PNG
%-------------------------

\noindent By design, the random and the prime number sequences share a lot in common. However, there are major differences. 
The primes exhibit regular spikes in the upper and lower bounds.  
Also, for the primes, on occasions, the right diagonal consists only of 0 and 2, besides the first and last elements. These primes, always the largest in a pair
 of twin primes, are called \textcolor{index}{magic primes}: see theorem~\ref{easyn}. In the random sequence, this happens~much less frequently regardless of $\lambda$,
 and the pattern seems to stop at some point. One may argue that my simulations do not mimic the primes very well.  
The reality is the opposite: the primes lack randomness. It shows on multiple occasions including here,
making them more risky than most people think in cryptographic applications.

\subsection{General framework, proof for magic primes and other cases}

So far, I  discussed several situations that could lead to failure. For the prime numbers, the record gaps are~not the main issue; they are not large enough
 to be a concern. A combination of a moderately large gap surrounded by several tiny ones and/or some halfway between, especially if this pattern repeats itself in close
 succession,  creating plateaus, basins, dips and spikes in the first or higher order differences, causes problems not immediately but later on in the right diagonal.
I know formalize these findings, leading to theorem~\ref{ti4r} and conjecture~\ref{conjhk}. 
\vspace{1ex}

\begin{table}[H]
\centering
\renewcommand{\arraystretch}{1.23}
\small
\begin{tabular}{
>{\centering\arraybackslash}p{0.5cm}
>{\centering\arraybackslash}p{0.8cm}
>{\centering\arraybackslash}p{0.35cm}
>{\centering\arraybackslash}p{0.35cm}
>{\centering\arraybackslash}p{0.35cm}
>{\centering\arraybackslash}p{0.35cm}
>{\centering\arraybackslash}p{0.35cm}
>{\centering\arraybackslash}p{0.35cm}
>{\centering\arraybackslash}p{0.35cm}
>{\centering\arraybackslash}p{0.35cm}
>{\centering\arraybackslash}p{0.35cm}
>{\centering\arraybackslash}p{0.35cm}
>{\centering\arraybackslash}p{0.35cm}
>{\centering\arraybackslash}p{0.35cm}
}
%\toprule[1pt] 
%\hline
%\hline
%\midrule
%\hline
$n$	&	$q_n$	&	1	&	2	&	3	&	4	&	5	&	6	&	7	&	8	&	9	&	10	&	11	&	12	\\
\hline
%\toprule[1pt]\\[-0.4ex]
205	&	1277	&	\cellcolor{green2} \bf 18	&	8	&	6	&	2	&	0	&	0	&	2	&	0	&	2	&	0	&	0	&	2	\\
206	&	1279	&	\cellcolor{green2}  \bf \textcolor{red}{2}	&\bf	16	&	8	&	2	&	0	&	0	&	0	&	2	&	2	&	0	&	0	&	0	\\
207	&	1283	&	\cellcolor{green2} 4	&	2	&\bf	14	&	6	&	4	&	4	&	4	&	4	&	2	&	0	&	0	&	0	\\
208	&	1289	&	\cellcolor{green2}  6	&	2	&	0	&\bf 	14	&	8	&	4	&	0	&	4	&	0	&	2	&	2	&	2	\\
209	&	1291	&\cellcolor{green2} 	2	&	4	&	2	&	2	&	\rectangled[green2]{\bf \textcolor{black}{12}}	&	4	&	0	&	0	&	4	&	4	&	2	&	0	\\
210	&	1297	&\cellcolor{green2} 	6	&	4	&	0	&	2	&	0	&	\rectangled[green2]{\bf \textcolor{black}{12}}	&	8	&	8	&	8	&	4	&	0	&	2	\\
211	&	1301	&\cellcolor{green2} 	4	&	2	&	2	&	2	&	0	&	0	&	\rectangled[green2]{\bf \textcolor{black}{12}}	&	4	&	4	&	4	&	0	&	0	\\
212	&	1303	&\cellcolor{green2} 	2	&	2	&	0	&	2	&	0	&	0	&	0	&	\rectangled[green2]{\bf \textcolor{black}{12}}	&	8	
&	4	&	0	&	0	\\[0.1ex]
\hline
213	&	1309	&	\cellcolor{yellow} 6	&	4	&	2	&	2	&	0	&	0	&	0	&	0	&	\rectangled[yellow]{\bf \textcolor{black}{12}}	&	4	&	0	&	0	\\
214	&	1319	&	\cellcolor{yellow} 10	&	4	&	0	&	2	&	0	&	0	&	0	&	0	&	0	&	\rectangled[yellow]{\bf \textcolor{black}{12}}	&	8	&	8	\\
215	&	1327	&	\cellcolor{yellow} 8	&	2	&	2	&	2	&	0	&	0	&	0	&	0	&	0	&	0	&	\rectangled[yellow]{\bf \textcolor{black}{12}}	&	4	\\
%\hline
\end{tabular}
\caption{$\delta_k(q_n)$, with $1\leq k\leq 12$ and $205\leq n \leq 215$}
\label{table:rassel}
\end{table}

\begin{theorem}\label{easyn}
In the prime number sequence, there are twin primes $(p-2, \,p)$ called \textcolor{index}{magic twins}, presumably infinitely many of them,  for which the right diagonal  
$\delta(p)$
consists only of 0 and 2 except for the first and last elements. If the sequence succeeds up to $p-2$, adding $p$
guarantees success at $p$. % and also at the prime after $p$ requires balanced].
\end{theorem}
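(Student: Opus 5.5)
The plan is to split the statement into its two logically independent parts. The first is the conditional claim: if the right diagonal $\delta(p)$ has the 0/2 property, then success up to $p-2$ implies success at $p$. This is a purely structural fact and should follow from Theorems~\ref{th1} and~\ref{abc} alone. The second is the existence of magic twins. Only finitely many instances can be established here, by direct computation in the spirit of Theorem~\ref{ddew}; the parenthetical ``presumably infinitely many'' is a conjecture and not something I would try to prove.

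For the conditional part, let $q_1,\dots,q_n$ be the first $n$ primes with $q_{n-1}=p-2$ and $q_n=p$. Write the right diagonal as $\delta_1(p)=p-(p-2)=2$ and $\delta_k(p)=|\delta_{k-1}(p)-\delta_{k-1}(p-2)|$ for $2\le k\le n-1$, so that $\delta_{n-1}(p)$ is the bottom of the triangle.
\begin{enumerate}
\item By Theorem~\ref{abc}, the triangle built on $q_1,\dots,q_{n-1}$ sits unchanged inside the new triangle as its left sub-triangle.
\item Success up to $p-2$ means every row of that sub-triangle starts with 1, including its bottom element, which lies at level $n-2$. Hence level $n-2$ of the new triangle consists of exactly two entries: $1$ on the left and $\delta_{n-2}(p)$ on the right.
\item By Theorem~\ref{th1}, every entry other than the leftmost in a row is even, so $\delta_{n-2}(p)$ is even. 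The magic-twin hypothesis, applied to this second-to-last diagonal element, pins it down to $\delta_{n-2}(p)\in\{0,2\}$.
\item Therefore the bottom entry is $\delta_{n-1}(p)=|\delta_{n-2}(p)-1|=1$, which is success at $p$. This is exactly the closing argument of Theorem~\ref{th1}.
\end{enumerate}
Notice that only the penultimate diagonal entry is actually used. The stronger hypothesis that all interior entries lie in $\{0,2\}$ is not needed for this step, and I would remark on that. The only care required is the indexing convention: one must check that ``first and last elements'' refers to $\delta_1(p)=2$ and $\delta_{n-1}(p)$, the latter being odd.

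For the existence part, I would compute the right diagonal incrementally. The row $\delta(q_n)$ is obtained from $\delta(q_{n-1})$ and the new gap in $O(n)$ operations, as in table~\ref{table:rassel}. I would then scan the twin-prime pairs and list those $p$ whose diagonal lies entirely in $\{0,2\}$ apart from its endpoints. Exhibiting one or several such $p$, checked with the Python code of section~\ref{pythor}, proves the existence claim as literally stated.

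The main obstacle is the ``infinitely many'' aspect, which I do not expect to settle. My first attempt would run as follows. After a twin gap, $\delta_2(p)=|2-\delta_1(p-2)|$ equals the previous gap minus 2. Controlling the deeper entries would require the higher-order diagonals at $p-2$ to already be small, meaning $\gamma(k)\le 2$ along the right diagonal past some depth. That is essentially Gilbreath-strength information, so I would state infinitude only as supported by numerical evidence.
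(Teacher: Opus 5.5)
Your argument is correct and is essentially the paper's proof: the paper likewise uses only that the penultimate diagonal entry $\delta_{n-2}(p)\in\{0,2\}$ sits beside the old bottom value $\delta_{n-2}(p-2)=1$, so the new bottom is $|\delta_{n-2}(p)-1|=1$; your appeal to Theorem~\ref{abc} simply makes this identification explicit. The paper also leaves existence to the computed list and infinitude as a presumption, as you do. One small indexing correction: in the paper's convention (Lemma~\ref{lemshi}) the diagonal starts with $\delta_0(p)=p$, so the excluded ``first element'' is $p$ itself rather than $\delta_1(p)$, which changes nothing since $\delta_1(p)=2\in\{0,2\}$ anyway.
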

\vspace{1ex}
{\bf Proof}:  The fact that the sequence succeeds at $p$ if it succeeds at $p-2$ is trivial. Let $\delta_{-1}(p), \delta_{-2}(p)$
denote the last 2 elements in the right diagonal $\delta(p)$. Success means $\delta_{-1}(p)=1$. However if
we succeed at $p-2$, we must have $\delta_{-1}(p-2)=1$. Since $\delta_{-2}(p)\in\{0, 2\}$
and $\delta_{-1}(p) = |\delta_{-2}(p) - \delta_{-1}(p-2)|$, we must have $\delta_{-1}(p) = 1$.  $\square$
\vspace{2ex}

\noindent Theorem~\ref{easyn} features the easiest case where adding the next prime guarantees success in the augmented sequence if it was successful up to the previous one. And the regular resets to a right diagonal consisting only of 0 and 2 contributes
to making the full, infinite prime sequence successful.  The list of magic twin pairs $(p-2, p)$ is not in the OEIS encyclopedia. Below are the first few $p$, that I also call \textcolor{index}{magic primes}: 
\vspace{1ex}
\begin{quote}
13,
19,
43,
73,
103,
109,
283,
463,
619,
2689,
2713,
2803,
3469,
5659,
13693,
14563,
24109,
28663,
36013,
36793,
42409,
42703,
56533,
65719,
74203,
80683,
92383,
97003
\end{quote}
\vspace{1ex}
%when entering 0 2 cycle in right diag, success is guarantee [assuming success before]
%can be infinite, in which case you must change at least the two rightmost values, not just the last one

\noindent Theorem~\ref{easyn} leads to a generalization. First, let me define the \textcolor{index}{0-2 cycle} in a right diagonal in any valid, finite sequence.
It is the longest section at the bottom, on the right in the corresponding triangle, just above the last element (itself necessarily an odd integer), and consisting 
 of 0 or 2 only. It may or may not exist. The length of the 0-2 cycle and when it starts are what matter the most; it ranges from 1 element (a single 0 or 2) to the whole diagonal as in theorem~\ref{easyn}. If and only if not present, the sequence fails. The 0-2 cycle may contain only 0, only 2, or any combination of both. 
\vspace{2ex}
\begin{theorem}\label{easy2}
For a finite valid sequence to continue to succeed after adding a term $q$, the  right diagonal $\delta(q)$ in the triangle attached to the augmented sequence, must have
 a 0-2 cycle. You use this fact to recursively~augment~a sequence ad infinitum while keeping success at all times.  
\end{theorem}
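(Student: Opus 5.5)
The statement says that a valid sequence succeeds exactly when the right diagonal ends in a 0-2 cycle, and that this can be used to extend a sequence one term at a time. My plan is to prove a sharper version: for a successful sequence $q_1,\dots,q_n$, adding $q$ keeps success if and only if the element $\delta_{-2}(q)$ just above the bottom lies in $\{0,2\}$, i.e. the 0-2 cycle has length at least one. Write the right diagonal of the augmented triangle as $\delta_1(q)=q-q_n,\ \delta_2(q),\dots,\delta_{-2}(q),\delta_{-1}(q)$. By Theorem~\ref{abc}, adding $q$ leaves the old triangle untouched. The old triangle's right diagonal $\delta(q_n)$ therefore supplies the left neighbours, which gives the recursion
\[
\delta_{k+1}(q)=\bigl|\delta_k(q)-\delta_k(q_n)\bigr|,\qquad \delta_1(q)=q-q_n .
\]
This is the same mechanism used in the proof of Theorem~\ref{easyn}, applied now to an arbitrary diagonal rather than one made only of 0's and 2's.

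The first step is necessity. By Theorem~\ref{th1}, every non-leftmost entry of the triangle is even. In particular $\delta_{-2}(q)$ is even, and its left neighbour on that row is the leftmost entry, which equals 1 because the sequence is admissible up to the last level. The bottom value is then $|\delta_{-2}(q)-1|$, and this equals 1 iff $\delta_{-2}(q)\in\{0,2\}$. So if no 0-2 cycle is present, $\delta_{-2}(q)\ge 4$ and the bottom is odd and at least 3, which is failure. Sufficiency is the same computation read backwards, exactly as in Theorem~\ref{easyn}: $\delta_{-1}(q)=|\delta_{-2}(q)-\delta_{-1}(q_n)|=|\delta_{-2}(q)-1|=1$. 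This settles the equivalence claimed in the statement.

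The second step is the recursive augmentation. Given a successful prefix, I want to show that the set $I_n=[q_n^-,q_n^+]$ of admissible next terms is nonempty, and ideally an interval. The tool is a monotonicity argument through the recursion. Perturbing $q$ by 2 changes $\delta_1(q)$ by 2. Since $\bigl||a|-|b|\bigr|\le|a-b|$, each $\delta_k(q)$ then moves by at most 2 at every level, so $\delta_{-2}(q)$ varies in steps of at most 2 as $q$ runs over the odd integers. Two endpoint cases then show that $\{0,2\}$ is hit.
\begin{itemize}
\item Taking $q=q_n$ (so $\delta_1=0$) gives $\delta_k(q)=\delta_k(q_n)$ for $k\ge2$.
\item Taking $q$ very large makes every $\delta_k(q)$ large.
\end{itemize}
By this discrete intermediate-value argument, some odd $q\ge q_n$ makes $\delta_{-2}(q)$ equal to 0 or 2, and a valid extension always exists.

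I expect the main obstacle to be the case $q=q_n$. It is not valid, since it requires $q>q_n$, and one must also check that $\delta_{-2}$ there is at most 2. I would handle the first point by starting from $q=q_n+2$ and arguing from the old bottom value 1. The recursion shows that $\delta_{-2}(q_n+2)$ differs from $\delta_{-2}(q_n)$, which comes from $|\delta_{-2}(q_n)-\delta_{-2}(q_{n-1})|$, by at most 2, but bounding it by 2 may genuinely fail. In that case I would fall back to proving only existence of a successful extension, without the interval structure. I would also flag that this gives success for \emph{some} extension, not for the next prime, so it does not by itself settle Gilbreath's conjecture.
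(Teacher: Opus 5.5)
Your first step is exactly the paper's proof. The paper simply reuses the computation from Theorem~\ref{easyn}: $\delta_{-1}(q)=|\delta_{-2}(q)-\delta_{-1}(q_n)|=|\delta_{-2}(q)-1|$, and since $\delta_{-2}(q)$ is even, the augmented sequence succeeds if and only if $\delta_{-2}(q)\in\{0,2\}$. That part is fine, and it is all the paper actually proves; the ``ad infinitum'' sentence gets no further argument there.

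Your second step, which tries to show that a valid successful extension always exists, contains a genuine error.

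\textbf{The endpoint identity is false.} You claim that $q=q_n$ gives $\delta_k(q)=\delta_k(q_n)$ for $k\ge 2$. With $\delta_1(q)=0$ the recursion actually gives $\delta_2(q)=\delta_1(q_n)$, then $\delta_3(q)=|\delta_1(q_n)-\delta_2(q_n)|$, and so on. The identity would even force $\delta_{-2}(q)=\delta_{-1}(q_n)=1$, an odd value, contradicting Theorem~\ref{th1}.

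\textbf{Success at $q=q_n$ also fails in general.} Table~\ref{table:rab1} shows the valid successful sequence $2,3,5,9,15$, with right diagonal $(15,6,2,0,1)$. Appending $15$ gives the diagonal $(15,0,6,4,4,3)$, which is a failure.

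\textbf{The intermediate-value argument is vacuous as set up.} $\delta_{-2}(q)$ is a nonnegative even integer, so a ``small'' endpoint with $\delta_{-2}\le 2$ is already a successful extension. The argument therefore presupposes what it should produce, and your fallback offers no substitute.

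The missing observation is that the large-$q$ end suffices on its own if you compute it exactly. Put $x=q-q_n$ and $d_k=\delta_k(q_n)$. If $x\ge d_1+\cdots+d_{n-2}$, induction gives $\delta_{k+1}(q)=\delta_k(q)-d_k\ge d_{k+1}+\cdots+d_{n-2}\ge 0$ for $1\le k\le n-2$. So no absolute value ever acts, and
\[
\delta_{-2}(q)=\delta_{n-1}(q)=x-\sum_{k=1}^{n-2}\delta_k(q_n).
\]
Hence $q=q_n+2+\sum_{k=1}^{n-2}\delta_k(q_n)$ is a valid successful extension, since it gives $\delta_{-2}(q)=2$. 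Every larger odd $q$ gives $\delta_{-2}(q)\ge 4$ and fails. In Table~\ref{table:rab1} this reproduces $q_5^+=15+2+8=25$.

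This makes the indefinite augmentation rigorous. As you note, however, it says nothing about whether the next prime lies in the successful set. It also says nothing about which values of $q$ below this maximum succeed.
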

\vspace{1ex}
{\bf Proof}: It uses the same argument as in the proof of theorem~\ref{easyn}. $\square$

\vspace{2ex}

\noindent 
I now discuss related concepts. In any sequence $q_1,\dots q_n$, the right diagonal $\delta(q_n)$ has $n$ elements by
construction. Let $\rho_n(k)$ be the proportion of $k$, for $k=0,2, 4$ and so on. The sequence is said to have a
\textcolor{index}{balanced right diagonal} if the following is satisfied:
\begin{equation}
\max\Big(\,\Big|\rho_n(0) -\frac{1}{2}\Big|, \, \Big|\rho_n(2) -\frac{1}{2}\Big| \,\Big) \leq 
\frac{\displaystyle{\sqrt{2n\log\log n}}}{n}.\label{serout}
\end{equation}
If $n>928$, the diagonal $\delta(q_n)$ is conjectured to contain fewer than  $\sqrt{n}\log n$ elements greater than $2$.
Thus, as $n$ increases, $\rho_n(k)$ tends to $\frac{1}{2}$  if $k\in\{0, 2\}$, and to 0 otherwise. 
Formula (\ref{serout}) is a direct application of the \textcolor{index}{law of the iterated logarithm}.  The bound in~(\ref{serout}) is the
strongest possible asymptotically under perfect randomness. At a magic prime where 0 and 2 alternate (thus non random), $2n\log\log n$ can be lowered by orders of magnitude. In my conjecture~\ref{conjhk}, part 4 claims that~(\ref{serout}) is always satisfied for the prime number sequence if $n>770$. But a weak version, say with $2 n\log\log n$ replaced by $n^{3/2}$, is enough for our needs. 
When only the weak version is satisfied, the right diagonal is said to be \textcolor{index}{weakly balanced}.  
If $\delta(q_n)$ is balanced, you need additional conditions
on $\delta(q_n)$ to guarantee that it remains true for $\delta(q_{n+1})$. These conditions are distinct from randomness; they~pertain
 to the lengths of the successive runs of 0 and 2 in the 0-2 cycle of the right diagonal $\delta(q_n)$.  

\noindent Now I can state and prove the main theorem~\ref{ti4r}, starting with lemmas~\ref{lemshi} and~\ref{berlot}.
\vspace{2ex}
\begin{lemma}\label{lemshi}
Let $q_1,\dots,q_n$ be a valid sequence ($q_k\geq q_{k-1}+2$) with right diagonal $\delta(q_n) = [\,\delta_0(q_n),\dots,\delta_{n-1}(q_n)\,]$. Thus, $\delta_k(q_n) = |\delta_{k-1}(q_n) - \delta_{k-1}(q_{n-1})|$ with $\delta_0(q_n)=q_n$.
Let $g_k = q_k - q_{k-1}$ be the first order differences. Then for $k \geq 2$, we have:
\begin{align}
%\delta_k(q_n) & \leq \max\big(g_n, g_{n-1},\dots,  g_{n-k+1}\big) - 2.\label{minal}\\
\delta_k(q_n) & \leq \sup_{1\leq j\leq k} (g_{n-j+1} - 2),\label{minal}\\
%\delta_k(q_n) & \leq \max\big(g_n; \delta_1(q_{n-1}),\dots,\delta_{k-1}(q_{n-1})\big).\label{minal2} \\
\delta_k(q_n) & \leq \max\Big[\delta_1(q_n), \sup_{1\leq j < k}\delta_j(q_{n-1})\Big].\label{minal2}
\end{align}
The maximum in~(\ref{minal}) cannot be lowered without restrictions on the sequence. Thus, 
$\delta(q_n)$ may have reversals;~its second largest value after $q_n=\delta_0(q_n)$ is not always $g_n=\delta_1(q_n)$. However
all values besides $q_n$ are bounded by the maximum gap in $g_2,\dots, g_n$. 
For prime numbers ($q_n=p_n$), the max gap in question is $< p_n^{0.525}$. 
\end{lemma}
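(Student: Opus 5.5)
The plan is induction on the level $k$, built on the elementary inequality already used in the proof of theorem~\ref{th1}: for nonnegative $x,y$ we have $|x-y|\leq \max(x,y)$. Moreover, if $x,y\geq m\geq 0$, then $|x-y|\leq \max(x,y)-m$. Throughout I assume $k\leq n-1$, so that every gap index $n-j+1$ with $1\leq j\leq k$ is at least $2$ and the corresponding $g$'s are defined.

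First I record the two lowest levels explicitly. We have $\delta_1(q_n)=q_n-q_{n-1}=g_n$, and therefore $\delta_2(q_n)=|g_n-g_{n-1}|$. Validity gives $g_j\geq 2$ for all $j$, so the sharpened inequality with $m=2$ yields
$$\delta_2(q_n)\leq \max(g_n,g_{n-1})-2=\sup_{1\leq j\leq 2}(g_{n-j+1}-2).$$
This is the base case of~(\ref{minal}). It is the only place where validity is used; the subtraction of $2$ comes entirely from it.

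For the inductive step with $k\geq 3$, I apply the recursion $\delta_k(q_n)=|\delta_{k-1}(q_n)-\delta_{k-1}(q_{n-1})|$ together with $|x-y|\leq\max(x,y)$. This gives $\delta_k(q_n)\leq \max\big(\delta_{k-1}(q_n),\delta_{k-1}(q_{n-1})\big)$. The induction hypothesis is applied at level $k-1\geq 2$ to both the sequence ending at $q_n$ and the sequence ending at $q_{n-1}$; by theorem~\ref{abc} the latter is just the truncated sequence. This bounds the two terms by the suprema of $g_i-2$ over $i\in\{n-k+2,\dots,n\}$ and over $i\in\{n-k+1,\dots,n-1\}$ respectively. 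The union of these index sets is $\{n-k+1,\dots,n\}$, which is exactly the range in~(\ref{minal}).

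Inequality~(\ref{minal2}) comes from unrolling the same one-step bound along the right diagonal instead of inducting on gaps. I write $\delta_k(q_n)\leq\max(\delta_{k-1}(q_n),\delta_{k-1}(q_{n-1}))$, then apply the same bound to $\delta_{k-1}(q_n)$, and so on. After $k-1$ steps the only surviving term from the diagonal of $q_n$ is $\delta_1(q_n)$. The discarded terms are $\delta_j(q_{n-1})$ for $1\leq j<k$, which is~(\ref{minal2}). The remark that all entries other than $q_n$ are bounded by the largest gap among $g_2,\dots,g_n$ follows immediately from this, together with $\delta_1(q_n)=g_n$. The prime-number corollary is then a direct citation of the bound~(\ref{carbo}) of Baker~\cite{pq111}, read as the exponent $0.525$.

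The main obstacle is the sharpness claim rather than the inequalities themselves. To show that~(\ref{minal}) cannot be lowered, I would use the explicit family where $g_j=2$ for $j<n$ and $g_n=G$. Then all higher differences of the first $n-1$ terms vanish, and one checks inductively that $\delta_k(q_n)=G-2$ for every $2\leq k\leq n-1$, so the bound is attained at every level. To exhibit reversals, where the second largest value of $\delta(q_n)$ is not $g_n$, I would take a large gap $G$ at an earlier position followed by small gaps, for instance the plateau-and-cliff configuration of table~\ref{table:gilb3}. There the value propagates down the diagonal while exceeding $g_n$, showing that no statement of the form $\delta_k(q_n)\leq g_n$ can hold in general.
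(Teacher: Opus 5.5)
Your proposal is correct and follows essentially the paper's route: induction on the level via $|x-y|\leq\max(x,y)$ for nonnegative entries, with the $-2$ coming only from validity at level $2$, and (\ref{minal2}) obtained by unrolling the one-step bound along the diagonal of $q_n$. (The paper phrases the level-$2$ step through the interval bound $|x-y|\leq\max(|c-b|,|d-a|)$ for $x\in[a,b]$, $y\in[c,d]$.) Your extremal family ($g_j=2$ for $j<n$, $g_n=G$) and your reversal example go beyond the paper, which states the sharpness and reversal remarks without proof; the only caveat, which you share with the paper, is that the cited $p_n^{0.525}$ bound holds only for $p_n$ large enough (it fails at $p_n=11$, where the largest gap is $4>11^{0.525}\approx 3.5$).
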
   
\vspace{1ex}
\begin{minipage}{\textwidth}
{\bf Proof}: By recursion. For instance, $\delta_3(q_n) =|\delta_2(q_n) - \delta_2(q_{n-1})|$. 
Assuming $\delta_2(q_n)\in [0, \max(g_n, g_{n-1}) - 2]$ and 
$\delta_2(q_{n-1})\in[0, \max(g_{n-1}, g_{n-2}) - 2]$, we have
\end{minipage}
\begin{align*}
\delta_3(q_n) & \leq \max\Big[\max(g_n, g_{n-1})-2, \, \max(g_{n-1}, g_{n-2})-2\Big] \\
      & = \max\big(g_n, g_{n-1}, g_{n-2}\big)-2.
\end{align*}
Thus the result is also true for $\delta_3(q_n)$. The inequality comes from this: if $x\in[a, b]$, $y\in[c, d]$ and
$z = |x-y|$, then $z\leq \max(|c-b|, |d-a|)$. 
Also, $\delta_1(q_n) = g_n \in [2, \, g_n]$. That's where the 2 comes from. 
This proves~(\ref{minal}). 
\vspace{1ex}\\
\noindent The proof for (\ref{minal2}) is based on a simple generalization of the following argument:
Using twice the recursive formula that defines $\delta_k(n)$, we have 
$\delta_k(n) = | | | \delta_{k-3}(q_n)  - \delta_{k-3}(q_{n-1})  | - \delta_{k-2}(q_{n-1})  | - \delta_{k-1}(q_{n-1}) |$. 
Since for any $a, b, c, d$, $|||a-b|-c|-d|\leq \max(a, b, c, d)$, we conclude that
$\delta_k(n) \leq\max\big(\delta_{k-3}(q_n);  \delta_{k-3}(q_{n-1}), \delta_{k-2}(q_{n-1}), \delta_{k-1}(q_{n-1})\big) $.
 Iteratively apply the same recursion to the leftmost argument in the maximum to complete the proof.  
\vspace{1ex}\\
\noindent The last part, stating that $\delta_k(q_n) < p_n^{0.525}$ for the prime number sequence, comes from a famous
theorem proved in 2001 by Baker~\cite{pq111}. $\square$
\vspace{1ex}

\noindent In practice, for any fixed $n$, the values $\delta_k(q_n)$ decay very fast as $k$ increases, with ups and downs, and enter the 0-2 cycle in fewer
than $\sqrt{n}\log n$ steps. Proving this is the main challenge. According to~(\ref{minal2}), they can potentially bounce~back several times and stay elevated for very long.  Table~\ref{table:rassel} shows how this could happen. Each row features a right diagonal $\delta_n(q_n)$
in the prime number sequence. The third column represents the prime gap $g_n =\delta_1(q_n)$. 
Rows with $n\leq 212$ list the real gaps. Beyond $n=212$, I replaced prime gaps by other values 
(in yellow) chosen to keep pushing the large rectangled elements to the right, towards
the end of $\delta(q_n)$. The main findings~are as follows:
\vspace{1ex}
\begin{itemize}
\item There is a large gap at $n=205$. Not a problem, $\delta(q_{205})$ looks good. 
It is followed by a tiny gap at~$n=206$. Again not a problem yet. Between $n=205$ and $n=209$, the bold values 18, 16, 14, 14, 12
keep decreasing on average. So far, so good. 
\item Between $n=210$ and $n=215$, there is no decrease. The impact of the large gap followed by a small one at $n=205, \,206$ (resp.) is now
fully felt. The length of the long tail of $\delta(q_n)$, starting after the bold value, still remains the same as $n$ increases. But now that tail {\em always} starts with the large value 12, 
increasing the risk that at some future $n$, the sequence could fail if the streak continues.  
\item In reality, at $n=213$, the true value of $\delta_1(q_n)$ is 4, not 6, breaking the streak. I changed it to 6 to check how
long the bad streak could last in a random sequence.  
Instead of 6, I could have picked up 2 with same result. But it would have created 4, 2, 2 in successive gaps, which is a 
\textcolor{index}{forbidden prime gap constellation}, see section~\ref{forbidden}. Following 6, I had four options
for $\delta_1(q_{214})$: 2, 6, 10, 14. Only 10 was not forbidden, so I chose 10. I had more acceptable options for 
$\delta_1(q_{215})$. Forbidden patterns severely limit the progression of a streak, but not completely. 
What will kill the streak is that soon enough, your only options for the next prime will be composite numbers due
 to the rarity of primes. For instance, in the table, $q_{213}=1309$~is~not a prime.
\item The scenario described here may be more complex and involve multiple parallel streaks in table~\ref{table:rassel},
each caused upstream by a large gap followed by a small one. Also, the 12 could move to the right at a faster pace, and even increase,
further reducing the chance of success. Changing $\delta_1(q_{213})$ from 6 to 38 does both. But it kills the 12 streak,
starting a new one with even larger values, concerning because $\delta_1(q_{212})=2$ is a tiny gap. 
Incidentally, the real gap at $n=217$ is $\delta_1(q_{217})=34$, a record. And all the fear just described is vastly
exaggerated. But for a formal proof, no scenario can be a priori ruled out, even if insanely unlikely. 
\end{itemize}
\vspace{2ex}
\begin{lemma}\label{berlot}
Let $q_1,\dots, q_{n-1}$ be a valid, successful sequence and $\nu_2(q_{n-1})$ be the number of elements equal~to~2~in 
the 0-2 cycle of its right diagonal $\delta(q_{n-1})$. Then $q_1,\dots, q_{n}$ also succeeds if
$g^*_n \leq 2\nu_2(q_{n-1})+2$, where 
\begin{equation}
g^*_n = \max\big(g_2, g_3, \dots, g_n\big). % q_2-q_1, q_3-q_2, q_4-q_3, \dots, q_n - q_{n-1}) 
\end{equation}
As usual, $g_k = \delta_1(q_k) = q_k - q_{k-1}$ is the gap between $q_k$ and $q_{k-1}$. This lemma is not specific to prime numbers.
\end{lemma}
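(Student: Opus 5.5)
The plan is to follow the new right diagonal $\delta(q_n)$ entry by entry, using the recursion $\delta_k(q_n)=|\delta_{k-1}(q_n)-\delta_{k-1}(q_{n-1})|$ of lemma~\ref{lemshi}. I treat the old diagonal $\delta(q_{n-1})$ as a fixed input string and write $x_k=\delta_k(q_n)$.

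First I record what is known about the input. Since $q_1,\dots,q_{n-1}$ succeeds, its last entry is $\delta_{n-2}(q_{n-1})=1$. Let its 0-2 cycle occupy positions $j_0,\dots,n-3$. By definition every entry there is 0 or 2, and exactly $\nu_2=\nu_2(q_{n-1})$ of them equal 2. Since the cycle consists of even numbers, it cannot contain $\delta_0(q_{n-1})=q_{n-1}$, which is odd, so $j_0\geq 1$. By theorem~\ref{th1} all the $x_k$ with $1\le k\le n-2$ are even.

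Next I bound the value of $x$ just before it meets the cycle. If $j_0\geq 3$, then $j_0-1\geq 2$, and inequality~(\ref{minal}) of lemma~\ref{lemshi} gives $x_{j_0-1}\le g^*_n-2\le 2\nu_2$. If $j_0\in\{1,2\}$, I start the argument one step earlier. The value $x_{j_0-1}$ is then either $q_n$ itself, or $x_1=g_n\le g^*_n\le 2\nu_2+2$. The case $x_0=q_n$ is awkward, because $q_n$ need not be small. I handle it by computing $x_1=g_n$ directly from the recursion and running the reduction below from index 1, spending one of the 2's to absorb the extra $+2$ of slack. Either way, $x$ enters the cycle at a value at most $2\nu_2+2$ that is reduced by the 2's in the cycle.

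The core step is a monotone reduction along the cycle. Let $c$ denote the current entry of the old diagonal. When $c=0$, we get $x_k=x_{k-1}$. When $c=2$, we get $x_k=|x_{k-1}-2|$: this equals $x_{k-1}-2$ if $x_{k-1}\geq 2$, and equals 2 if $x_{k-1}=0$. So once $x\in\{0,2\}$ it stays in $\{0,2\}$. While $x\geq 4$, each 2 in the cycle lowers $x$ by exactly 2, and each 0 leaves it unchanged. Starting from a value at most $2\nu_2$, or at most $2\nu_2+2$ in the boundary case, the $\nu_2$ twos in the cycle therefore suffice to bring $x$ into $\{0,2\}$ by index $n-2$. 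Finally $x_{n-1}=|x_{n-2}-\delta_{n-2}(q_{n-1})|=|x_{n-2}-1|=1$, so the augmented sequence succeeds, exactly as in the proof of theorem~\ref{easyn}.

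I expect the main obstacle to be the bookkeeping at the entry point of the cycle. The bound~(\ref{minal}) is only stated for $k\geq2$, and the off-by-one between $g^*_n-2$ and $2\nu_2$ must be matched to the index where the cycle starts. If the budget is one short in the case $j_0=1$, I would fall back on validity. Validity gives $g_n\geq 2$, and the first cycle entry, if it equals 2, reduces $g_n$ directly. With this, the count closes as long as the cycle is nonempty.
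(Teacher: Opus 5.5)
Your plan is the paper's own argument. The paper names the entry of $\delta(q_n)$ at the start index $\tau_n$ of the old 0-2 cycle $v_n$. It shows that along the cycle each 2 lowers this entry by 2 and each 0 leaves it unchanged, with $\{0,2\}$ absorbing. It deduces success iff $v_n\le 2\nu_2(q_{n-1})+2$, and bounds $v_n$ by (\ref{minal}).

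Your bookkeeping at the entry point is off by one, and as written the key step fails. Since $x_k=|x_{k-1}-\delta_{k-1}(q_{n-1})|$, the first cycle entry $\delta_{j_0}(q_{n-1})$ acts on $x_{j_0}$, not on $x_{j_0-1}$. The step from $x_{j_0-1}$ to $x_{j_0}$ subtracts $\delta_{j_0-1}(q_{n-1})$, which lies outside the cycle. It equals $q_{n-1}$ if $j_0=1$, and it is an even number at least $4$ (by maximality of the cycle) if $j_0\ge 2$. So your reduction rule does not cover that step, and a bound on $x_{j_0-1}$ does not bound $x_{j_0}$: if $x_{j_0-1}=0$, then $x_{j_0}=\delta_{j_0-1}(q_{n-1})$. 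Your case $j_0=2$ has the same defect. Starting from $x_1=g_n$, the next step subtracts $\delta_1(q_{n-1})=g_{n-1}\ge 4$, which is not a cycle entry.

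The repair is immediate and makes your hedging unnecessary: split the cases as $j_0\ge 2$ versus $j_0=1$.
\begin{itemize}
\item If $j_0\ge 2$, apply (\ref{minal}) at $k=j_0$ itself to get $x_{j_0}\le g^*_n-2\le 2\nu_2$.
\item If $j_0=1$, the cycle acts directly on $x_1=g_n\le g^*_n\le 2\nu_2+2$.
\end{itemize}
Either way, the $\nu_2$ twos bring $x$ down to at most 2 by index $n-2$. Hence $x_{n-2}\in\{0,2\}$ and $x_{n-1}=|x_{n-2}-1|=1$.

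No 2 needs to be ``spent,'' and no appeal to validity is needed, because the budget $2\nu_2+2$ is exact: ending at 2 still gives $|2-1|=1$. Your separate treatment of $j_0=1$ is worth keeping, since it covers a case the paper glosses over. The paper's claim $v_n<g^*_n$ via (\ref{minal}) only holds for $\tau_n\ge 2$. For $\tau_n=1$ one only has $v_n=g_n\le g^*_n$, which still suffices.
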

\vspace{1ex}
{\bf Proof}: I pictured the two right diagonal $\delta(q_{n-1})$ and $\delta(q_n)$ in table~\ref{table:rasselced}. Blocks of the same color have the~same number of elements.
The gray block in the top row is $\delta(q_{n-1})$'s 0-2 cycle. It starts at index $k=\tau_n$ and it is followed by a single
value in green, always an odd integer but in this case equal to 1 because I assumed that $\delta(q_{n-1})$ is successful. 
This 0-2 cycle must exist to guarantee success: see theorem~\ref{easy2}. 

\begin{table}[H]
\centering
\renewcommand{\arraystretch}{1.2}
%\small
\begin{tabular}{
%|>{\centering\arraybackslash}p{1.2cm}
|>{\centering\arraybackslash}p{0.8cm}|
>{\centering\arraybackslash}p{0.8cm}|
>{\centering\arraybackslash}p{2cm}|
>{\centering\arraybackslash}p{0.8cm}|
>{\centering\arraybackslash}p{8.15cm}|
>{\centering\arraybackslash}p{0.80cm}|
>{\centering\arraybackslash}p{0.80cm}|
}

%\toprule\\[-4.1ex]
	\cellcolor{black} \textcolor{white}{\bf 0}	&	\cellcolor{black} \textcolor{white}{\bf 1}	&	\cellcolor{black} \textcolor{white}{$\boldsymbol{\cdots}$}	&
	\cellcolor{black} \textcolor{white}{$\boldsymbol{\tau_n}$}	& \cellcolor{black} \textcolor{white}{$\boldsymbol{\cdots}$}&	\cellcolor{black} \textcolor{white}{\small $\boldsymbol{n{-}2}$}	&  \cellcolor{black} \textcolor{white}{\small $\boldsymbol{n{-}1}$} \\ 
\addlinespace[-2pt] 
%\hline
%\hline
%\toprule[1pt]\\[-4.3ex] 
\cellcolor{blue2}  $q_{n-1}$	& \cellcolor{blue2} $g_{n-1}$	 &	\cellcolor{cyan2}	 $< g^*_{n-1}$	&	\multicolumn{1}{c}{\cellcolor{black2}	}	&\cellcolor{black2}	\hspace{-0.9cm}	0-2 cycle (long tail, 0 and 2)	&\cellcolor{green3}	 1	& \cellcolor{white}	\\
\hline
\cellcolor{blue2}  $q_{n}$	& \cellcolor{blue2}  $g_{n}$	&	\cellcolor{cyan2}	 $< g^*_{n}$	&\cellcolor{yellow2}		$v_n$	&	\multicolumn{1}{c}{\cellcolor{black2}	\hspace{0.60cm} decreasing values $\leq v_n + 2$ (also, $v_n< g^*_n$)}	&	\cellcolor{black2}		&\cellcolor{green3}	\bf ?\\
%Cell 1 & Cell 2 & Cell 3 \\ \hline
%Cell 4 & \multicolumn{1}{c}{Cell 5} & Cell 6 \\ \hline
%Cell 7 & Cell 8 & Cell 9 \\ \hline
\hline
\end{tabular}
\caption{$\delta(q_{n-1})$ (top) and $\delta(q_n)$ (bottom): joint block structure.  Labels represent the index $k$.}
\label{table:rasselced}
\end{table}
\noindent Now let's focus on the bottom row, $\delta(q_n)$. By construction, the gray block has the same number of elements as the 0-2 cycle in $\delta(q_{n-1})$. We don't know what's in it.
By construction, it is followed by the last element in green, and preceded by the yellow element whose unknown value is denoted as $v_n$. In general the gray block 
in $\delta(q_n)$ is not the 0-2 cycle. The 0-2 cycle may or may not exist. 
What we know is that except for $\delta_0(q_n)=q_n$ and $\delta_1(q_n)=g_n$,
we have $\delta_k(q_n)< g^*_n$, by virtue of~(\ref{minal}) in lemma~\ref{lemshi}. In particular, this applies to $v_n$. Also,~all values are even integers, and
the total length of $\delta(q_n)$ is $n$, starting at index $k=0$ and ending at $k=n-1$.
\vspace{2ex}\\
\noindent If $v_n=0$, success is trivial. So, let's assume that $v_n\geq 2$. By definition, $\delta_k(q_n) = |\delta_{k-1}(q_n) -\delta_{k-1}(q_{n-1})|$.  
Also, at $k=\tau_n$, $\delta_k(q_n)=v_n$. 
If $k>\tau_n$, we are in the gray block in the bottom row, and the corresponding value $\delta_{k-1}(q_{n-1})$ in the top row is $\leq 2$. 
Thus, $\delta_{k}(q_n) \in \{ \delta_{k-1}(q_n) - 2, \delta_{k-1}(q_n) \}$ unless $\delta_{k-1}(q_n)=0$. We can ignore that exception: when it happens, success
is guaranteed. The challenge is when $v_n$ is large: do we have enough runway, that is, is the the gray block long enough, to end up with 
$\delta_{n-1}(q_n)=1$ in the rightmost cell? 
\vspace{2ex}\\
\noindent To answer that question, let's look at an example. Say $v_n=14$ at index $k=5$, with the 0-2 cycle in $\delta(q_{n-1})$ as featured
in the top row in table~\ref{table:rasselcedggg}. Then the sequence in the middle row succeeds: it ends with 1 in the green cell. 
\vspace{1ex}

\begin{table}[H]
\centering
\renewcommand{\arraystretch}{1.25}
\setlength{\tabcolsep}{0pt}
\small
\begin{tabular}{
%|>{\centering\arraybackslash}p{1.2cm}
|>{\centering\arraybackslash}p{0.81cm}|
>{\centering\arraybackslash}p{0.8cm}|
>{\centering\arraybackslash}p{2.3cm}|
>{\centering\arraybackslash}p{0.8cm}|
>{\centering\arraybackslash}p{0.80cm} %%
>{\centering\arraybackslash}p{0.80cm} %%
>{\centering\arraybackslash}p{0.80cm} %%%
>{\centering\arraybackslash}p{0.80cm} %%%
>{\centering\arraybackslash}p{0.80cm} %%%
>{\centering\arraybackslash}p{0.80cm} %%%
>{\centering\arraybackslash}p{0.80cm} %%%%
>{\centering\arraybackslash}p{0.80cm} %%%%
>{\centering\arraybackslash}p{0.80cm} %%%%
>{\centering\arraybackslash}p{0.80cm} %%%%
>{\centering\arraybackslash}p{0.80cm}| %%%%
>{\centering\arraybackslash}p{0.823cm}|
>{\centering\arraybackslash}p{0.80cm}|
}

	\cellcolor{black} \textcolor{white}{\bf 0}	&	\cellcolor{black} \textcolor{white}{\bf 1}	&	\cellcolor{black} \textcolor{white}{$\boldsymbol{\cdots}$}
&\cellcolor{black} \textcolor{white}{$\boldsymbol{5}$}	
%&\cellcolor{black} \textcolor{white}{xx} %%
&\multicolumn{11}{c|}{\cellcolor{black}	\textcolor{white}{$\boldsymbol{\cdots}$}}
%& \cellcolor{black} \textcolor{white}{$\boldsymbol{\cdots}$}
&\cellcolor{black} \textcolor{white}{\bf 17}	&  \cellcolor{black} \textcolor{white}{\bf 18} \\ 
\addlinespace[-2pt]

\cellcolor{blue2}  $q_{n-1}$	& \cellcolor{blue2} $g_{n-1}$	 &	\cellcolor{cyan2}	 $< g^*_{n-1}$ 	
&\cellcolor{black2}	0 %%
&\cellcolor{black2}	\textcolor{dkgreen}{2} %%	
&\cellcolor{black2}	\textcolor{dkgreen}{2} %%%	
&\cellcolor{black2}	0 %%%	
&\cellcolor{black2}	\textcolor{dkgreen}{2} %%%	
&\cellcolor{black2}	0 %%%	
&\cellcolor{black2}	\textcolor{dkgreen}{2} %%%
&\cellcolor{black2}	0 %%%
&\cellcolor{black2}	0 %%%
&\cellcolor{black2}	0 %%%
&\cellcolor{black2}	\textcolor{dkgreen}{2} %%%
&\cellcolor{black2}	\textcolor{dkgreen}{2} %%%
&\cellcolor{green3}	 \textcolor{dkgreen}{\bf 1}	& \cellcolor{white}	\\

\hline

\cellcolor{blue2}  $q_{n}$	& \cellcolor{blue2}  $g_{n}$	&	\cellcolor{cyan2}	 $< g^*_{n}$	
&\cellcolor{yellow2} 14	%%
&\cellcolor{black2}	14	%%
&\cellcolor{black2}	\textcolor{dkgreen}{12} %%%	
&\cellcolor{black2}	\textcolor{dkgreen}{10} %%%	
&\cellcolor{black2}	10 %%%	
&\cellcolor{black2}	\textcolor{dkgreen}{8} %%%	
&\cellcolor{black2}	8 %%%
&\cellcolor{black2}	\textcolor{dkgreen}{6} %%%	
&\cellcolor{black2}	6 %%%%	
&\cellcolor{black2}	6 %%%%	
&\cellcolor{black2}	6 %%%%	
&\cellcolor{black2}	\textcolor{dkgreen}{4} %%%%	
&\cellcolor{black2}	\textcolor{dkgreen}{2} %%%%	
&\cellcolor{green3}	 \textcolor{dkgreen}{\bf 1}\\

%\hline

\cellcolor{blue2}  $q_{n}$	& \cellcolor{blue2}  $g_{n}$	&	\cellcolor{cyan2}	 $< g^*_{n}$	
&\cellcolor{yellow2} 16	%%
&\cellcolor{black2}	16	%%
&\cellcolor{black2}	\textcolor{dkgreen}{14} %%%	
&\cellcolor{black2}	\textcolor{dkgreen}{12} %%%	
&\cellcolor{black2}	12 %%%	
&\cellcolor{black2}	\textcolor{dkgreen}{10} %%%	
&\cellcolor{black2}	10 %%%
&\cellcolor{black2}	\textcolor{dkgreen}{8} %%%	
&\cellcolor{black2}	8 %%%%	
&\cellcolor{black2}	8 %%%%	
&\cellcolor{black2}	8 %%%%	
&\cellcolor{black2}	\textcolor{dkgreen}{6} %%%%	
&\cellcolor{black2}	\textcolor{dkgreen}{4} %%%%	
&\cellcolor{red2}	 \textcolor{red}{\bf3}\\

\hline
\end{tabular}
\caption{Sample $\delta(q_{n-1})$ (top) and $\delta(q_n)$ (two bottom rows).  Labels represent the index $k$.}
\label{table:rasselcedggg}
\end{table}

\noindent Now increase $v_n$ to 16, as shown in the bottom row. Then the sequence fails, as it ends with 3.
Clearly, success at $q_n$ depends on how many 2 you have in the 0-2 cycle of $\delta(q_{n-1})$, and how large $v_n$ is.
More specifically, the sequence succeeds if and only if
$v_n \leq  2\nu_2(q_{n-1}) + 2$. In my example, $\nu_2(q_{n-1})=6$ and $n=19$. We have success if
$v_n\leq 14 = 2\times 6 + 2$ and failure if $v_n>14$
\vspace{1ex}

\noindent Since $v_n < g^*_n$, success is guaranteed if $g^*_n\leq 2\nu_2(q_{n-1}) + 2$. $\square$
\vspace{2ex}

\begin{theorem}\label{ti4r} 
 Let $q_1,\dots, q_{n-1}$ be a valid, successful sequence and $\nu_2(q_{n-1})$ be the number of elements equal to 2 in 
the 0-2 cycle of its right diagonal $\delta(q_{n-1})$. \!Assume that $g^*_n < n^\alpha$ where $g^*_n$ is the record gap
at $q_n$, that is, the largest gap ever encountered so far. If there is a constant $\beta$ such that 
\begin{equation}
\nu_2(q_{n-1}) > n^\beta, \quad \text{ with } \beta > \alpha, \label{riji}
\end{equation}
then for $n$ large enough, the sequence also succeeds at $q_n$ if it succeeds at $q_{n-1}$.
\end{theorem}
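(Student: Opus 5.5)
The plan is to reduce Theorem~\ref{ti4r} directly to Lemma~\ref{berlot}. That lemma says success propagates from $q_{n-1}$ to $q_n$ as soon as $g^*_n \leq 2\nu_2(q_{n-1})+2$. It therefore suffices to show that the two growth hypotheses force this inequality for all $n$ beyond some threshold $n_0$.

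First, I would combine the hypotheses into $g^*_n < n^\alpha$ and $2\nu_2(q_{n-1})+2 > 2n^\beta + 2$. Since $\beta>\alpha$, the ratio $n^\alpha/(2n^\beta+2)$ tends to $0$. So there is an explicit $n_0$, for instance any $n_0 \geq 2^{-1/(\beta-\alpha)}$ together with $n_0\geq 1$, such that $n^\alpha \leq 2n^\beta+2$ for all $n\geq n_0$. Because both $g^*_n$ and $2\nu_2(q_{n-1})+2$ are even integers, strict versus non-strict inequalities cause no trouble. This yields $g^*_n \leq 2\nu_2(q_{n-1})+2$, and Lemma~\ref{berlot} gives success at $q_n$. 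If $\beta\leq 0$ were allowed, the comparison would need $\alpha<\beta\le 0$, which still works, so no case split seems necessary beyond noting that "$n$ large enough" absorbs constants.

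The substantive point I would double-check is whether Lemma~\ref{berlot} really applies in the needed generality. Its argument rests on two facts. The first is the bound $v_n<g^*_n$ from~(\ref{minal}) in Lemma~\ref{lemshi}. The second is the runway argument: along the gray block of $\delta(q_n)$ the entries decrease by $2$ exactly when the corresponding entry of $\delta(q_{n-1})$'s 0-2 cycle equals $2$, and otherwise stay constant. Hitting $0$ early only helps, as noted there.

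I would make the runway step precise by induction along the block. Let $v$ be the current value in $\delta(q_n)$ and let $c\in\{0,2\}$ be the entry above it. Then the next value is $|v-c|$. Starting from $v_n$, after passing all $\nu_2$ twos the value is at most $\max(v_n-2\nu_2,\,\cdot)$, and once it reaches $0$ or $2$ it stays in $\{0,2\}$. If $v_n \leq 2\nu_2+2$, the last value before the final cell lies in $\{0,2\}$. Then the final entry is $|\{0,2\}-1|=1$, exactly as in Theorem~\ref{easyn}.

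The main obstacle is exactly this runway step, because a $0$ reached midway can bounce back to $2$ but never higher. I would prove the invariant "value $\leq \max(2,\,v_n-2m)$ after $m$ twos". With that invariant, the theorem follows immediately from the arithmetic comparison of $n^\alpha$ and $n^\beta$.
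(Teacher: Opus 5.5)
Your proposal is correct and follows the paper's route. The paper proves Theorem~\ref{ti4r} simply as a corollary of Lemma~\ref{berlot}; your chain $g^*_n<n^\alpha\le 2n^\beta+2<2\nu_2(q_{n-1})+2$ and your invariant ``value $\le\max(2,\,v_n-2m)$ after $m$ twos'' make that reduction and the lemma's runway step explicit. Two harmless refinements: since $n^\alpha\le n^\beta$ for every $n\ge 1$ when $\beta>\alpha$, no threshold $n_0$ is actually needed; and when the 0-2 cycle starts at $\tau_n=1$ you only get $v_n=g_n\le g^*_n$ rather than the strict bound (inequality~(\ref{minal}) covers only $k\ge 2$), which still suffices.
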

\vspace{1ex}
{\bf Proof}: It is a corollary of lemma~\ref{berlot}. For the prime number sequence, $\alpha=0.525$ works as proved by Baker in 2001, see~\cite{pq111}.  
Independently and unrelated to primes, $\beta=0.99$ works, see part 4 in conjecture~\ref{conjhk}. 
Heuristics suggest that $\nu_2(q_{n-1}) \sim n/2$, an even stronger claim
 tied to~(\ref{serout}) about balanced right diagonals. $\square$
\vspace{2ex}\\
\noindent Essentially, my strongest claim is that the 0 and 2 in the 0-2 cycle of $\delta(q_{n-1})$ behave like the binary digits 0 and 1 in a \textcolor{index}{normal number}. 
Yet no one knows for a number like $\pi$ if the proportion of 0 and 1 actually exists. But I proved in~\cite{0and1new}
that for most numbers $x$ included $x=\pi$, infinitely many times, the proportion of 1 is between 25\% and 75\%
either for $x$, $x+\frac{1}{3}$, or both. Likewise, for $\sqrt{2}$, it was proved that the number of 1 in the first $n$ digits must be above 
$1.41 n^\beta$ with $\beta=0.5$. Given the analogy between normal numbers in base 2 and balanced diagonals, this is encouraging in terms
of the extra work needed to prove Gilbreath's conjecture from where I stand now.

It is easy to prove that if $\delta(q_n)$ is strictly decreasing and $g_n^*<n$, then the sequence succeeds at $q_n$ as it~has enough runway to reach 1 at the end. Also, if $q_n$ is a magic prime, the sequence succeeds 
both at $q_n$ by virtue of theorem~\ref{easyn}, but also at $q_{n+1}$. This also applies to sequences other than prime numbers, where ``magic prime"~is 
replaced by $q_{n}-q_{n-1} = 2$ along with $\delta(q_{n})$ consisting of 0 and 2 only (besides the first and last elements). It also requires
 the condition $g_n^*<n^\beta$.
I now summarize the unproved insights discovered so far in the next conjecture~\ref{conjhk}.

\vspace{2ex}
\begin{conjecture}\label{conjhk} The following was prepared for the prime number sequence. Some items in the list below may apply
to other valid sequences even outside the corridor (with much faster growth), under specific conditions. 
\vspace{1ex}
\begin{enumerate}
\item[(1)] If the sequence $q_1,\dots, q_n$ succeeds,  adding $q_{n+1}= q_{n}$  
preserves success. 
If it fails and adding $q_{n+1}$ fails, then replacing $q_{n+1}$ by any larger integer, will also fail. 
%\item[(2)] 
%The above items 
Thus there is a maximum $q^+_n\geq q_n$ such that~for~any $q\in\{q_n,\dots, q^+_n\}$,
the augmented sequence with $q_{n+1} = q$, succeeds, but fails if you choose $q_{n+1}>q^+_n$. 
\item[(2)] Thanks to this, you can run a binary search to efficiently find $q^+_n$, using the bisection method with 
$q_n$~and $2q_n$ as the lower and upper bounds (resp.) for the initial search interval. 
Ignore the triangle, just work on the right diagonal: update it incrementally as needed. It ends with 1 if and only if the sequence succeeds. 
\item[(3)] The rightmost picture in figure~\ref{fig:aajjjjh6y} strongly suggests that for primes ($q_n=p_n$), we
have $q^+_n = q_n + n + o(n)$. Also $q^-_n = q_n -n +o(n)$ if you allow for negative gaps. With
 success in all cases when $q^-_n\leq q_{n+1} \leq q^+_n$.
%\item[(6)] The previous claim, 
If true, it trivially proves the Gilbreath conjecture as there is always a prime
between $p_n$ and $p_n+n$, if $p_n\geq 11$. But it needs some care: $o(n)$ could be as low as $-\log n$ or $-n/\log n$.
\item[(4)] In the right diagonal  $\delta(q_n)$,  the section before the 0-2 cycle has fewer than $\sqrt{n}\log n$ elements if $n>928$. This is  more than what we need to prove Gilbreath's conjecture. 
Finally, for the prime number sequence, $\delta(q_n)$ is \textcolor{index}{balanced} for all $n>770$. Also,~(\ref{riji}) is satisfied if $\beta=0.99$ and 
$n>2535$, or $\beta=0.55$ and $n>16$. To prove Gilbreath, $\beta>0.525$ is enough as it is tied to the best proven bound on prime gaps. 
\end{enumerate}
\end{conjecture}
\vspace{2ex}
I tested the conjecture on the prime numbers and a few other sequences, with success up to the last value~that~I tried ($n=10^4$). 
But there are short artificial sequences where (1) is not satisfied, see tables~\ref{table:rab1} and~\ref{table:rab2}. These cases do
not jeopardize my conclusions; chance of failure is always much higher at the beginning, and you can replace
the requirement $q^+_{n}\geq q_n$ by $q^+_{n}\geq q_n + 2$ to drastically reduce the risk, ruling out non-valid
sequences as~they are more prone to failures. 

The sequences covered in the conjecture not only have a 0-2 cycle at all $q_n$ (guaranteeing success by virtue of theorem~\ref{easy2}),
 but that cycle occupies almost the entire space in the right diagonal. Let $\omega_n$ denote the length of the section before the 0-2 cycle. 
For $n$ large enough, we have $\omega_n < \sqrt{n}\log n$ while the diagonal $\delta(q_n)$ has $n$ elements.
None of this is proved yet despite massive empirical evidence.  For $n\leq 2 \times 10^4$, we have the following statistics:
\vspace{1ex}
\begin{itemize}
\item Magic primes ($\omega_n = 0$) represent 0.17\% of the $q_n$'s. With guaranteed success. Since we also have success at the successor of a magic prime, 
we have ``proven success" for 0.34\%" of the $q_n$'s.
\item When the right diagonal is strictly decreasing, success is also proved: there is enough runway to reach 1 at the end. This covers an additional 
4.90\% of the $q_n$'s, with provable success. 
\item When $\omega_n\leq 10$ at $q_n$, we obviously have a 0-2 cycle if $n \geq 12$, and thus success at $q_n$, and also success at $q_{n+1}$ based on theory. This covers 
52.05\% of the $q_n$'s. 
\item When $\omega_n\leq 32$ at $q_n$, we obviously have a 0-2 cycle if $n \geq 34$, and thus success at $q_n$, and also success at $q_{n+1}$ based on theory. This covers 
96.29\% of the $q_n$'s. 
\item There is a little bias: $\omega_n$ is less likely to be odd than even. 
\end{itemize}
\vspace{1ex}
\noindent In short, $w_n$ is overwhelmingly very small, much smaller than $\sqrt{n}\log n$ for almost all $q_n$. The high proportions just cited are of course lower
if you look at a bigger window, say $n\leq 10^5$, but they decrease very slowly with~$n$. 
Also, if you replace $q_{n+1}$ by much larger $q_n^+$, you still succeed (for primes, $q^+_n \lesssim q_{n} +n$ while $q_{n+1} \lesssim q_n + n^{0.55}$ due to bounds on 
record prime gaps). Yet, replacing $q_{n+1}$ by $q^+_n + 2$ leads to failure. Understanding what makes the former succeed and the later fail, may bring valuable insights.  
Another worthwhile investigation consists in analyzing the spikes in figure~\ref{fig:aajjjjh6y}. 
They are uncorrelated to magic primes locations, and absent in random primes in figure~\ref{fig:aajjjd3uyvo}. 
Finally, I developed a lighter version of Gilbreath's problem, where the recursion to compute $\delta(q_n)$,
namely  $\delta_k(q_n) = |\delta_{k-1}(q_n) - \delta_{k-1}(q_{n-1})|$, is replaced by
\begin{equation}
\delta_k(q_n) = \max\Big(|\delta_{k-1}(q_n) - \delta_{k-1}(q_{n-1})|, 1\Big). 
\end{equation}
It does not allow for 0 in the triangle (the minimum value is now 1), thus eliminating some complexity. In
many cases, failure in the standard problem also implies failure in this lighter version, but not always.  
The sequence in table~\ref{table:gilb2} is an exception.  The prime number
sequence still succeeds.  I encourage the reader to play with this easier version,
establish the equivalent of conjecture~\ref{conjhk}  (let's call it the \textcolor{index}{baby conjecture}), and try to prove it, before attacking the standard version. 
In the python code, to use the light version, add \texttt{delta='light'} in the arguments when calling the the \texttt{check} function from the 
\texttt{gilbreath\_lib} library. 

\vspace{1ex}

%---------------now

\begin{table}[ht]
\centering
\renewcommand{\arraystretch}{1.2}
\small
%\begin{tabular}{c cc|c|c|c|   c|c|c|c|c|c|  c|c|c|c|c|c|c} % Total 7 underlying columns
\begin{tabular}{
>{\centering\arraybackslash}p{0.9cm}
>{\centering\arraybackslash}p{0.35cm}
>{\centering\arraybackslash}p{0.35cm}
>{\centering\arraybackslash}p{0.35cm}
>{\centering\arraybackslash}p{0.35cm}
>{\centering\arraybackslash}p{0.35cm}|
>{\centering\arraybackslash}p{0.35cm}
>{\centering\arraybackslash}p{0.35cm}
>{\centering\arraybackslash}p{0.35cm}
>{\centering\arraybackslash}p{0.35cm}
>{\centering\arraybackslash}p{0.35cm}
>{\centering\arraybackslash}p{0.35cm}|
>{\centering\arraybackslash}p{0.35cm}
>{\centering\arraybackslash}p{0.35cm}
>{\centering\arraybackslash}p{0.35cm}
>{\centering\arraybackslash}p{0.35cm}
>{\centering\arraybackslash}p{0.35cm}
>{\centering\arraybackslash}p{0.35cm}
>{\centering\arraybackslash}p{0.35cm}
}
%\toprule[1pt] 
%\hline
%\hline
\toprule[0.96pt]\\[-4.1ex]
  & \multicolumn{5}{c|}{ $n \leq 5$} &  \multicolumn{6}{c|}{$n=6$ and $q_6 \leq q^+_5 = 25$} & \multicolumn{7}{c}{$n=6$ and $q_6 > q^+_5 = 25$} \\
 % & \multicolumn{5}{c|}{ $\delta(q_n)$, $n \leq 5$} &  \multicolumn{13}{c}{Search for $q_5^+$} \\
\hline
%\midrule
$n$ & 1 & 2 & 3 & 4 & 5 & 6   & 6 & 6   & 6   & 6 & 6    & 6   & 6 & 6    & 6   & 6 & 6    &6  \\
%\hline
$q_n$ & \textcolor{dkgreen}{2} & \textcolor{dkgreen}{3} & \textcolor{dkgreen}{5} & \textcolor{dkgreen}{9} & \textcolor{dkgreen}{15} & \textcolor{red}{15} & \textcolor{dkgreen}{17} & \textcolor{dkgreen}{19} &	\textcolor{dkgreen}{21}	&	\textcolor{dkgreen}{23}	&	\rectangled[green2]{25}	&	\textcolor{red}{27}	&	\textcolor{red}{29}	&	\textcolor{red}{31}	&	\textcolor{red}{33}	&	\textcolor{red}{35}	&	\textcolor{red}{37}	&	\textcolor{red}{39}	\\
%\hline
\toprule[0.96pt]\\[-4.1ex]
$\delta_1(q_n)$  & & \textcolor{dkgreen}{1} & 2 & 4 & \cellcolor{yellow} 6 & \cellcolor{lightgray} 0 & \cellcolor{lightgray} 2 &  \cellcolor{lightgray} 4  &	\cellcolor{lightgray} 6	&\cellcolor{lightgray}	8	&\cellcolor{lightgray}	10	&	12	&	14	&	16	&	18	&	20	&	22	&	24	\\
$\delta_2(q_n)$  &  &    & \textcolor{dkgreen}{1}  &  2 & \cellcolor{yellow} 2 &\cellcolor{lightgray} \bf 6  &\cellcolor{lightgray} \bf 4 & \cellcolor{lightgray} \bf 2 &\cellcolor{lightgray}	0	&\cellcolor{lightgray}	2	&	\cellcolor{lightgray} 4	&	6	&	8	&	10	&	12	&	14	&	16	&	18	\\
$\delta_3(q_n)$  &  &   &   & \textcolor{dkgreen}{1} & \cellcolor{yellow} 0 & \cellcolor{lightgray} \bf 4   &\cellcolor{lightgray} \bf 2 &   \cellcolor{lightgray} \bf 0   	&\cellcolor{lightgray}	\bf 2	&\cellcolor{lightgray}	0	&\cellcolor{lightgray}	2	&	4	&	6	&	8	&	10	&	12	&	14	&	16	\\
$\delta_4(q_n)$  &   &  &   &  &\cellcolor{yellow} \textcolor{dkgreen}1 &\cellcolor{lightgray}  \bf 4    &\cellcolor{lightgray}\bf 2 &\cellcolor{lightgray}  \bf 0  &\cellcolor{lightgray}	\bf 2	&\cellcolor{lightgray}	0	&\cellcolor{lightgray}	2	&	4	&	6	&	8	&	10	&	12	&	14	&	16	\\
\hline
$\delta_5(q_n)$  &  &   &  &  &  &  \textcolor{red}{3}    & \textcolor{dkgreen}{1} &      \textcolor{dkgreen}{1}   &	\textcolor{dkgreen}{1}	&	\textcolor{dkgreen}{1}	&	\textcolor{dkgreen}{1}	&	\textcolor{red}{3}	&	\textcolor{red}{5}	&	\textcolor{red}{7}	&	\textcolor{red}{9}	&	\textcolor{red}{11}	&	\textcolor{red}{13}	&	\textcolor{red}{15}	 \\ 

%\bottomrule[1pt]
\end{tabular}
\caption{Example with failure at $q_{n+1}=q_n$, even though $q_n$ succeeds}
\label{table:rab1}
\end{table}

%---------------- new table below

\begin{table}[ht]
\centering
\renewcommand{\arraystretch}{1.2}
\small
%\begin{tabular}{c|ccccc | ccccccc| cccccc} % Total 7 underlying columns
\begin{tabular}{
>{\centering\arraybackslash}p{0.9cm}
>{\centering\arraybackslash}p{0.35cm}
>{\centering\arraybackslash}p{0.35cm}
>{\centering\arraybackslash}p{0.35cm}
>{\centering\arraybackslash}p{0.35cm}
>{\centering\arraybackslash}p{0.35cm}|
>{\centering\arraybackslash}p{0.35cm}
>{\centering\arraybackslash}p{0.35cm}
>{\centering\arraybackslash}p{0.35cm}
>{\centering\arraybackslash}p{0.35cm}
>{\centering\arraybackslash}p{0.35cm}
>{\centering\arraybackslash}p{0.35cm}
>{\centering\arraybackslash}p{0.35cm}|
>{\centering\arraybackslash}p{0.35cm}
>{\centering\arraybackslash}p{0.35cm}
>{\centering\arraybackslash}p{0.35cm}
>{\centering\arraybackslash}p{0.35cm}
>{\centering\arraybackslash}p{0.35cm}
>{\centering\arraybackslash}p{0.35cm}
}
%\toprule[1pt] 
%\hline
%\hline
\toprule[0.90pt]\\[-4.1ex]
  & \multicolumn{5}{c|}{ $n \leq 5$} &  \multicolumn{7}{c|}{$n=6$ and $q_6 \leq q^+_5 = 17$} & \multicolumn{6}{c}{$n=6$ and $q_6 > q^+_5=17$} \\
 % & \multicolumn{5}{c|}{ $\delta(q_n)$, $n \leq 5$} &  \multicolumn{13}{c}{Search for $q_5^+$} \\
\hline
%\midrule
$n$ & 1 & 2 & 3 & 4 & 5 & 6   & 6 & 6   & 6   & 6 & 6    & 6   & 6 & 6    & 6   & 6 & 6    &6  \\
%\hline
$q_n$ & \textcolor{dkgreen}{2} & \textcolor{dkgreen}{3} & \textcolor{dkgreen}{5} & \textcolor{red}{11} & \hspace{0.8ex}\textcolor{dkgreen}{5}\hspace{0.8ex} & \textcolor{dkgreen}{5} & \textcolor{dkgreen}{7} & \textcolor{dkgreen}{9} &	\textcolor{red}{11}	&	\textcolor{dkgreen}{13}	&	\textcolor{dkgreen}{15}	&	\rectangled[green2]{17}	&	\textcolor{red}{19}	&	\textcolor{red}{21}	&	\textcolor{red}{23}	&	\textcolor{red}{25}	&	\textcolor{red}{27}	&	\textcolor{red}{29}	\\
%\hline
\toprule[1pt]\\[-4.1ex]
$\delta_1(q_n)$  & & \textcolor{dkgreen}{1} & 2 & 6 & \cellcolor{yellow} 6 & \cellcolor{lightgray} 0 & \cellcolor{lightgray} 2 &  \cellcolor{lightgray} 4  &	\cellcolor{lightgray} 6	&\cellcolor{lightgray}	8	&\cellcolor{lightgray}	10	& \cellcolor{lightgray}	12	&	14	&	16	&	18	&	20	&	22	&	24	\\
$\delta_2(q_n)$  &  &    & \textcolor{dkgreen}{1}  &  4 & \cellcolor{yellow} 0 &\cellcolor{lightgray} \bf 6  &\cellcolor{lightgray} \bf 4 & \cellcolor{lightgray} \bf 2 &\cellcolor{lightgray}	0	&\cellcolor{lightgray}	2	&	\cellcolor{lightgray} 4	&	 \cellcolor{lightgray} 6	&	8	&	10	&	12	&	14	&	16	&	18	\\
$\delta_3(q_n)$  &  &   &   & \textcolor{red}{3} & \cellcolor{yellow} 4 & \cellcolor{lightgray} \bf 6   &\cellcolor{lightgray} \bf 4 &   \cellcolor{lightgray} \bf 2   	&\cellcolor{lightgray}	0	&\cellcolor{lightgray}	2	&\cellcolor{lightgray}	4	& \cellcolor{lightgray}	6	&	8	&	10	&	12	&	14	&	16	&	18	\\
$\delta_4(q_n)$  &   &  &   &  &\cellcolor{yellow} \textcolor{dkgreen}1 &\cellcolor{lightgray}  \bf 2    &\cellcolor{lightgray}\bf 0 &\cellcolor{lightgray}  \bf 2  &\cellcolor{lightgray}	\bf 4	&\bf \cellcolor{lightgray}	2	&\cellcolor{lightgray}	0	& \cellcolor{lightgray}	2	&	4	&	6	&	8	&	10	&	12	&	14	\\
\hline
$\delta_5(q_n)$  &  &   &  &  &  &    \textcolor{dkgreen}{1}    &  \textcolor{dkgreen}{1} &      \textcolor{dkgreen}{1}   &	\textcolor{red}{3}	&	\textcolor{dkgreen}{1}	&	\textcolor{dkgreen}{1}	&	\textcolor{dkgreen}{1}	&	\textcolor{red}{3}	&	\textcolor{red}{5}	&	\textcolor{red}{7}	&	\textcolor{red}{11}	&	\textcolor{red}{13}	&	\textcolor{red}{15}	 \\ 

%\bottomrule[1pt]
\end{tabular}
\caption{Example with failure at  $q_{n+1} <q^+_n$, despite success at $q_{n+1}=q_n$}
\label{table:rab2}
\end{table}

\noindent I conclude this section with computational complexity. To check if a sequence with $n$ elements succeeds, using the full table
 is $O(n^2)$. If you stop at the first row with 0 and 2 only, it is $O(n\log n)$. If you rely solely on the right diagonal, it is $O(n)$. 
Likewise, searching for $q^+_n$ takes $O(n)$ with brute force, and $O(\log n)$ with a binary search assuming part (2)
in conjecture~\ref{conjhk} is true. 
Thus, the total number of operations to compute $q^+_n$ for all $k\leq n$ ranges from $O(n^4)$ with the least efficient
implementation, down to $O(n^2 \log n)$ with the best one. 
As a side note, integer polynomial sequences of degree $d$ takes only $d+1$ levels in the triangle to end up in a row consisting of 0 only except for the first element.

\subsection{Canonical form reduction and equivalence classes for sequences}\label{reduxor}

If two sequences $S_1, S_2$ have the same $k$-th order differences for some $k>0$ (the $k$-th row in the triangle beneath the sequence itself), that remains true for any $k'>k$.
Thus, $S_1$ succeeds if and only if $S_2$ succeeds, if you ignore the first $k$ terms in both sequences for which success needs to be checked separately. 
The sequences $S_1, S_2$~are said to be \textcolor{index}{equivalent} at level $k$. If $S_1$ has large gaps and $S_2$ does not, it makes sense to work with $S_2$. 
Actually, you can even work with any sequence $S'_2$ built backwards one level above $k$. And neither $S_2$ nor $S'_2$ need to~be valid: in short, they do not need to be strictly increasing; they can be oscillating with tiny gaps up and down.

Tables~\ref{table:gilb1b} and~\ref{table:gilb1bx} illustrate the concept. Why work with the prime number sequence $S_1$ if you can work~with its equivalent at level 3, 
the sequence $S_2$? The triangles are the same, except for the first two rows at the~top. With $S_2$, success is already guaranteed at level 1, as the first row contains only 0 and 2. 
The numbers highlighted in yellow, on the left, represent the maximum value in the corresponding row. The next question is how~to~build levels backwards, going up starting from below, rather than the other way around. However, we have a problem: the recursion $\delta_k(q_n) = |\delta_k(q_{n-1}) - \delta_{k-1}(q_{n-1})|$ is not invertible. 
In fact, there is a very large number of ``inverses". They are all built with the following multivalued formula, with both options in~(\ref{case2}) equally acceptable:

\begin{numcases}{\delta_{k-1}(q_n)  =}
  \delta_{k}(q_n) + \delta_{k}(q_{n-1}) & \quad if $\delta_{k}(q_n) < \delta_{k}(q_{n-1})$ \label{case1} \vspace{1ex}\\
  \delta_{k}(q_n) \pm \delta_{k}(q_{n-1}) & \quad otherwise\label{case2}
\end{numcases}

\noindent That's how I built $S_2$. It is easy to check that the two tables satisfy both the inverse and standard recursions. The row at
level $k$ is denoted as $\delta_k$, with $k=0$ for the sequence itself. The leftmost element at level $k$ is denoted as $\delta_k(q_1)$, and equal to 1 if $k>0$ 
except in case of failure. When building $S_2$, first its level 2 based on level~3 in table~\ref{table:gilb1b}, then its level 1 recursively, 
then $S_2$ on top of level 1, I chose the minus sign in~(\ref{case2}) whenever possible. This
construction leads to the level 3 \textcolor{index}{canonical form} of $S_1$, in this case $S_2$ based on level 3 at $S_1$.

\begin{table}[H]
\centering
\renewcommand{\arraystretch}{1.0}
\small
\setlength{\tabcolsep}{1.pt}
\parbox{.45\linewidth}{
\centering

\begin{tabular}{*{23}{>{\centering\arraybackslash}p{0.25cm}}}
 %\hline
 % &  &  &        \\  [-2.5ex]
 % 1--grams & 2--grams & 3--grams & 4--grams \\%[0.5ex] 
% \hline 
\hline 
   &  &   &        \\  [-2.2ex]
%\hline
 & 2 & & 3 & & 5 & & 7 & & 11 & & 13 & & 17 & & 19 & & 23 & & 29 & & 31\\
\hline
  &  &  &  & & & & & & & & & &  & & & & & &   \\  [-2ex]
%10 & & 1&  & 2 & \\
\rectangled[yellow]{6} & &  1 & &  2 & &  2 & &  4 & &  2 & &  4 & &  2 & &  4 & & 6 &  & 2\\  %
\rectangled[yellow]{4} & & & 1 & & 0 & & 2 & & 2 & & 2 & & 2 & & 2 & & 2 & & 4 \\ %
2 & & & & 1 & & 2 & & 0 & & 0 & & 0 & & 0 & & 0 & &2 \\%
2 & & & & & 1 & & 2 & & 0 & & 0 & & 0 & & 0 & & 2\\
2 & & & & & & 1 & & 2 & & 0 & & 0 & & 0 & & 2\\
2 & & & & & & & 1 & & 2 & & 0 & & 0 & & 2\\
2 & & & & & & & & 1 & & 2 & & 0 & & 2\\
2 & & & & & & & & & 1 & & 2 & & 2\\ %
1 & & & & & & & & & & 1 & & 0\\
1 & & & & & & & & & & & \circled[green2]{1}\\
 %\hline
\end{tabular}
%\caption{Backend table {\fontfamily{ptm}\selectfont hash stem} (extract)}
\captionsetup{justification=centering}
\caption{The prime numbers sequence $S_1$}
\label{table:gilb1b}
}
%\hfill
\quad\quad
\parbox{.45\linewidth}{
\centering

\begin{tabular}{*{23}{>{\centering\arraybackslash}p{0.25cm}}}
 %\hline
 % &  &  &        \\  [-2.5ex]
 % 1--grams & 2--grams & 3--grams & 4--grams \\%[0.5ex] 
% \hline 
\hline 
   &  &   &        \\  [-2.2ex]
%\hline
 & 2 & & 1 & & 1 & & 1 & & 3 & & 3 & & 1 & & 1 & & 3 & & 3 & & 3\\
\hline
  &  &  &  & & & & & & & & & &  & & & & & &   \\  [-2ex]
%10 & & 1&  & 2 & \\
\rectangled[yellow]{2} & &  1 & &  0& &  0 & &  2 & &  0 & &  2 & &  0 & &  2 & & 0 &  & 0\\  %%%
\rectangled[yellow]{2} & & & 1 & & 0 & & 2 & & 2 & & 2 & & 2 & & 2 & & 2 & & 0 \\ %
2 & & & & 1 & & 2 & & 0 & & 0 & & 0 & & 0 & & 0 & &2 \\%
2 & & & & & 1 & & 2 & & 0 & & 0 & & 0 & & 0 & & 2\\
2 & & & & & & 1 & & 2 & & 0 & & 0 & & 0 & & 2\\
2 & & & & & & & 1 & & 2 & & 0 & & 0 & & 2\\
2 & & & & & & & & 1 & & 2 & & 0 & & 2\\
2 & & & & & & & & & 1 & & 2 & & 2\\ %
1 & & & & & & & & & & 1 & & 0\\
1 & & & & & & & & & & & \circled[green2]{1}\\
 %\hline
\end{tabular}
%\caption{Backend table {\fontfamily{ptm}\selectfont hash stem} (extract)}
\captionsetup{justification=centering}
\caption{Seq. $S_2$ equivalent to $S_1$ at level 3}
\label{table:gilb1bx}

}
\end{table}
\vspace{1ex}

\noindent Let $\lambda$ be the chosen number of levels, for instance, $\lambda=3$ in table~\ref{table:gilb1bx}. You would think that always choosing the minus sign in~(\ref{case2}) leads to a sequence $S_2$ with minimum growth
 as shown in table~\ref{table:gilb1bx}.
In general, this is not the case when $n$ or $\lambda$ is large: $S_2$ grows much faster than $S_1$ as the technique is a variant of the 
\textcolor{index}{greedy algorithm}.  Of course, selecting the plus sign would result in even much faster growth. 
The slowest growth is achieved with a combination of pluses and minuses. 

The benefit of the canonical form is that the largest gap in $S_2$ remains small if $\lambda$ is not too small compared to $n$. 
 We want $\lambda=\lambda_n$ large enough, yet with $\lambda_n = o(n)$.
Then we can increase both $\lambda_n$ and $n$ iteratively, with $n$ growing much faster than $\lambda_n$, without ever facing a gap larger than 2 except in case of failure.  
See table~\ref{table:k66vc0oh} for illustration: for maximum efficiency, $\lambda_n$ grows at a rate similar to that of the $n$-th record gap $g_n^*$ in prime numbers
(OEIS sequence \href{https://oeis.org/A005250}{A005250}), 
while $n$ follows the growth rate of the index of the record in question (OEIS sequence \href{https://oeis.org/A005669}{A005669}).
Under these conditions, the maximum gap in the first $n$ terms of $S_2$ is conjectured to always be 2. 
To avoid confusions, $S_2$ is denoted as $S_2(\lambda_n)$ to indicate that it is the
canonical form of $S_1$ at level $\lambda_n$.

\begin{table}[H]
\centering
\renewcommand{\arraystretch}{1.0}
\small
\begin{tabular}{c|cccccc} 
 \hline\\[-2ex]
  $n$ &  50 & 200 &   1000 & 8000 & 20000 & 40000     \\ 
\hline\\[-2ex]
$\lambda_n$ & 10 & 20 & 36 & 66 & 66 & 96\\
$g^*_n$  & 14 & 22 & 34 & 72 & 86 & 112\\[0.5ex]
 \hline 
 %  &  &  &        \\  [-2ex]
%5804 & $\num{46587}$ & $\num{66625}$ & $\num{74717}$ \\

\end{tabular}
%\caption{Backend table {\fontfamily{ptm}\selectfont hash stem} (extract)}
\caption{Level $\lambda_n$  guarantees max gap\\ $\leq 2$ in the first $n$ terms of $S_2(\lambda_n)$}
\label{table:k66vc0oh}
\end{table}

When building $S_2(\lambda_n)$, you iteratively create news rows in the triangle on top of each other, above level $\lambda_n$.  
In each of these rows, the first value must be initiated. I set the first value to 1 to keep success at all levels, assuming this was the case for 
the parent sequence $S_1$. But what would happen if you set one or two of these initial values to 3, 5, or any larger odd integer, thus causing failure
in $S_2(\lambda_n)$? In other words, getting some rows in the new triangle to start with a value $>1$. 
You can do it in the program by specifying a non-empty list of
failing levels (above $\lambda_n$), via the optional argument \texttt{fail\_level}
 in the \texttt{canonical} function located in the \texttt{Gilbreath\_lib} library. 
In my example, I use the prime number sequence with $n=1000$ for $S_1$ and $\lambda_n = 400$ for $S_2$, 
choosing 2 levels $\lambda_1{<} \lambda_2{<} \lambda_n$ starting with 3 rather than 1, with $\lambda_1=120$. The results are as follows: 
\vspace{1ex}
\begin{itemize}
\item Because $\lambda_n$ is much larger than needed given $n$, the successful sequence $S_2(\lambda_n)$ (when no failures are added) consists only of 2 at the beginning, followed
by either 1's or 3's as in table~\ref{table:gilb1bx}.  Actually, the first $\lambda_n$ values, besides the 2 at the start, are all 1. 
You can see it in figures~\ref{fig:aaj6b0gm8g} and~\ref{figgggb2cu8vaw}, where the green ``curve" represents the successful sequence on a log scale, taking on two distinct values only: 1 and 3, that is, 0 and $\log 3$ on a log scale. With the first 400 flat before oscillating randomly between the two.
\item The red curve represents $S_2(\lambda_n)$ when adding failure at 
$\lambda_1 = 120, \lambda_2 = 143$ (figure~\ref{fig:aaj6b0gm8g}) or $\lambda_1 = 120, \lambda_2 = 144$  (figure~\ref{figgggb2cu8vaw}).
Failures are magnified in $S_2(\lambda_n)$. When $\lambda_2=143$, the magnification is massive with values~$> e^{40}$ in the first 400 terms.   
\item In conclusion, adding two failures generates one of the following: (1) little blips detectable in the first 400 terms, (2) a strong explosion highly visible as in figure~\ref{figgggb2cu8vaw},
 or (3) a titanic blast that dwarfs everything else and can go on well beyond the first 400 terms
 as in figure~\ref{fig:aaj6b0gm8g}, possibly forever. Failures at levels such as case (3) can not exist in practice; it does not happen in natural
 sequences, even faulty ones. Yet it is one of the most common when reverse building a sequence by adding failures at random levels. 
In short, the failure levels can not be random. 
These insights are summarized in statement~\ref{pokomi}.  
\item What will cause a little blip, a strong explosion, or a titanic blast is impossible to predict. Failure levels well spaced out produce little blips. Other than that, the proximity
between failing levels, how close to the top level the failures are located, the parity of $\lambda_1$ and $\lambda_2$, and the left value  on a failing row (with ${>}3$ worse than 3) are the top factors influencing the outcome. Here I assume that $\lambda_n$ 
is large enough so that there are only 0 and 2 at level $\lambda_n$, besides the leftmost value.  
\end{itemize}
\vspace{2ex}
\begin{statement}\label{pokomi}
For a slow growth sequence such as the prime numbers or synthetic primes, if it has more than one failing point, there are strong constraints on the distance between
these failing points, and the parity of their indexes. Failures cannot happen at random locations, regardless of the gap sizes.  
Otherwise, the canonical sequence may explode in a way that is incompatible with acceptable bounds. 
\end{statement}\vspace{2ex}
\noindent I conclude this section with a probabilistic argument. 
I am interested in a probability central to the triangle:  the one that dictates how frequently the plus and minus signs occur in~(\ref{case2}). 
More specifically, let
\begin{equation}
\pi_k = P\big[\delta_{k}(q_n) < \delta_{k}(q_{n-1})\big].
\end{equation}
The \texttt{canonical} function in the code returns these probabilities  as the output vector \texttt{s1\_proba}, estimated on $S_1$.
For each $k$, it is computed by averaging on $n=1, 2$ and so on. 
 As usual, $k$ specifies the level,
with $k=1$ for the first order differences. 
Instead of always choosing the minus sign in~(\ref{case2}) to produce $S_2$, one may choose to
randomly alternate +/- at each level $k$ ($1\leq k < \lambda_n$) to slow the growth. And decide  how frequently to accept the minus sign depending on $\pi_k$.
My first attempt at that was inconclusive, but it revealed interesting facts:
\vspace{1ex}
\begin{itemize} 
\item Of course $\pi_1=1$ since the sequence $S_1$ is strictly increasing. Based on $n=20,000$, the next values for~the prime number sequence  are 
$\pi_2 = 0.69,  \pi_3 = 0.65,  \pi_4 = 0.49$. But for my synthetic primes, these values are $\pi_2 = 0.91,  \pi_3 = 0.51, \pi_4 = 0.40$.
The large differences confirm that the prime numbers do not abide well with the laws of randomness. 
\item Eventually, both for real and synthetic primes, $\pi_k$ decays to about 0.25 as $k$ increases. 
This may sound~like a mystery, but it has a simple explanation:  when $k$ is large and the rows contain just 0's and 2's, 
there are 4 possible combinations for $\big(\delta_k(q_n), \delta_k(q_{n-1})\big)$, each equally likely: (0, 0), (0, 2), (2, 0), and (2, 2).~Only
(2, 0)  satisfies~(\ref{case2}). This leads to my next and last conjecture.
\end{itemize}
\vspace{2ex}
\begin{conjecture}
For many sequences with no failing points including prime numbers and synthetic primes, as $k$ increases, the $k$-th row in the triangle consists almost exclusively of 0 and 2 in equal proportions. 
Also, rows $k$ and $k+1$ are uncorrelated. Of course, this is not true for integer polynomial sequences ending with 0 only, or the
geometric progression $1, 2, 4, 8, 16$ and so on, staying the same for all rows. 
\end{conjecture}

\section{Applications: cybersecurity, fraud detection and Fintech}  % quantifying chaos

I now discuss applications. In particular, I extend the concept of successful sequence to
continuous time series such as Brownian motions, and how success or failure translates into real-world data, and how to interpret it. I focus on modeling, data synthesis, random number generation and testing, liabilities in cryptography heavily relying on prime numbers, as well as error and 
pattern detection---especially hidden or hard to detect patterns. 

\subsection{Hidden patterns, error detection, checksum, and data quality audit}\label{erros}

Adding failure points at different levels when building the triangle backwards (from a lower level to the top) creates irregularities
in the reconstructed sequence. There is another way to look at it, with practical applications to detect errors in real-life data: compute the
differences in absolute value (first, second and higher orders). If some rows in the triangle start with a value other than 1, it is an
indicator of potential error or data quality issue. Especially if it impacts multiple rows and the starting value is higher than 3.

%-----------------------------vince/riemann2and3.mp4
\begin{figure}[H]
\begin{minipage}{.50\textwidth}
\centering
\captionsetup{justification=centering}
%\includegraphics[width=1.0\textwidth]{case1a.png}
%\vspace{0.25ex}
\includegraphics[height=6cm, width=8cm]{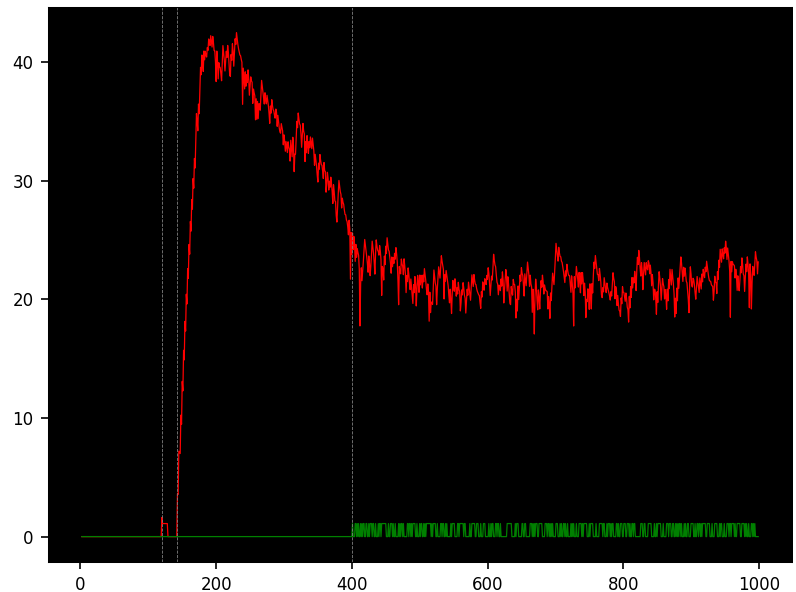}   % gb1sb.png / gbs.png
%\vspace{-1ex}
\caption{{\small Data issue detected via the canonical form: green signal when no problem is found, red otherwise.}}
\label{fig:aaj6b0gm8g}
\end{minipage}%\quad\quad%
\begin{minipage}{.50\textwidth}
\centering
\captionsetup{justification=centering}
%\vspace{-0.5ex}
%\includegraphics[width=1.0\textwidth]{case1b.png}
\includegraphics[height=6cm, width=8cm]{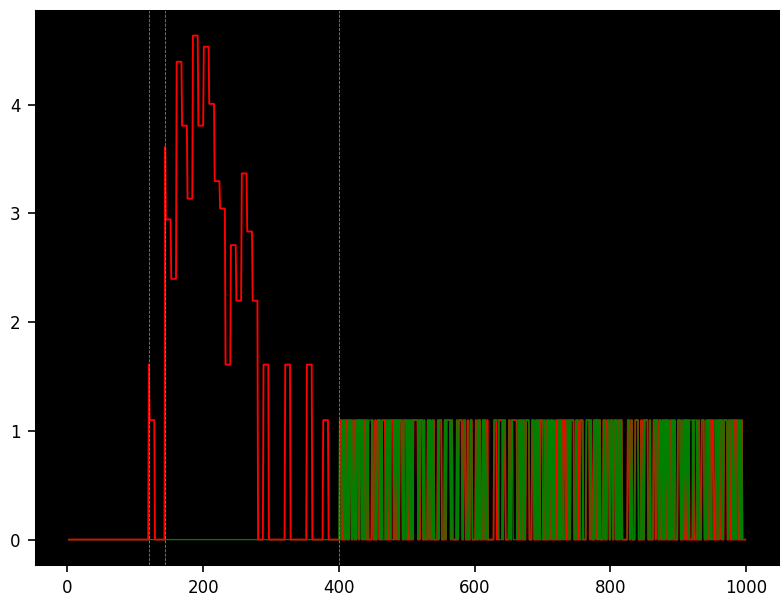}  % gb2b.png / gb2s.pnf
%\captionsetup{justification=centering}
%\vspace{0.1ex}
\caption{{\small Same as figure~\ref{fig:aaj6b0gm8g} but with less severe problem. The green signal takes on 2 values only, and starts flat.}}
\label{figgggb2cu8vaw}
\end{minipage}
\end{figure}
%imgpy9979_2and3.PNG
%-------------------------

\noindent Reconstructing the sequence backwards to find its canonical form, magnifies the errors
as seen in figures~\ref{fig:aaj6b0gm8g}~and~\ref{figgggb2cu8vaw}. It offers another way to reveal hidden patterns, errors and other types of issues in your dataset. 
In the~figures in question, the curves represent the reconstructed sequence in its canonical form, denoted as $S_2$ in section~\ref{reduxor}. 
The red curves correspond to corrupted data, that is, a row in the triangle not starting with 1. The left
element in each row plays the role of a \textcolor{index}{checksum} or error flag. When there is no issue (all rows start with 1), you get the green signal instead. 
This method can be combined with the standard technique for error detection, explained in~\cite{bubu66}.

\subsection{Fraud detection and transaction score synthesis}

There are many similarities between the admissible, valid sequences in this article, and credit card history~and transaction scoring, if you replace failure/success by fraud/no fraud or 
accepted/declined. The sequence corridor in section~\ref{effco} is the equivalent of the credit card limit over time based on customer history, slowly increasing, or much faster or even decreasing or blocked based on circumstances, while $q_n$ plays the role of the transaction amount. Each sequence represents a customer. 

The proportion of failures is low, comparable to that of fraudulent transactions. Like credit card issuers, I had to generate synthetic data to understand the 
various mechanisms that cause them. Analogy to rule-based systems is provided by the forbidden prime constellations. And finally, $q^+_n$ in 
section~\ref{pordel} is the critical threshold (based on the whole history) that tells you if the new value $q_{n+1}$ (the new credit card transaction) is flagged as fraudulent or not, or in my case, as success or failure.  A too high value is a typical cause of failure (or declined transaction), but it depends on history with some sequences (credit card users) allowed much
larger values. The subtle cause of failure in non-trivial cases, is due to complex patterns not involving large values. 

To conclude, the large sequence dataset \texttt{gil\_output.txt} and its many features, available upon request, is a great sandbox to test fraud detection and transaction
scoring algorithms. Perhaps the next step, both in the credit card industry and in my sequence system, is to generate all the possible patterns (trillions of them), then cluster
 and categorize them to finally attach a label to each of them: fraud or no fraud. This would lead to more efficient detection, especially in real time.
I started to gather all the possible patterns in my system, see section~\ref{beurred}, but I have yet to cluster them. Labeling as success/failure is trivial in my case, as that flag
is in the dataset by construction and readily available for any future ``transaction".  This last part is what makes my data particularly attractive in the context of testing fraud detection systems.

\subsection{Cellular automata random generator and new tests of randomness}\label{tnuuy7}

As noted in the last paragraphs in section~\ref{reduxor}, when you start with a large sequence and build the triangle from top to bottom, after a number of levels, you end up with 
uncorrelated rows consisting of 0 and 2 only in about the same proportions. This is the case when you start with the prime number sequence or synthetic primes,~and indeed with most non-failing sequences. The mechanism  behind the scenes is well understood: each element in the triangle is the difference in absolute value of the two elements above it. This operation is responsible for~the scrambling. Also, it prevents you from retrieving the elements in a row if you only know the elements one level below or further down. Exceptions include biased sequences such as prime numbers
where reverse engineering is somewhat less difficult due to forbidden patterns.

Thus, you can use the bottom rows of a triangle as a random number generator (PRNG). The code below does that efficiently, starting with a 
sequence consisting of $n$ bits, and then computing the successive  differences in absolute value. To keep the same number of terms in each row, I add one bit
 to the right at each level. The bit in question comes from a precomputed bitstream, here the array \texttt{last\_bit}.

\begin{quote}
\begin{lstlisting}[frame=none, escapechar=@,basicstyle=\ttfamily\footnotesize]
import numpy as np
from PIL import Image

np.random.seed(40)
n = 1000
n_levels = 1000 
sequence = np.random.binomial(n=1, p=0.5, size=n)
last_bit = np.random.binomial(n=1, p=0.5, size=n_levels)

bitmap = []

for k in range(n_levels):
    new_sequence =  np.abs(sequence[:n-1] - sequence[-(n-1):])
    new_sequence = np.append(new_sequence, last_bit[k]) ##
    covar = np.corrcoef(sequence[:n-1], new_sequence[:n-1])
    count_0 = np.count_nonzero(sequence == 0)
    count_1 = np.count_nonzero(sequence == 1)
    print(k, sequence[:20], count_0, count_1, covar[0, 1])
    sequence = new_sequence
    bitmap.append(sequence)

scaled_array = np.array(bitmap, dtype=np.uint8) * 255
img = Image.fromarray(scaled_array, mode='L').convert('1')
img.save('bitmap_output.png')
\end{lstlisting}
\end{quote}
In chapter 8 in my new book~\cite{0and1new}, I discuss NPG, one of my  cryptographically secure PRNGs, faster, and more random than the most 
recent additions to the Numpy library:  \textcolor{index}{PCG64} and \textcolor{index}{PCG64DXSM}. See my article on the topic, \href{https://mltblog.com/npg}{here}, where  I also  introduce a new test
of randomness: predicting the next bits using \textcolor{index}{large language models}, similar to predicting the next tokens, and based on a \textcolor{index}{deep neural network} (DNN). Bitstreams  for which the correct prediction rate is $> 52\%$ or $< 48\%$ are flagged as 
unsecure. For a fast alternative to DNNs, where the global optimum of the loss function is obtained in one shot with a simple closed-form formula, thus without training, 
see my blog post \href{https://bondingai.io/blog/96-correct-next-token-prediction-with-no-dnn-no-training-auto-distilled-model}{here}, and my article on the topic~\cite{arvivi}.

\begin{figure}[H]
%\begin{minipage}{.50\textwidth}
\centering
\captionsetup{justification=centering}
\includegraphics[height=16cm, width=16cm]{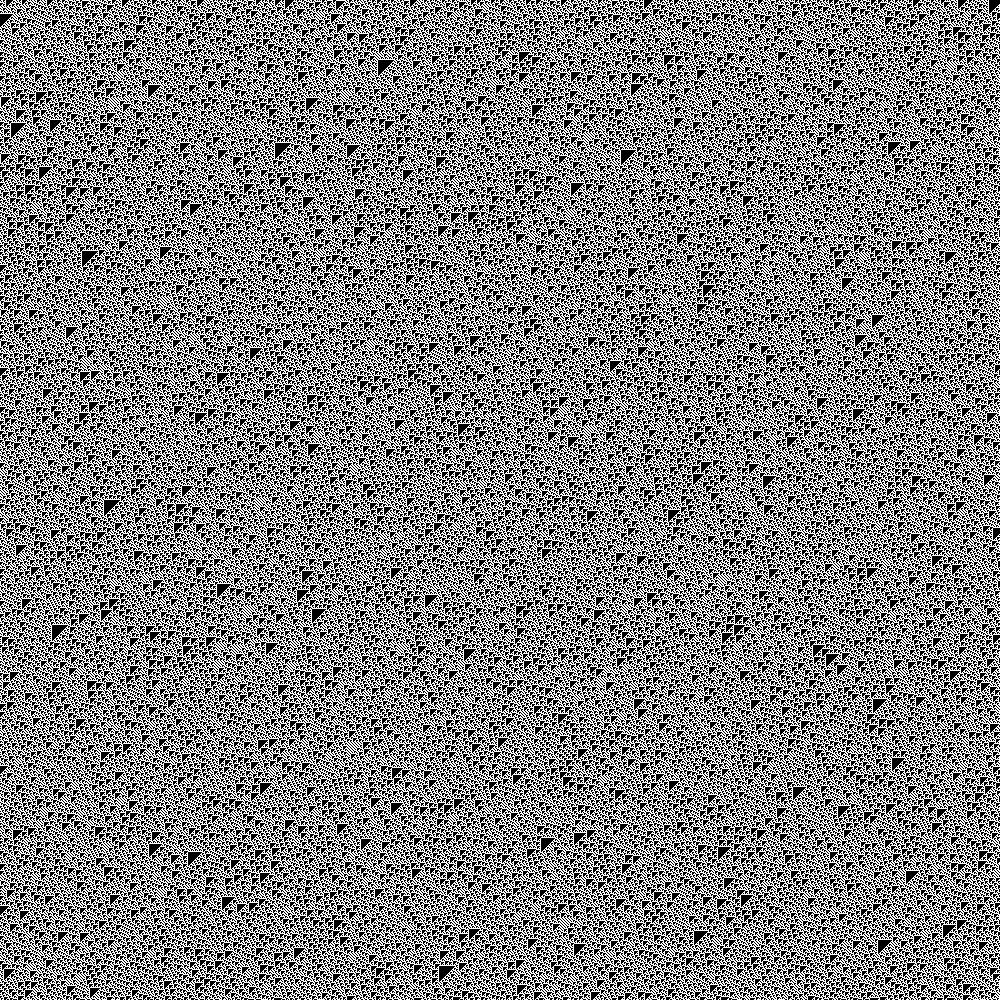} %%{gb1b.png}   % gb1sb.png / gbs.png
\caption{{\small Random numbers  generated with the code in section~\ref{tnuuy7}; each line is a level ($n=1000$, 1000 levels)}}
\label{fig:aajttf4eer55}
\end{figure}
%imgpy9979_2and3.PNG
%-------------------------

\noindent With binary numbers as in the above code, $|a-b| = \text{XOR}(a, b)$. The resulting triangle is identical to the~output of the
\textcolor{index}{rule 90 cellular automaton} [\href{https://en.wikipedia.org/wiki/Rule_90}{Wiki}], extensively studied~\cite{cell23} with numerous applications. 
Also, the associated
random number generator is fast, see \textcolor{index}{xoroshiro}~\cite{xoroshi}. By contrast to xoroshiro, my
system leads to a non-periodic PRNG thanks to the addition of the rightmost bit at each level. However, the raw version lacks some randomness as seen
in figure~\ref{fig:aajttf4eer55}: the patterns are similar to the \textcolor{index}{Sierpiński triangle}
and identical to those in the \textcolor{index}{Thomas matrix} introduced by James Thomas in 2004, see \href{https://observablehq.com/@nxrix/gilbreath-conjecture}{here}.
Zoom in on the picture to see more granular details. For other chaotic behavior
 linked to iterated differences in absolute value, see section 7.2.2 in~\cite{0and1new}.

%The quality of the random bitsream generated  // zoom in
 %is very high.  xxxxxxxxx   
%Thomas matrix was introduced by James Thomas in 2004 // image below is 1000 x 1000
% https://www.primepuzzles.net/puzzles/puzz_274.htm
% https://observablehq.com/@nxrix/gilbreath-conjecture

Finally, the material in this article led to yet another test, not found in most standard libraries: checking the indexes of the successive records in run lengths. 
Having more than $\log_2 n$ consecutive 0 in the first $n$ bits at any level in the triangle, violates the laws of randomness and increases the 
 risk of sequence failure. 
When running the above code snippet, the initial sequence fails the \textcolor{index}{record run test}. 
The \textcolor{index}{Mersenne Twister} in the Numpy \texttt{random} module causes the issue: there is a run of 8 ones in the first 14 bits; the maximum allowed this early is 4.
In practice, the initial sequence is the seed of my new PRNG. A good choice is a combination 
of interlaced binary digits coming from the square roots of various integers; the same applies to \texttt{last\_bit}.

\subsection{Unusual patterns in time series: modeling, detection and synthesis}\label{app45}

Gilbreath's conjecture and its traditional generalizations~\cite{zachch23} apply to discrete integer sequences with small~growth, strictly increasing 
and random in some ways. Here I show how it also works for discrete or time-continuous time series, smooth (no randomness) or highly chaotic,
with massive or no growth, or even decreasing or oscillating. I also discuss industrial applications, for instance in financial time series modeling.
Case studies are featured in figures~\ref{fig:aajj5gb17} and~\ref{figgggb2csku6}, where the red and green curves illustrate (resp.) sequence failure and success.  
In particular, figure~\ref{fig:aajj5gb17} shows time-continuous \textcolor{index}{Brownian motions} generated via a 
discrete random walk (the underlying sequence) with jumps up and down following a Poisson distribution of parameter $\lambda$. 
Success or failure is dictated by the value of $\lambda$, with higher $\lambda$ more prone to failures.

%-----------------------------vince/riemann2and3.mp4
\begin{figure}[H]
\begin{minipage}{.50\textwidth}
\centering
\captionsetup{justification=centering}
%\includegraphics[width=1.0\textwidth]{case1a.png}
%\vspace{0.25ex}
\includegraphics[height=6cm, width=8cm]{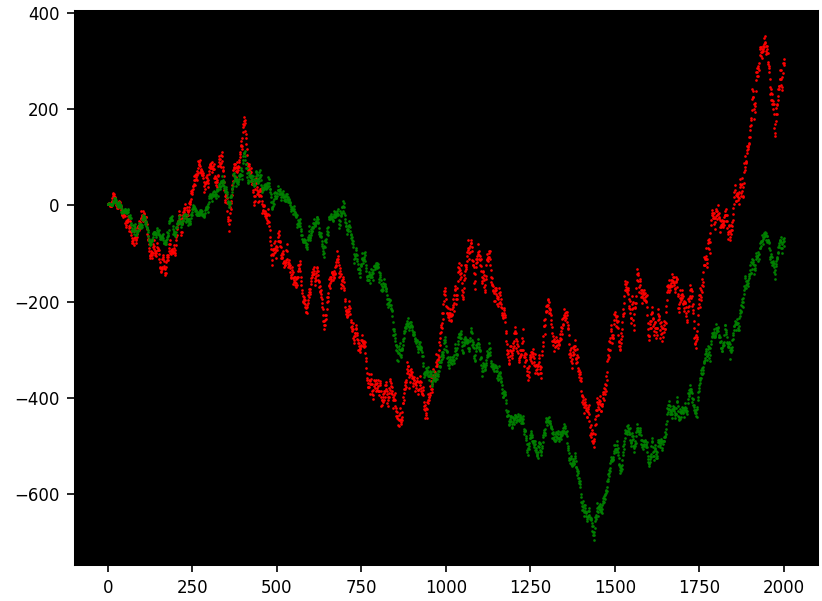} %%{gb1b.png}   % gb1sb.png / gbs.png
%\vspace{-1ex}
\caption{{\small Brownian motions moving in the same direction at each step;
red fails the test, but the green one passes it}}
\label{fig:aajj5gb17}
\end{minipage}%\quad\quad%
\begin{minipage}{.50\textwidth}
\centering
\captionsetup{justification=centering}
%\vspace{-0.5ex}
%\includegraphics[width=1.0\textwidth]{case1b.png}
\includegraphics[height=6cm, width=8cm]{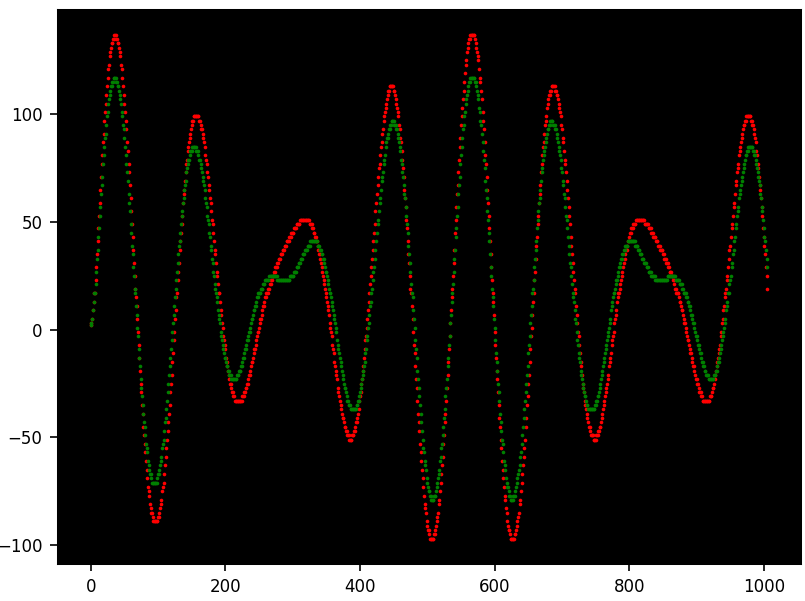} %%{gb2c.png}  % gb2b.png / gb2s.pnf
%\captionsetup{justification=centering}
%\vspace{0.1ex}
\caption{{\small Two smooth curves with nearly identical parameters;
red fails the test, green one passes it}}
\label{figgggb2csku6}
\end{minipage}
\end{figure}
%imgpy9979_2and3.PNG
%-------------------------

\noindent At a macro level, what makes the green and red curve different, resulting in a different success status, is visible~to the naked eye: it is the amount
of volatility. But in many cases, the chaos is well hidden and not~visible. A~good example is the sequence in table~\ref{table:gilb4}. 
Also, the two examples in figures~\ref{fig:aajj5gb17} and~\ref{figgggb2csku6} are borderline: green is close to failure~ and red close to success. 
What if we could have a metric that automatically quantify chaos and categorize time series at any given time based on invisible yet disruptive patterns, to
make appropriate decisions such as buying or selling assets? The \textcolor{index}{Lyapunov exponent} is one of several metrics but it does not reveal hidden patterns.
More are discussed in my book on chaotic dynamical systems~\cite{vgchaos}. A metric that takes into account more subtle patterns would be useful. 
And the material in this paper offers several alternatives for \textcolor{index}{quantized data}:
\vspace{1ex}
\begin{itemize}
\item How many levels it takes in the triangle to reach the first row consisting only of 0 and 2 (besides the odd value at the start of each row)?
The fewer you need, the less chaotic the sequence. Here chaos is defined not in a classical sense, but in connection to the triangle behavior. 
\item The number of failure points: how many times a row in the triangle starts with 3 instead of 1? Or worse, with 5 or above? How spaced out are these failure points?
It quantifies the level of chaos and the intensity of each failure point. 
\item In case of success, how close are we to a failure at any given time? This is linked to the distances
$|q_{n+1} - q^+_n|$ and $|q_{n+1} - q^-_n|$, and shown in figures~\ref{fig:aajjjjh6y}--\ref{fig:aajjjd3uyvffo}. 
 In case of failure, how close are we to success?
\item Is the canonical form of the sequence error-free, that is, do you get the green signal
in figure~\ref{figgggb2cu8vaw}, or the red one? And in case of errors, do we observe little blips, or 
a big explosion as in figure~\ref{figgggb2cu8vaw}? The case
in figure~\ref{fig:aaj6b0gm8g} (titanic, long lasting or permanent  blast) cannot happen in real-life data. 
\item For the prime number sequence and synthetic primes (both well behaved) at some point in the triangle, we end 
up with 0 and 2 only, evenly and uniformly distributed within each row. In addition, rows at this stage are independent from each other
(uncorrelated). Is this the case for your data? Even though real and fake primes are
well behaved,  the real ones exhibit little spikes
in figure~\ref{fig:aajjjjh6y}; fake ones in figure~\ref{fig:aajjjd3uyvo} don't. Those little spikes, if found in your data in a well behaved sequence, indicate some small irregularities and pinpoint their locations. The spikes point outward, which is good. If they were pointing inward, it~would increase the risk of failure.  
\item Are there missing patterns in your sequence, in the first order differences? That is, expected patterns that are absent, such as the
forbidden constellations in the prime number sequence. Missing patterns in the second or higher order differences is an indicator of a bigger problem. 
\end{itemize}
\vspace{1ex}
Finally, a time series can change from chaotic to smooth, or from fast to slow growth or even oscillating, even multiples times each with a different duration, yet succeed.
Another series, apparently better behaved when seen with the naked eye, may fail. Eventually, what my system flags as failure or success in a time series with structural changes  is not
how volatile it suddenly becomes, but whether the transition between the different phases is well controlled or not.
My set of metrics should be tested and fine-tuned on real data to see if it can catch a shift before other indicators notice it when it is too late for actions. Be it financial or Earthquake data.

%target VCs/angels in McKinney
%share my books on BDAI to get leads ***
%make the book with ch8 public

\subsection{Security issues in ciphers relying on prime numbers}

Prime numbers have patterns that significantly increase their odds to satisfy Gilbreath's conjecture, compared to other sequences with similar growth, but more random.
For instance, you cannot find 2, 2, let alone 2, 2, 2, 2, in successive prime gaps except at the very beginning. Yet 2, 2 is the most basic pair, one of the most frequent in random sequences, but it is banned in prime numbers, and referred to a \textcolor{index}{forbidden prime constellation}. At the same time, 2, 2, 2, 2 followed or preceded by a large gap, and then by another (say) 2, 2, 2 is among the most dangerous configurations: it can kill success if happening too soon in the first order differences. Another forbidden prime gap pattern is 2, 4, 2, 4, 2, and again, 
linked to higher failure risk. There are many others,~all with small numbers (thus abundant in other sequences but absent in primes), giving prime numbers some extra immunity against failure.

But what makes prime numbers strong in Gilbreath's context is exactly what makes them weak in cybersurity. The prime gaps are anything but random. 
And while there might be no forbidden patterns in higher order differences (the rows below prime gaps in the triangle), the pattern distribution remains
 biased downwards, with probabilities attached to patterns not matching those found in random data or synthetic primes. Put it differently,
given any row in the triangle, it is possible to tell whether or not it comes from prime numbers by looking at 
the probabilistic structure attached to the row in question. In short, \textcolor{index}{reverse engineering} is feasible if the original sequence consists of prime numbers, despite the 
obfuscation created by successive differentiating in absolute values. 
Ciphers or crypto-secure random numbers derived mostly from prime numbers but lacking strong obfuscation---and there are still many around---are subject to the same problem
caused by \textcolor{index}{congruential constraints}. 

\section{Conclusions}

Gilbreath's conjecture, first stated by François Proth in 1878 and unproved to this day, features a 
very simple pattern among prime numbers. It is well-known in mathematical circles as a very challenging 
unsolved mystery. In this paper, I discuss a path towards a resolution, as well as applications pertaining to Fintech,
chaotic systems, cybersecurity, high performance computing, data synthesis, random number generation, fraud detection and more, well beyond the original scope.

From a research perspective, Gilbreath's conjecture has little to do with prime numbers: a large class of sequences, including many with either slow or fast growth, with either chaotic or
smooth behavior, satisfy it. Yet the prime number sequence has something unique that makes it the ideal candidate. It's not its random character:
deterministic patterns such as forbidden prime number constellations, actually contribute to its success while the lack of randomness is a cybersecurity liability. Also, I proved that the Sieve of Eratosthenes leads to sequences that satisfy the conjecture.
It also works if you sieve in any order. Sieving in reverse order allows you to assess the long-range impact of small primes. This technique may be used
 to discover new results, perhaps about twin primes. For instance, the sum of the inverse of twin primes
converges; the proof is based on standard sieving.  

This paper turns what was initially recreational math into mainstream computer science, dynamical systems,  and number theory, 
with significant research and application potential across multiple disciplines, and many new opportunities 
yet to be explored.  Along the way, it leads to interesting integer sequences , some
listed in the OEIS encyclopedia, and some waiting to be added such as magic primes where success is always guaranteed.  Finally, it features high performance computing at its best,
with intense simulations to help progress and in some instances, even to prove statements.  Equivalent sequences and sequence reduction eliminate large gaps, 
facilitate theoretical research, and leads to applications such as hidden patterns discovery, error detection, and data quality audit. 
The connection to cellular automata leads to a very efficient random generator.

\bibliographystyle{plain} % We choose the "plain" reference style
\bibliography{refstats} % Entries are in the refs.bib file in same directory as the tex file

\appendix
\section{Appendix: Python code}\label{pythor}

The main program \texttt{gilbreath3.py} is listed in section~\ref{mcode}. It imports  \texttt{gilbreath\_lib.py}, a home-made, internal library. The latter is listed
in section~\ref{lcode}. Also, in section~\ref{mcode}, I included a summary, linking the various parts and functions in the Python code,
 to the sections and figures in this article.

\subsection{Main program}\label{mcode}

The code is also on GitHub, \href{https://github.com/VincentGranville/Experimental-Math-Number-Theory/blob/main/Source-Code/gilbreath3.py}{here}.
Below is a summary linking the different parts to the material in this article; \texttt{gil} stands for the \texttt{gilbreath\_lib.py} library. 
\vspace{1ex}
\begin{itemize}
\item {\bf Data synthesis} (synthetic primes): Function: \texttt{gil.generate\_random\_sequence}. Lines \textcolor{gray}{36--104} in~the code. Section~\ref{gsimuls} in the paper.
\item {\bf Forbidden prime constellations}: Function: \texttt{gil.banned\_found\_in\_seq}. Lines \textcolor{gray}{98} in the code.~Section \ref{forbidden} in the paper.
\item {\bf Sequence reduction} (canonical): Functions: \texttt{gil.canonical}, \texttt{gil.generate\_random\_sequence}. Lines \textcolor{gray}{105--141} in the code. Sections~\ref{reduxor} and~\ref{erros} in the paper.
\item {\bf Sequence augmentation} ($q^+_n, q^-_n$):  Functions: 
\texttt{gil.collect\_statistics}, \texttt{gil.max\_gap\_allowed}, \\
\texttt{gil.augment}, \texttt{gil.generate\_random\_sequence}, \texttt{gil.test\_conjecture}, \texttt{gil.arr\_plot}.
Lines \textcolor{gray}{142--255} in the code. Section~\ref{pordel} in the paper.
\item {\bf Sieving and reverse sieving}: Lines \textcolor{gray}{256--322} in the code. Section~\ref{siebel} in the paper.
\item {\bf Continuous time series}: Lines \textcolor{gray}{338--408} in the code. Section~\ref{app45} in the paper.
\item {\bf Full list of short sequences}: Functions: \texttt{gil.admissibility}, \texttt{gil.get\_all\_sequences}. Lines \textcolor{gray}{409--457} in the code. 
Section~\ref{beurred} in the paper.

\end{itemize}

%################### https://www.johndcook.com/blog/2009/09/09/gilbreath-conjecture/

%np.random.seed(87)  ############ my NPG allows for multiple bit streams built in parallel from different seeds
%#################### like    xxx.seed(56, 6)   for bit stream 6 ########## update doc of NPG ******

%contact john cook, the guy who reviewed my book, and all authors mentioned 

%prime gap records: https://t5k.org/notes/GapsTable.html
%https://arxiv.org/pdf/1802.07609
%https://www.perplexity.ai/search/distribution-of-prime-gap-reco-k6H2lNmJSsaLLis5E9qchw
%https://archive.lib.msu.edu/crcmath/math/math/p/p583.htm
%https://sweet.ua.pt/tos/gaps.html ****
%https://www.google.com/search?q=maximum+asymptotic+value+for+the+gap+between+two+primes
%https://www.ford126.web.illinois.edu/wwwpapers/primegaps.pdf

\vspace{1ex}
\noindent The code is based on Python 3.13.3. Besides my library \texttt{gilbreath\_lib.py} listed in section~\ref{lcode}, it requires Numpy 2.2.5, Matplotlib  3.10.1, PrimePy, and Time. It also produces the output file \texttt{gil\_output.txt}, which is not discussed in the article but linked to the computation of $q^+_n$. Throughout the code,
the main sequence is denoted as \texttt{arr\_a},  its last element $q_n$ as the variable \texttt{a}, and the first order differences as
\texttt{first\_diff}. 
\vspace{1ex}

\begin{lstlisting}[numbers=left,escapechar=@,basicstyle=\ttfamily\footnotesize]
import importlib
import gilbreath_lib as gil
importlib.reload(gil)

import time
import numpy as np
import matplotlib.pyplot as plt
import matplotlib as mpl

mpl.rcParams['axes.linewidth'] = 0.5
plt.rcParams['xtick.labelsize'] = 8
plt.rcParams['ytick.labelsize'] = 8
plt.rcParams['legend.fontsize'] = 'x-small'


#--- 1. Showing how central function gil.check works

nprimes = 30 
from primePy import primes
prime_list = primes.first(nprimes)
arr_a = prime_list[0:nprimes]
# arr_a = [2, 3, 5, 11, 13, 15, 31, 33, 35, 39, 51 ]
print("--- Part 1: gil.check\n")

flag, first_amax_idx, hash_stats = gil.check(arr_a, show_triangle = 'Full', mode = 'Full', delta = 'standard') 

right_diagonal = hash_stats['right_diagonal']
left_diagonal = hash_stats['left_diagonal']
failing_level = hash_stats['failing_level']
print("Right diagonal:", right_diagonal)
print("Left diagonal:", left_diagonal)
print("Failing_level:", failing_level)
print()


#--- 2. Detect failed sequences by simulation [Poisson increments]

parameters = { 
               'prime': [1.0, 1.3, 1.50, 5],
               'power': [0.9, 1.1, 1.25, 5, 1.50],
             }

model = 'prime'

n = 60  # number of values to add after arr_init
nsamples = 400 
llambda = 1.5 
rejection_sampling = True   # set to True to guarantee sequences are in the corridor
arr_init = [2, 3, 5, 7, 11, 13, 17, 19, 23, 29, 31, 37, 41, 43]  
rng = np.random.default_rng(seed=42) 

offset = len(arr_init)

from primePy import primes
prime_list = primes.first(n + offset)
p_n = prime_list[-1]
success = 0
in_corridor = 0
sum = 0
hash_fail = {}
print("--- Part 2: Simulations\n")

for sample in range(nsamples):

    arr_a = gil.generate_random_sequence(n, arr_init, llambda, parameters, model, rng)
    sum += arr_a[-1]
    flag, first_amax_idx, hash_stats = gil.check(arr_a, show_triangle = 'None')
    failing_level = hash_stats['failing_level']
    first_diff = hash_stats['first_diff']  
    left_diaginal = hash_stats['left_diagonal']
    valid, bounded, smooth, fvals, sequence = gil.admissibility(first_diff, parameters, model)  # fvals  empty if in corridor

    if flag == 'Fail' and (bounded or smooth):

        snapshot = arr_a[offset-1 : failing_level]
        left_diagonal = hash_stats['left_diagonal']
        sigma = 1 + left_diagonal[-1]
        failing_diff = tuple(first_diff[0:failing_level])
        gamma = np.max(failing_diff)
        hash_fail[failing_diff] = (failing_level, sigma, gamma, failing_diff[-2], failing_diff[-1])
        print("Failed:", sample, failing_level, sigma, bounded, smooth, arr_a[-1], 
                         p_n, failing_level, first_diff[0:failing_level]) 

    if  bounded and smooth:
        in_corridor += 1
    if flag == 'Success':
        success += 1
    if sample % 500 == 0:
        print("Sample", sample, success-1)

avg_last_value = sum/nsamples

for first_diff in hash_fail:

    value = gil.make_readable(hash_fail[first_diff])
    first_diff = gil.make_readable(first_diff)
    first_diff_seq = first_diff[offset-6:len(first_diff)]
    banned_found, banned_pattern = gil.banned_found_in_seq(first_diff_seq)
    banned_pattern = gil.make_readable(banned_pattern)

failed_deduped = len(hash_fail)
print("Simulations:", success, "/", nsamples, avg_last_value, p_n, in_corridor, failed_deduped)
print()

#--- 3. Canonical form of sequence 

n_levels = 400 
fail_levels = { 120:3, 281:3 }  # err1.png 
fail_levels = { 120:3, 143:3 }  # err2.png
fail_levels = { 120:3, 144:3 }  # err3.png

N = 1000 
primes_ = 'real'  # options: 'real' or 'fake'

if primes_ == 'real':
    prime_list = primes.first(N)
else:
    arr_init = [2, 3, 5, 7, 11, 13, 17, 19, 23, 29]  
    llambda = 2.5 
    prime_list = gil.generate_random_sequence(N, arr_init, llambda, parameters, model, rng)

s1 = np.copy(prime_list)
valid = False
print("--- Part 3: Canonical form\n")

s2_failed, s2_failed_gaps, s1_bottom, s1_max_gap, s1_proba = gil.canonical(s1, n_levels, fail_level = fail_levels, valid = valid)
s2_succes, s2_succes_gaps, s1_bottom, s1_max_gap, s1_proba = gil.canonical(s1, n_levels, valid = valid)

plt.gca().set_facecolor('black')
arr_k = np.arange(2, len(s1))
plt.plot(arr_k, np.log(s2_failed[2:]), c='red', lw=0.9)
plt.plot(arr_k, np.log(s2_succes[2:]), c='green', lw=0.6) 
for k in fail_levels:
    plt.axvline(x=k, color='gray', linestyle='--', linewidth=0.4)
plt.axvline(x=n_levels, color='gray', linestyle='--', linewidth=0.4)
plt.show()

print("Summary:", N, n_levels, np.max(s2_succes_gaps), np.max(s1_bottom), np.max(s2_succes), np.max(s1), s1_max_gap)
print("Probas:", s1_proba[0:5])

 
#--- 4. Find bounds for next prime, to keep the sequence successful moving forward

sequence_type = 'prime_gaps' # options: 'primes', 'log_gaps', 'power_gaps'
prime_type = 'real'  # 'real', 'simulated'
nprimes = 5000 
start = 2 # must be 2 or higher

test_mode = False  # very slow if True, will test conjecture
if test_mode:
    conjecture = 'satisfied' # satisfied until proven wrong
else:
    conjecture = 'untested'

if prime_type == 'real':
    from primePy import primes
    prime_list = primes.first(nprimes)
else:
    arr_init = [2, 3, 5, 7, 11, 13, 17, 19, 23, 29, 31, 37, 41, 43] 
    llambda = 2.5 # 5
    prime_list = gil.generate_random_sequence(nprimes, arr_init, llambda, parameters, model, rng)
    flag, acheck, hash_stats = gil.check(prime_list, show_triangle = 'None', mode = 'Fast')
    print("check status of generated sequence:", flag)

arr_k = []
arr_min = []
arr_max = []
arr_a = []
arr_old_a = []
resets_list = ()
a = prime_list[start-1]
print("\n--- Part 4: find q_plus, q_minus\n")

OUT = open("gil_output.txt", "wt") 
print("idx", "reset", "rvbool", "k", "a", "a-old_a", "max_a-a-k", "end", "counts", 
      "dmax", "dmax2", "dmax2_idx", "extract", file = OUT, sep="\t")

# mode = 'Full' to not stop as soon as success is detected and get full diagonal
flag, acheck, hash_stats = gil.check(prime_list[0:start], mode = 'Full')
right_diagonal = hash_stats['right_diagonal'] 
right_diagonal = gil.make_readable(right_diagonal) 
hash_output = gil.collect_statistics(right_diagonal)

start_time = time.perf_counter()

for k in range(start, nprimes): 

    # given previous primes, find possible range for next one to guarantee success
    # augmenting sequence with previous prime as starting point, seems to always succeeds 

    old_a = a  
    old_hash_output = hash_output
    
    max_a, trials1 = gil.max_gap_allowed(right_diagonal, old_a) 
    min_a, trials2 = gil.min_gap_allowed(right_diagonal, old_a)

    if sequence_type == 'prime_gaps': 
            a = prime_list[k]
    else:
        if sequence_type == 'log_gaps':
            # mimic prime growth 
            gap = int(np.log(max_a)) # max_a can be as large as 2a
        elif sequence_type == 'power_gaps':
            gap = int(max_a**0.35) # exponent must be < 1
        gap = 2 + (gap - gap % 2)
        a += gap

    arr_k.append(k)
    arr_min.append(min_a)
    arr_max.append(max_a)
    arr_a.append(a)
    arr_old_a.append(old_a)
        
    if test_mode:
        # check if all 'a' are in [old_a, max_a] succeed, and max_a + 2 fails
        satisfied, flag = gil.test_conjecture(old_a, max_a, right_diagonal)
        if not satisfied:
            conjecture = 'not satisfied'

    old_idx = old_hash_output['before_02_index']
    old_nu_2 = old_hash_output['nu_2']
    success, right_diagonal = gil.augment(right_diagonal, a)
    right_diagonal = gil.make_readable(right_diagonal) 
    hash_output = gil.collect_statistics(right_diagonal)

    reset     = hash_output['reset'] 
    counts    = hash_output['counts'] 
    dmax2     = hash_output['max_value'] 
    dmax2_idx = hash_output['max_index'] 
    end       = hash_output['before_02_value'] 
    idx       = hash_output['before_02_index'] 
    dmax      = hash_output['max_tail_value'] 
    # ri        = hash_output['reversal_idx'] 
    # rval      = hash_output['reversal_value']
    rvbool    = hash_output['reversal']

    if reset:
        resets_list = (*resets_list, k) 
    if k % 1000 == 0:
        print("progress:", k, "/", nprimes)

    extract = right_diagonal[1:20] 
    print(old_idx, idx, old_nu_2, reset, rvbool, k, a, a-old_a, max_a-a-k, end, counts,
           dmax, dmax2, dmax2_idx, extract, file = OUT, sep="\t")
    
end_time = time.perf_counter()
elapsed_ms = end_time - start_time
print(f"Elapsed time: {elapsed_ms:.2f} ms")
OUT.close()

gil.arr_plots(arr_k, arr_min, arr_max, arr_a) # , resets_list) 

# future project:  analyse p_n - 2 p_{n-1} + p_{n-2}


#--- 5. Produce trimmed sequences with Sieve of Eratosthenes

all_primes = False
max_period = 20000
if all_primes:
    from primePy import primes
    # use the first 100k primes
    moduli = primes.first(100000)
else:
    moduli = (2, 3, 5, 7, 11, 13, 17, 19)

period = 1
for p in moduli:
    period *= p - 1
print("period = ", period)
arr_a = [2, 3]
start = 4 
end =  min(period + len(moduli), max_period)

count = 0
k = start

while count < min(end, max_period): 
   keep = True
   index = 0
   while keep and index < len(moduli):
       mod = moduli[index]
       if k % mod == 0:
           keep = False
       index += 1
   if keep or k in moduli: 
       arr_a.append(k)
       count += 1
   k += 1
   if k % 1000000 == 0:
       print("/...", k, count)

print("\n--- Part 5: Sieve of Eratosthenes\n")
show_triangle = 'Compact'
arr_a_truncated = arr_a[0:max_period]
flag, acheck, hash_stats = gil.check(arr_a_truncated, show_triangle)

# gap combos 

print()
hash_gaps = {} 
nobs = len(arr_a)
for k in range(1,nobs):
    gap = arr_a[k] - arr_a[k-1]
    block = int(4*k/nobs)
    if (gap, block) in hash_gaps:
        hash_gaps[(gap, block)] += 1
    else:
        hash_gaps[(gap, block)] = 1

for gap in range(2, 16, 2):
    for block in range(4):
        key = (gap, block)
        if key in hash_gaps:
            count = hash_gaps[key]
        else:
            count = 0
        print("Gap combos:", key, count)
print("len arr_a", len(arr_a))
   

#--- 6. Impact of removing/adding one number

arr_list = [[2, 3, 5, 7, 11, 13],
            [2, 3, 5, 7, 13],
            [2, 3, 5, 7, 11, 13, 19],
            [2, 3, 5, 9, 13, 17, 25, 49, 81, 121, 169, 225, 313],
            [2, 3, 5, 9, 13, 17, 25, 49, 81, 121, 169, 225, 301, 455, 753, 1347, 2549, 2567, 2999, 3439],
            [2, 3, 5, 9, 13, 11, 15, 9, 7, 11, 21, 25, 27, 21, 13, 17]]  

print("\n--- Part 6: Removing/adding one term\n")

for arr_a in arr_list:
    flag, acheck, hash_stats = gil.check(arr_a, show_triangle)
    print("\n")


#--- 7. Continuous time series: Brownian motion

rng = np.random.default_rng(45)
np.random.seed(45)

# Generate a single Poisson random variable with lambda (lam) = 4
single_val = rng.poisson(lam=4)
arr_a = [2, 3]
arr_b = [2, 3]
a = arr_a[-1]
b = arr_a[-1]

for k in range(2000):
   sign = np.random.uniform(0, 1)
   if sign < 0.5:
       sign = -1
   else:
       sign = 1
       llambda_a = 0.8*np.log(1+k) 
       llambda_b = 0.5*np.log(1+k) 

   a = a + 2 * sign * rng.poisson(lam=llambda_a) 
   b = b + 2 * sign * rng.poisson(lam=llambda_b)
   arr_a.append(a)
   arr_b.append(b)

print("\n--- Part 7: Brownian motion\n")
show_triangle = 'Compact'
flag_a, acheck_a, hash_stats_a = gil.check(arr_a, show_triangle)
arr_stats_a = hash_stats_a['arr_amax']
flag_b, acheck_b, hash_stats_b = gil.check(arr_b, show_triangle)
arr_stats_b = hash_stats_b['arr_amax']

arr_k = np.arange(0, len(arr_a), 1)
plt.scatter(arr_k, arr_a, linewidth = 0.0, s = 1.6, c= 'red')  
plt.scatter(arr_k, arr_b, linewidth = 0.0, s = 1.6, c= 'green')

plt.gca().set_facecolor('black')
plt.show()
print(flag_a, len(arr_stats_a), flag_b, len(arr_stats_b)) 


#--- 8. Continuous time series: Smooth curve

arr_a = [2, 3, 5, 9, 13, 17]
arr_b = [2, 3, 5, 9, 13, 17]
a0 = arr_a[-1]
b0 = arr_a[-1]

for k in range(1000):
    a = a0 + 2*int(38*np.sin(k/21) + 23*np.sin(k/17)) 
    b = b0 + 2*int(28*np.sin(k/21) + 23*np.sin(k/17)) 
    arr_a.append(a)
    arr_b.append(b)

print("\n--- Part 8: Smooth curve\n")
show_triangle = 'Compact'
flag_a, acheck_a, hash_stats_a = gil.check(arr_a)  
arr_stats_a = hash_stats_a['arr_amax']
flag_b, acheck_b, hash_stats_b = gil.check(arr_b)
arr_stats_b = hash_stats_b['arr_amax']


arr_k = np.arange(0, len(arr_a), 1)
plt.scatter(arr_k, arr_a, linewidth = 0.0, s = 3.0, c= 'red') 
plt.scatter(arr_k, arr_b, linewidth = 0.0, s = 3.0, c= 'green')
plt.gca().set_facecolor('black')
plt.show()
print(flag_a, len(arr_stats_a), flag_b, len(arr_stats_b)) 


#--- 9. Build all short sequences with n <= nlevels, sigma in (0, 6, 2) 

left_start = 1
nlevels = 5

parameters = { 
               'prime': [1.0, 1.3, 1.50, 5],
               'power': [0.9, 1.1, 1.25, 5, 1.50],
             }

model = 'prime'

hash_count = {}
hash_count_valid = {}
hash_count_bounded = {}
hash_count_smooth = {}


print("\n--- Part 9: Build all short sequences starting from bottom\n")
gil.test_primes(parameters)
print()

for seed in range(0, 6, 2):
    # seed is denored as sigma in the paper

    full_hash = gil.get_all_sequences(seed, nlevels, left_start)

    for first_differences in full_hash:

        level = full_hash[first_differences]  
        key = (seed, level)
        gil.update_hash(hash_count, key, 1)
        valid, bounded, smooth, fvals, sequence = gil.admissibility(first_differences, parameters, model)
        if valid:
             gil.update_hash(hash_count_valid, key, 1)

        if valid and bounded:
            gil.update_hash(hash_count_bounded, key, 1)
            if smooth: 
                gil.update_hash(hash_count_smooth, key, 1)
                if level >= 2:
                    print("output part 9A", seed, level, sequence)

for key in hash_count:
    all = gil.get_hash(hash_count, key)
    valid = gil.get_hash(hash_count_valid, key)
    bounded = gil.get_hash(hash_count_bounded, key)
    smooth = gil.get_hash(hash_count_smooth, key)
    print("output part 9B:", key, all, valid, bounded, smooth)
\end{lstlisting}

\subsection{Gilbreath library}\label{lcode}

The code below features the \texttt{gilbreath\_lib.lib} library, used extensively in the main program in section~\ref{mcode}.
It is also on GitHub, \href{https://github.com/VincentGranville/Experimental-Math-Number-Theory/blob/main/Source-Code/gilbreath_lib.py}{here}. 
\vspace{1ex}

\begin{lstlisting}[numbers=left,escapechar=@,basicstyle=\ttfamily\footnotesize]
import numpy as np
import copy

#--- Utils

def make_readable(data, type = "int"):

    readable_data = []
    for elt in data:
        if type == "int":
            readable_data.append(int(elt))
        elif type == "float":
            readable_data.append(float(elt))
    return(readable_data)


def update_hash(hash, key, count):
    if key in hash:
        hash[key] += count
    else:
        hash[key] = count
    return(hash)


def get_hash(hash, key):
    if key in hash:
        count = hash[key]
    else:
        count = 0
    return(count)


def arr_plots(arr_k, arr_min, arr_max, arr_a, resets_list = ()):

    import matplotlib.pyplot as plt
    import matplotlib as mpl

    mpl.rcParams['axes.linewidth'] = 0.5
    plt.rcParams['xtick.labelsize'] = 8
    plt.rcParams['ytick.labelsize'] = 8
    plt.rcParams['legend.fontsize'] = 'x-small'

    arr_k = np.array(arr_k)
    arr_min = np.array(arr_min)
    arr_max = np.array(arr_max)
    arr_a = np.array(arr_a)

    from matplotlib.ticker import FormatStrFormatter  
    from matplotlib.ticker import MaxNLocator

    fig, (ax1, ax2, ax3) = plt.subplots(1, 3, figsize=(0.8*9, 0.8*3))

    ax1.yaxis.get_offset_text().set_fontsize(6)
    ax1.plot(arr_k, arr_min, lw=0.4, color='lightgreen')
    ax1.plot(arr_k, arr_max, lw=0.4, color='orange')
    ax1.plot(arr_k, arr_a, color='white', lw = 0.4)
    for k in resets_list:
        ax1.axvline(x=k, color='red', lw=0.4)
    ax1.ticklabel_format(style='sci', scilimits=(0, 0), axis='y', useMathText=True)
    ax1.yaxis.set_major_locator(MaxNLocator(nbins=5, prune='lower'))
    ax1.set_facecolor('black')
    ax1.tick_params(labelsize=6, pad=2, length=2)  

    ax2.plot(arr_k, arr_min/arr_a, lw=0.4, color='lightgreen') 
    ax2.plot(arr_k, arr_max/arr_a, lw=0.4, color='orange') 
    ax2.axhline(y=1.0, color='white', lw=0.4) 
    ax2.ticklabel_format(style='sci', scilimits=(0, 0), axis='y', useMathText=True)
    ax2.yaxis.set_major_formatter(FormatStrFormatter('%.1f'))
    ax2.yaxis.set_major_locator(MaxNLocator(nbins=5, prune='lower'))
    ax2.set_facecolor('black')
    ax2.tick_params(labelsize=6, pad=2, length=2) 
    ax2.set_ylim(0.2, 1.8)

    ax3.yaxis.get_offset_text().set_fontsize(6)
    ax3.plot(arr_k, arr_min-arr_a, lw=0.4, color='lightgreen')
    ax3.plot(arr_k, arr_max-arr_a, lw=0.4, color='orange')
    ax3.axhline(y=0.0, color='white', lw=0.4) 
    ax3.ticklabel_format(style='sci', scilimits=(0, 0), axis='y', useMathText=True)
    ax3.yaxis.set_major_locator(MaxNLocator(nbins=6, prune='lower'))
    ax3.set_facecolor('black')
    ax3.tick_params(labelsize=6, pad=2, length=2)

    plt.tight_layout()
    plt.subplots_adjust(wspace=0.15) 
    plt.show()
    return()


#--- banned prime gaps series and simulation

def banned_gaps_loop(gaps, start):

    val = start 
    seq = (val,)
    for k in range(0, len(gaps)):
        val += gaps[k]
        seq = (*seq, val) 

    banned = False

    for modulo in (3, 5, 7, 11, 13): 
        hash_residues = {}
        for val in seq:
            residue = val % modulo
            update_hash(hash_residues, residue, 1) 
        
        if len(hash_residues) == modulo:
            banned = True
            break

    return(banned)


def banned_gaps(gaps):

    banned = True
    for start in (2, 4, 6, 8, 10, 12):
        banned_start = banned_gaps_loop(gaps, start)
        if not banned_start:
            banned = False
            break
    return(banned)


def banned_found_in_seq(first_diff_seq):

    banned_found = False
    banned_pattern = ()

    for size in (2, 3, 4, 5, 6, 7):
        for k in range(len(first_diff_seq)-size):
            gaps = ()
            for j in range(size):
                gaps = (*gaps, first_diff_seq[k+j]) 
            banned = banned_gaps(gaps)
            if banned:
                banned_found = True
                banned_pattern = gaps
                break
        if banned_found:
            break
    return(banned_found, banned_pattern)


def generate_random_sequence(n, arr_init, llambda, parameters, model, rng, rejection_sampling = True): 

    # the generated sequence is guaranteed to be valid (no duplicates)

    q = arr_init[-1]  
    arr_a = np.copy(arr_init)
    sum = np.sum(arr_a)
    offset = len(arr_init)

    for k in range(n):  
      
        ccontinue = True

        while ccontinue: 
            # until in corridor
            
            u = 2 + 2*rng.poisson(lam=llambda)  # try increasing llambda
            new_q = q + u
            current_n = offset + k + 1

            admissible = admissibility_local(q, new_q, sum, current_n, parameters, model)
            if admissible or not rejection_sampling:
                ccontinue = False

        q += u
        arr_a = np.append(arr_a, q)

    return(arr_a)


#--- get canonical form of sequence arr_a

def canonical(s1, n_levels, fail_level = {}, valid = False, threshold = 1.0):

    # s1 is in the input sequence
    # s2 is the canonical form of s1 if nor failures added during construction
    # fail_level = {} for no failure
    # fail_level = {23:3, 41:5} to build fails at level 23, 41 resp. with 3, 5 instead of 1 (1 = no fail) 

    arr_a = np.copy(s1)
    np.random.seed(66)
    n = len(arr_a)
    arr_p =[-1]

    for level in range(1, n_levels+1, 1):

        arr1 = np.copy(arr_a)
        arr_a = []
        cnt = 0
        for k in range(len(arr1) - 1):
            a = abs(arr1[k+1] - arr1[k])
            arr_a.append(a)
            if arr_a[k] < arr1[k]:
                cnt += 1
        arr_a = make_readable(arr_a)
        p = cnt / len(arr_a)
        arr_p.append(p)
        if level == 1:
            s1_max_gap = np.max(arr_a)

    s1_bottom = np.copy(arr_a)

    for level in range(n_levels-1, -1, -1):

        # "reduced" primes is last arr_a produced (after adding 1)

        arr1 = np.copy(arr_a)
        if level in fail_level:
            fail_value = fail_level[level]
            arr_a = [fail_value]
        else:
            arr_a = [1]   

        for k in range(1, len(arr1)+1):

            a1 = arr_a[k-1] + arr1[k-1]
            a2 = arr_a[k-1] - arr1[k-1]   
            if a2 < 0:
                a = a1
            else:
                u = np.random.uniform(0,1)
                # random choice
                if u < threshold:  # try u < arr_p[level]
                    a = a2
                else:
                    a = a1
            if valid:
                if level == 0:
                    # make sure next prime is not less than previous prime
                    a = a1
                elif level == 1:
                    # make sure next prime is bigger than previous one
                    if a2 == 0:
                        a = a1
            arr_a.append(a)

        arr_a = make_readable(arr_a)
        if level == 1:
            s2_gaps = np.copy(arr_a)

    for k in range(len(arr_a)):
        arr_a[k] += 1
    s2 = np.copy(arr_a)
    s1_proba = make_readable(arr_p, type = "float")

    return(s2, s2_gaps, s1_bottom, s1_max_gap, s1_proba)


#--- find bounds for next prime

def augment(right_diagonal, a):

    new_diagonal = [a,]
    val = a
    success = False
    for j in range(len(right_diagonal)):  
        next = abs(val - right_diagonal[j])
        new_diagonal.append(next)
        val = next
    if val == 1:
        success = True
    return(success, new_diagonal)


def max_gap_allowed(right_diagonal, a):

    lower = a
    upper = 2*a + 1
    found = False
    trials = 0

    while not found:
        trials += 1
        b = (upper + lower) // 2 
        if b % 2 == 0:
            b += 1
        success, new_diagonal = augment(right_diagonal, b)
        if success:
            lower = b
        else:
            upper = b 
        if upper - lower == 2 or trials > 28:
            found = True

    return(b, trials)


def min_gap_allowed(right_diagonal, a):

    lower = 1
    upper = a
    found = False
    trials = 0

    while not found:
        trials += 1
        b = (upper + lower) // 2 
        if b % 2 == 0:
            b += 1
        success, new_diagonal = augment(right_diagonal, b)
        if success:
            upper = b
        else:
            lower = b 
        if upper - lower == 2 or trials > 280:
            found = True
        
    return(b, trials)


def test_conjecture(old_a, max_a, right_diagonal):

    satisfied = True
    flag = 'no issue'

    # at 'a = max + 2', we must fail, it's above the max limit
    success1, test_diagonal = augment(right_diagonal, max_a + 2)
    success1 = bool(not success1)

    if success1:
        # check values below max_a; they must all succeed
        for a in range(old_a, max_a, 2):  
            success2, test_diagonal = augment(right_diagonal, a)
            if not success2:
                flag = (old_a, a, max_a)
                satisfied = False
                break
    else: 
        flag = 'a_max + 2 succeeds, not supposed to'
        satisfied = False
    return(satisfied, flag)


def collect_statistics(right_diagonal):

    length = len(right_diagonal)
    idx = -1
    idx2 = -1
    end = -1
    reversal_idx = - 1
    reversal_value = -1
    reversal = False
    reset = False
    hash_output = {}

    for kx in range(length-1, 0, -1):
        if right_diagonal[kx] > 2:
            idx = kx
            end = right_diagonal[kx]
            break
    if idx == -1:
        reset = True 

    dmax2 = np.max(right_diagonal[1:])
    for kx in range(length-1, 0, -1):
        if right_diagonal[kx] == dmax2:
            dmax2_idx = kx
            break

    for kx in range(len(right_diagonal[1:])):
        d1 = right_diagonal[kx]
        d2 = right_diagonal[kx+1]
        if d2 >= d1 and d2 > 2:
            reversal_idx = kx + 1
            reversal_value = d2
            reversal = True
            break

    # dmax is maximum value after the first log(length)
    start = int(np.log(length))
    dmax = np.max(right_diagonal[start:]) 

    counts = ()
    for kx in (0, 2):
        counts = (*counts, right_diagonal.count(kx))
    
    arr_cycle02 =right_diagonal[idx+1:-2]
    nu_2 = arr_cycle02.count(2)

    hash_output['reset'] = reset
    hash_output['counts'] = counts
    hash_output['max_value'] = dmax2
    hash_output['max_index'] = dmax2_idx
    hash_output['before_02_value'] = end
    hash_output['before_02_index'] = idx
    hash_output['max_tail_value'] = dmax
    hash_output['reversal_idx'] = reversal_idx
    hash_output['reversal_value'] = reversal_value
    hash_output['reversal'] = reversal
    hash_output['nu_2'] = nu_2

    return(hash_output)


#--- Main function

def check(arr_a, show_triangle = 'None', mode = 'Fast', delta = 'standard'):

    hash_stats = {}
    right_diagonal = []
    left_diagonal = []
    arr_amax = []
    right_diagonal.append(arr_a[-1])
    arr1 = np.copy(arr_a)
    arr1 = np.array(arr1).astype(int)
    if show_triangle in ('Full',): 
        print("Sequence:", arr_a) 
    imax = len(arr_a) - 1
    flag = 'Success'
    level = 0
    failing_level = -1

    for k in range(imax, 0, -1):

        level += 1
        arr2 = np.zeros(k)
        for j in range(k):
            if delta == 'standard':
                arr2[j] = abs(arr1[j+1] - arr1[j])
            elif delta == 'light':
                arr2[j] = max(abs(arr1[j+1] - arr1[j]), 1)
        if k == imax:
            first_diff = np.copy(arr2) 
        right_diagonal.append(arr2[-1])
        left_diagonal.append(arr2[0])

        amax = np.max(arr2)  
        first_amax_idx = 1 + np.argmax(arr2 == amax) 
        first_amax_val = int(arr1[first_amax_idx - 1])
        amax = int(amax)
        arr1 = np.copy(arr2)
        arr1 = np.array(arr1).astype(int)

        count = np.count_nonzero(arr1 == amax)
        arr_amax.append([level, amax, count]) 
        if arr1[0] != 1:
            flag = 'Fail'
            failing_level = level 
            if not mode == 'Full':
                break
        if show_triangle == 'Compact':
            print("Level",level,"|amax=",amax,"|idx=",first_amax_idx,"|val=",first_amax_val,"|count=",count) 
        elif show_triangle == 'Full':
            print("Level",level,"|amax=",amax,"|idx=",first_amax_idx,"|val=",first_amax_val,arr1) 
        if amax == 2:
            flag = 'Success'
            if mode == 'Fast':
                break

    right_diagonal = np.array(right_diagonal).astype(int)
    left_diagonal = np.array(left_diagonal).astype(int)
    first_diff = np.array(first_diff).astype(int)

    hash_stats['arr_amax'] = arr_amax
    hash_stats['first_diff'] = first_diff
    hash_stats['right_diagonal'] = right_diagonal
    hash_stats['left_diagonal'] = left_diagonal
    hash_stats['failing_level'] = failing_level

    return(flag, first_amax_idx, hash_stats)


def admissibility_local(q, proposed_next_q, sum, n, parameters, model):

    # testing a new value; assumes it is admissible prior to that

    constants = parameters[model]
    lower_constant = constants[0]  
    upper_constant = constants[1]
    smooth_constant = constants[2]

    old_sum = sum
    sum += proposed_next_q

    if model == 'power':
        exponent = constants[4]
        growth = n**exponent 
    elif model == 'prime':
        growth = n*np.log(n)
    lower = lower_constant * growth
    upper = upper_constant * growth 

    pass_test1 = bool(lower < proposed_next_q < upper) 
    pass_test2 = bool(proposed_next_q < smooth_constant * q) 

    if pass_test1 and pass_test2:
        admissible = True
    else:
        admissible = False
    return(admissible)


def admissibility(first_differences, parameters, model): 

    # check if sequence is in corridor; test the whole sequence
    # n = index of first failure
    # fvals = (n lower bound, observed value, upper bound) if failing bound test
    # fvals = (n, observed value, max allowed given past value) if failing smooth test

    valid = True
    bounded = True
    smooth  = True
    fvals = ()

    if 0 in first_differences:

        valid = False
        sequence = ()

    else:

        constants = parameters[model]
        lower_constant = constants[0]  
        upper_constant = constants[1]
        smooth_constant = constants[2]
        offset = constants[3]

        sum = 2
        sequence = (sum,)
        n = 1

        for value in first_differences:

            n += 1
            old_sum = sum
            sum += value
            if model == 'power':
                exponent = constants[4]
                growth = n**exponent 
            elif model == 'prime':
                growth = n*np.log(n)
            lower = lower_constant * growth
            upper = upper_constant * growth 

            pass_test1 = bool(lower < sum < upper)
            if not pass_test1 and n > offset: 
                bounded = False
                if len(fvals) == 0:
                    fvals = (n, lower, sum, upper)
            pass_test2 = bool(sum < smooth_constant * old_sum)
            if not pass_test2 and n > offset:
                smooth = False
                if len(fvals) == 0:
                    fvals = (n, lower, smooth_constant * old_sum)
            sequence = (*sequence, sum)

    return(valid, bounded, smooth, fvals, sequence)


def test_primes(parameters):

    # check if prime number sequence (first 1000 primes) is in corridor
    from primePy import primes
    prime_list = primes.first(1000)
    # prime_list = (2, 3, 5, 7, 11, 13, 17, 19, 23, 29, 31) 
 
    model = 'prime'
    first_differences = ()
    for k in range(len(prime_list) - 1):
        delta = prime_list[k+1] - prime_list[k] 
        first_differences = (*first_differences, delta)
    valid, bounded, smooth, fvals, sequence = admissibility(first_differences, parameters, model)
    print("Testing prime sequence: bounded = %s | smooth = %s " %(bounded, smooth))
    return()


def integrate(x, left_start = 1):

    # find all sequences y one level above x such that diff(y) = x

    y = (left_start,)
    list2 = (y,)

    for k in range(1, 1+len(x)):

        list3 = ()  # used to be {}

        for y in list2:  

            # we must have if len(y) == k:
    
            y_k_plus = y[k-1] + x[k-1]
            new_y_plus = (*y, y_k_plus)
            list3 = (*list3, new_y_plus)

            if x[k-1] !=0: 
                y_k_minus = y[k-1] - x[k-1]
                if y_k_minus >= 0:
                    new_y_minus = (*y, y_k_minus)
                    list3 = (*list3, new_y_minus)

        list2 = copy.deepcopy(list3)

    return(list2)


def get_all_sequences(seed, nlevels, left_start = 1):

    hash_arr = { (left_start, seed):1 }
    full_hash = {}

    for level in range(nlevels):

        hash3 = {}
        count = len(hash_arr)
        sub_count = 0

        for arr1 in hash_arr:

            if sub_count % 100 == 0 and sub_count > 0:
                print("GBL Progress",seed, level, sub_count,"/", count) 
            sub_count += 1
            local_list = integrate(arr1, left_start)

            # Add the collected local_lists to list2 
            for arr2 in local_list:
                hash3[arr2] = level + 1 

        hash_arr = hash3.copy()
        full_hash.update(hash_arr)  

    return(full_hash)
\end{lstlisting}

%-----------------------------

%-------------

%\printindex
%\pagebreak
%\tableofcontents

\hypersetup{linkcolor=red} % red %
\hypersetup{linkcolor=red}

\end{document}